\documentclass{article}
\usepackage{graphicx}
\usepackage{color}
\usepackage{xcolor}
\usepackage{natbib}
\usepackage{comment}
\usepackage{bbm}
\usepackage{booktabs}
\usepackage{makecell}
\usepackage{subcaption}  
\usepackage{threeparttable} 
\usepackage{float}
\usepackage{multirow}

\usepackage[margin=1in]{geometry}
\setcitestyle{authoryear, open={((},close={)}}
\usepackage{amssymb,amsmath,amsthm}

\newcommand{\bsX}{\boldsymbol X}
\newcommand{\bsZ}{\boldsymbol Z}
\newcommand{\bsU}{\boldsymbol U}
\newcommand{\bsY}{\boldsymbol Y}
\newcommand{\bsW}{\boldsymbol W}
\newcommand{\I}{\mathbbm 1}

\newcommand{\bsmu}{\boldsymbol \mu}
\newcommand{\bsSigma}{\boldsymbol \Sigma}
\newcommand{\bsSigmaN}{\boldsymbol {\Sigma_{\textbf{NA}}}}
\newcommand{\bsOmega}{\boldsymbol \Omega}

\newcommand{\bstheta}{\boldsymbol \theta}
\newcommand{\bstau}{\boldsymbol \tau}
\newcommand{\bspsi}{\boldsymbol \psi}
\newcommand{\hbstau}{\boldsymbol{\hat \tau}}

\newcommand{\bsSig}{\boldsymbol  \Sigma}

\newcommand{\bsnu}{\boldsymbol \nu_{\textbf{NA}}}

\newcommand{\bsn}{\boldsymbol n}
\newcommand{\bsns}{\boldsymbol{n^\star}}
\newcommand{\bsnd}{\boldsymbol{n^\dag}}
\newcommand{\bsd}{\boldsymbol{d}}
\newcommand{\bsdd}{\boldsymbol{d^\dag}}

\newcommand{\bsdeltad}{\boldsymbol{\hat \delta_{\textbf{D}}}}
\newcommand{\btaujs}{\boldsymbol \delta_{\textbf{D}}}

\newtheorem{lemma}{Lemma}

\newtheorem{claim}{Claim}

\newenvironment{proof*}
  {\proof}
  {\endproof}

\def\E{\mathbb{E}}
\def\e{\mathbb{E}}
\def\var{\text{var}}
\def\cov{\text{cov}}

\newcommand{\Tr}{\text{tr}}

\newcommand{\hbsSig}{ \boldsymbol{\hat \Sigma}}
\newcommand{\ident}{\boldsymbol I}
\newcommand{\bsdelt}{\boldsymbol{\hat \delta}}
\newcommand{\bsdeltb}{\boldsymbol {\hat \delta_B}}
\newcommand{\bsdelts}{\boldsymbol {\hat \delta_S}}

\newcommand{\tran}{\mathsf{T}}

\newcommand{\lambdamax}{\lambda_{\text{max}}(\bsSig)}
\newcommand{\lambdamaxn}{\lambda_{\text{max}}(\bsSigmaN)}

\newcommand{\sna}{S_{\text{NA}}}

\definecolor{todo}{RGB}{210, 43, 43}

\DeclareMathOperator*{\argmin}{arg\,min}

\title{Adaptive Experimental Design Using Shrinkage Estimators}
\author{
Evan T.\ R.\ Rosenman\\
\small Department of Mathematical Sciences\\
\small Claremont McKenna College \\
\small Claremont, CA, United States
\and
Kristen B. Hunter\\
\small School of Mathematics and Statistics\\
\small University of New South Wales \\
\small Sydney, New South Wales, Australia 
}

\date{\today}                   

\begin{document}

\maketitle

\abstract{In multi-armed trials, adaptive designs are a popular way to increase estimation efficiency or identify optimal treatments. Several recent papers have proposed adaptive variants of the classical Neyman allocation to assign treatments in sequential trials, with the goal of minimizing the error of a Horvitz-Thompson-style estimator. However, this approach may be inefficient, because it fails to borrow information across the treatment arms. 

In this paper, we consider adaptivity in a sequential trial with $K$ active treatments and a control, and suggest the use of Stein-like shrinkage estimators to obtain the final causal estimates. 
These estimators share information across arms, yielding provable reductions in expected squared error loss relative to estimating each causal effect in isolation. Moreover, for each of our candidate shrinkers, the risk is the expectation of ratios of Gaussian quadratic forms, and can be computed efficiently via numerical integration. Hence, we suggest a simple algorithm for sequential adaptivity: assign treatments to each new arrival by choosing the arm that will minimize the estimated shrinker loss.

Through simulations, we demonstrate that this approach can yield meaningful reductions in estimation error, especially in the low signal-to-noise regime. We also characterize how our adaptive algorithm assigns treatments differently than would a sequential Neyman allocation, and suggest a method for constructing shorter confidence intervals at the trial's conclusion.
}

\tableofcontents

\section{Introduction}\label{sec:intro}

Adaptive designs offer a powerful tool to improve experimentation. In these designs, the treatment assignment for each new arrival is updated based on outcomes previously observed in the trial. These updates can be targeted towards several goals, including efficient estimation of causal effects \citep{dai2023clip}, identification of optimal treatments \citep{villar2015multi}, or maximizing the utility of participants in the trial \citep{rosenberger2001optimal}. 


We consider the first goal -- improving estimation efficiency -- in the context of a multi-armed randomized experiment in which there are $K$ active treatment levels, as well as a control level. We suppose the trial is ``online'' -- i.e., individuals arrive sequentially and, upon arrival, must be assigned a treatment $k \in \{0, \dots, K\}$, where $k = 0$ denotes the control and $k = 1, \dots, K$ denote the active treatments. We assume that trial participants are randomly sampled from an infinite super-population.
For each individual in this super-population, we define potential outcomes $Y(k)$ corresponding to every possible treatment level \citep{rubin1974estimating}. The causal quantities of interest are the average treatment effects
\[
\tau_k = \E\left( Y(k) - Y(0) \right), \hspace{5mm} k = 1, \dots, K,
\]
where the expectation is taken with respect to the super-population distribution of potential outcomes.
We collect these parameters into a vector $\bstau = (\tau_1, \dots, \tau_K) \in \mathbb{R}^K$.

A central challenge in adaptive experiments is the design of treatment assignment rules to exploit the accumulating data. The Neyman allocation \citep{Neyman1934}, a classical sampling result, suggests allocating units to treatments in proportion to their potential outcome standard deviations. Recent work has extended this idea to sequential and adaptive settings \citep{dai2023clip, zhao2023adaptive, chen2025sigmoid}, proposing methods to learn these standard deviations and update 
treatment probabilities as outcome data accrue. These approaches are typically paired with Horvitz-Thompson-style estimators to retain unbiasedness under adaptive assignment.


Such methods may not be optimal when the goal is to estimate \emph{all} of the treatment effects simultaneously. In many applications, the treatment effects are related, making it possible to borrow strength across estimates. Shrinkage estimators provide a principled way to exploit this structure. They can achieve lower expected squared error loss than estimating each effect separately, even in heteroscedastic settings.

In this manuscript, we study adaptive experimental design when causal estimation is conducted using a Stein-like shrinkage estimator for the vector of treatment effects. Under standard approximate normal distributions for arm-level estimators, the expected squared error loss of 
candidate shrinkers can be written as an expectation of ratios of Gaussian quadratic forms. This representation allows the risk to be computed efficiently via numerical integration, making it feasible to design adaptive assignment rules that target minimization of the shrinkage risk. 

The remainder of this paper proceeds as follows. Section \ref{sec:litreview} reviews the relevant literature on Neyman allocation and adaptive designs. Section \ref{sec:theory} introduces three candidate shrinkage estimators and explains how their risk can be estimated using an adaptation of Stein's Unbiased Risk Estimate \citep{stein1981estimation}. Section \ref{sec:oracle} considers oracle designs in the case where the potential outcome means and variances are known for each treatment arm. We begin by solving for the Neyman allocation, and then consider the risk-minimizing allocations for each of the candidate shrinkers. Section \ref{sec:adaptivity} introduces our greedy adaptivity scheme, demonstrating how shrinker risk can be efficiently computed via numerical integration. Subsequent simulations evince the benefits of adaptivity in practice. Section \ref{sec:cis} discusses the construction of confidence intervals at the end of the adaptive trial. Section \ref{sec:conclusion} concludes.

\section{Literature Review}\label{sec:litreview}

Adaptive experimental designs can vary widely in their objectives. One fundamental distinction is between designs that prioritize identifying and exploiting the best treatment and those that prioritize precise estimation of causal effects. Another is the tradeoff between exploitation and exploration. 


In a multi-armed bandit setting, the primary goal is to maximize the expected reward by quickly identifying the best-performing arm, so that the largest possible number of units are assigned to it.  A canonical approach is Thompson sampling, which randomly assigns units to arms proportional to the posterior probability that each arm is optimal \citep{Thompson1933, AgrawalGoyal2012, Russo2018}. More recent work has considered a ``top-two" variant of Thompson sampling, where more aggressive exploration is induced by randomizing within the two arms that appear most promising based on the posterior \citep{russo2016simple, kasy2021adaptive, rosenzweig2022conversations}. \citet{dimmery2019shrinkage} shows that Empirical Bayes shrinkage can be applied to produce results similar to top-two Thompson sampling, but with better performance in selecting a set of high-performing arms rather than solely the top two.

Unlike traditional randomized trials, bandit designs typically do not include a control arm, and the emphasis is not on precise estimation of effect sizes. Instead, the aim is to minimize cumulative regret by rapidly exploiting the reward-maximizing treatment option. As a simple example, consider an A/B test in which a company seeks to determine which color of a website search bar -- red or blue -- maximizes revenue. The company does not care about pairwise comparisons between colors, nor about estimating revenue levels with high precision. Even if the 
premium of red over blue is small, red would be chosen as it yields the highest expected profit. Accordingly, the firm wishes to minimize exposure to inferior designs during the experiment, rather than to learn fine-grained effect sizes.

By contrast, efficiency-oriented adaptive designs aim to estimate causal effects as precisely as possible. The treatment allocation is chosen to minimize the total variance of treatment effect estimators, often under a fixed sample size constraint. In this setting, trials typically include a control arm, and treatment effects are defined as contrasts relative to that control. The success of the design is evaluated by the aggregate precision of the effect estimates. 

For example, consider a clinical trial that compares three new drugs against a standard treatment. The goal is not simply to identify which drug performs best, but to estimate each drug’s causal effect relative to the control with high precision.  A drug that is superior, but yields only a marginal clinical benefit, may not be worth pursuing.  Similarly, cost, side effects, or scalability considerations may dominate small differences in efficacy. In such settings, precise effect estimation -- not just best-arm identification -- is essential.

Though originally derived as an optimal strategy for stratified sampling, the Neyman allocation is a canonical efficiency-oriented design, minimizing the statistical risk of a simple difference-in-means estimator \citep{Neyman1934}. Optimal assignment probabilities are proportional to arm-specific potential outcome standard deviations \citep{Neyman1934, Blackwell1979, Atkinson1982}. Because these quantities are unknown at the outset of an adaptive experiment, various papers have proposed schemes to approximate these quantities and update treatment probabilities accordingly. \cite{blackwell2022batch} proposes approximating these quantities with a pilot batch, while \cite{cai2024performance} notes that this approach can yield higher variance than a non-adaptive approach when the pilot is small. Other methods consider multistage \citep{zhao2023adaptive} and online \citep{dai2023clip, chen2025sigmoid} variants. 

When covariates are observed prior to treatment assignment, adaptive designs can be further enhanced to improve efficiency. A standard approach is to estimate arm-specific variances conditional on covariate values or strata. \cite{hahn2011adaptive} proposes implementing a Neyman allocation conditional on covariates by estimating outcome variances within treatment and control groups. \cite{tabord2018stratification} studies a two-stage randomized controlled trial in which strata are learned rather than prespecified. He proposes using pilot data to estimate a stratification scheme via tree-based methods, and selecting stratum-specific allocations using a Neyman-like approach. \cite{LiOwen2024} characterize optimal allocation rules that minimizes the asymptotic variance of a treatment effect estimator among covariate-adaptive designs, providing theoretical guarantees for efficiency gains relative to classical randomization. In the presence of covariates, alternative objectives may be considered. For example, \cite{Arbour2022} introduces a method based on weighted online discrepancy minimization in multiarm trials. The goal is to achieve covariate balance across arms, rather than explicitly to improve estimation precision. 

Another key design choice is whether the experiment is run for a fixed sample size or governed by a stopping rule. Adaptive trials may terminate early if one arm is shown to be sufficiently superior, or if continuing the experiment is deemed unlikely to change the decision. While early stopping can substantially reduce cost, it complicates statistical inference, as conventional estimators and confidence intervals may no longer be valid \citep{Johari2017}. A growing literature studies both optimal stopping rules and valid post-selection inference in adaptive experiments \citep{Ham2023}.

In this manuscript, the final causal estimates will be obtained by applying a Stein-like shrinkage estimator to the vector of $K$ treatment effect estimates. Our goal in this manuscript is to develop an adaptive allocation rule tailored to this class of estimator. In this way, our allocation method differs from other efficiency-oriented adaptive designs, which seek to minimize the variance of unbiased difference-in-means or Horvitz-Thompson estimators. This shift changes the design problem: assignment decisions should be evaluated by their impact on the aggregate risk of the shrinkage estimator, not by the marginal variance of each arm in isolation. 



\section{SURE and Candidate Estimators}\label{sec:theory}
\subsection{Notation and Assumptions}

Suppose the trial has a fixed, total sample size of $N$ individuals, indexed by $i = 1, \dots, N$. 
Denote by $W_i \in \{0, 1, \dots, K\}$ the treatment assigned to individual $i$, and define
\[ n_k = \sum_{i = 1}^N \I(W_i = k), \hspace{5mm} k = 0, \dots, K \] 
as the number of units assigned to treatment $k$. 

As discussed in Section~\ref{sec:intro}, we operate in the potential outcomes framework \citep{rubin1974estimating}, and assume units are sampled i.i.d. from a super-population. We also make the standard Stable Unit Treatment Value Assumption \citep[SUTVA;][]{rubin1980randomization}, meaning there is no interference across units and no hidden versions of the treatments. Under SUTVA, we observe outcomes 
\[
Y_i = Y_i(W_i), \hspace{5mm} i = 1, \dots, N,
\]
for each individual $i$.

We also need some notation for means and variances of potential outcomes. We define
\[ \mu_k = \E \left( Y(k) \right) \hspace{5mm} \text{ and } \hspace{5mm} V_k = \var \left( Y(k) \right), \] 
for $k = 0, \dots, K$, where the expectation and variance here are again over the population from which the experimental units are sampled. Under our definitions, we can define the $k^{th}$ treatment effect as the contrast
\[ \tau_k = \mu_k - \mu_0, \hspace{5mm} k = 1, \dots, K.\]

\subsection{Difference-in-Means Estimator}

The simplest choice of estimator is $\hbstau = \left( \hat \tau_1, \dots, \hat \tau_K \right)$ where 
\[ \hat \tau_k = \frac{1}{n_k} \sum_{i = 1}^n Y_i \cdot \I(W_i = k) - \frac{1}{n_0} \sum_{i = 1}^n Y_i \cdot \I(W_i = 0) \,.\] 
The difference-in-means estimator is unbiased, with known variance, i.e. 
\[ \E (\hbstau)  = \bstau \hspace{5mm} \text{ and } \hspace{5mm}\sigma_k^2 \equiv \var(\hat \tau_k) = \frac{V_k}{n_k} + \frac{V_0}{n_0}\, k = 1, \dots, K. \] 
The entries of $\hbstau$ are not independent, since each depends on the estimate from the control arm. It is straightforward to show that 
\[ \cov(\hat \tau_j, \hat \tau_k) = \frac{V_0}{n_0}, \hspace{10mm} \text{ for } j \neq k.\]

After the data is collected, we can invoke a Central Limit Theorem on the entries of $\hbstau$ as long as the potential outcomes $Y(k)$ are bounded and we have obtained a reasonably large number of samples in each arm. Recent work \citep[see e.g.][]{hadad2021confidence} has highlighted that adaptivity can undermine asymptotic normality if treatment selection is too responsive to estimated responses under each treatment. We explicitly assume this is not the case, such that, approximately, 
\begin{equation}\label{eq:samplingModel}
\hbstau \sim \mathcal{N} \left( \bstau,  \boldsymbol \Sigma \right) \hspace{5mm} \text{ where } \hspace{5mm} \boldsymbol \Sigma = \left( \begin{array}{cccc} 
\frac{V_1}{n_1} + \frac{V_0}{n_0} & \frac{V_0}{n_0} & \hdots & \frac{V_0}{n_0} \\
\frac{V_0}{n_0} & \frac{V_2}{n_2} + \frac{V_0}{n_0} & \hdots & \frac{V_0}{n_0}
 \\ \vdots & \vdots & \ddots & \vdots \\
 \frac{V_0}{n_0} & \frac{V_0}{n_0} & \hdots & \frac{V_K}{n_K} + \frac{V_0}{n_0} \end{array} \right) . 
\end{equation}

\noindent For further discussion of this assumption, see Section \ref{sec:cis}.

\subsection{Stein's Unbiased Risk Estimate}

In parallel estimation problems like this one, researchers can typically achieve reductions in statistical risk by using 
shrinkage estimation. 
The key idea is to introduce a small amount of bias by ``shrinking" the estimates toward a fixed point. 
The compensatory reduction in variance due to the regularization is such that one achieves a lower statistical risk than with an unbiased estimator. The most classic estimator of this type is the James-Stein estimator \citep{stein1956inadmissibility}, which shrinks an estimate of a multivariate normal mean vector towards zero. Though we assume approximate normality of $\hbstau$, the classical James-Stein estimator is not applicable to our setting, because its construction assumes the data are homoscedastic and that the entries are independent. 

There are a number of different adaptations of the James-Stein estimator to the settings of heteroscedasticity and dependence  \citep[see e.g.][]{xie2012sure}. Before considering the form of our adapted estimator, we introduce a useful tool for construction and evaluation of shrinkage estimators: Stein's Unbiased Risk Estimate \citep{stein1981estimation}. 

\begin{lemma}[SURE]\label{lemma:sure}
Suppose $\bsX \sim \mathcal{N}\left(\bsmu, \boldsymbol \Sigma\right) \in \mathbb{R}^K$ for non-degenerate $\bsSigma \in \mathbb{R}^{K \times K}$. We consider an estimator of $\bsmu \in \mathbb{R}^K$ of the form 
\[ \bsdelt(\bsX) = \bsX - g(\bsX), \]
for $g(\cdot)$ differentiable and $L_2$ integrable. 

Define $\mathcal{J}_{g}( \bsX )$ as the Jacobian matrix of $g(\cdot)$ evaluated at $\bsX$. Then, for loss function $\mathcal L(\bsmu, \boldsymbol v) =   || \bsmu - \boldsymbol v ||_2^2 $, we have 
\begin{equation}\label{eq:risk}
\e \left( \mathcal{L}\left( \bsdelt(\bsX), \bsmu \right) \right) = \Tr(\bsSigma ) + \e \bigg(|| g(\bsX)||_2^2 - 2 \cdot \Tr \big( \bsSig \mathcal{J}_{g}( \bsX )\big)  \bigg).
\end{equation}
Denote the corresponding expected squared-error risk by
\[ \mathcal{R}(\bsdelt(\bsX),\bsmu) = \e \left(\mathcal{L}(\bsdelt(\bsX),\bsmu)\right). \]
Hence, the estimator 
\begin{equation} \label{eq:SURE}
\text{SURE}(\bsdelt(\bsX)) = \Tr(\bsSigma ) + || g(\bsX)||_2^2 - 2 \cdot \Tr \big( \bsSig \mathcal{J}_{g}( \bsX )\big)
\end{equation} 
is unbiased for the risk $\mathcal{R} \left(\bsdelt(\bsX), \bsmu \right)$. 
\end{lemma}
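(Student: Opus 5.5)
The plan is to expand the squared-error loss directly and reduce the only nontrivial term to a Gaussian integration-by-parts identity (Stein's lemma in its multivariate form). Write $\bsdelt(\bsX) - \bsmu = (\bsX - \bsmu) - g(\bsX)$, so that
\[
\| \bsdelt(\bsX) - \bsmu \|_2^2 = \|\bsX - \bsmu\|_2^2 + \|g(\bsX)\|_2^2 - 2 (\bsX - \bsmu)^\tran g(\bsX).
\]
Taking expectations, the first term equals $\Tr(\bsSigma)$ because $\e\big((\bsX-\bsmu)(\bsX-\bsmu)^\tran\big) = \bsSigma$. The second term is finite by the $L_2$ integrability of $g$, and it is simply carried along. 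All three expectations exist: the cross term is handled by Cauchy--Schwarz, since $\bsX - \bsmu$ has finite second moments and $g(\bsX)$ is square integrable. The whole lemma therefore reduces to the identity
\[
\e\big( (\bsX - \bsmu)^\tran g(\bsX) \big) = \e\big( \Tr(\bsSigma \mathcal{J}_g(\bsX)) \big).
\]

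To prove this identity, I would first treat the standard case. Let $\bsZ \sim \mathcal{N}(\boldsymbol 0, \ident)$ and let $h$ be a differentiable function with suitable integrability. Fubini reduces the claim to one coordinate at a time, and then univariate integration by parts with $\phi'(z) = -z\phi(z)$ gives $\e(Z_j h_j(\bsZ)) = \e(\partial_j h_j(\bsZ))$. Next I would handle the general covariance. Since $\bsSigma$ is non-degenerate, write $\bsSigma = \boldsymbol A \boldsymbol A^\tran$ with $\boldsymbol A$ invertible (for example a Cholesky factor), and set $\bsX = \bsmu + \boldsymbol A \bsZ$. Define $h(\bsZ) = \boldsymbol A^\tran g(\bsmu + \boldsymbol A \bsZ)$. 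Then
\[
(\bsX - \bsmu)^\tran g(\bsX) = \bsZ^\tran h(\bsZ),
\]
and by the chain rule $\mathcal{J}_h(\bsZ) = \boldsymbol A^\tran \mathcal{J}_g(\bsX) \boldsymbol A$. Applying the standard case gives
\[
\e\big(\Tr(\boldsymbol A^\tran \mathcal{J}_g \boldsymbol A)\big) = \e\big(\Tr(\boldsymbol A \boldsymbol A^\tran \mathcal{J}_g)\big) = \e\big(\Tr(\bsSigma \mathcal{J}_g(\bsX))\big)
\]
by cyclicity of the trace. Substituting back yields \eqref{eq:risk}. Unbiasedness of \eqref{eq:SURE} then follows immediately, because its expectation is the right-hand side of \eqref{eq:risk}.

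The main obstacle is justifying the one-dimensional integration by parts. Specifically, the boundary terms $h_j(\bsZ)\phi(z_j)$ must vanish as $z_j \to \pm\infty$, and the integral $\e|\partial_j h_j(\bsZ)|$ must be finite. Differentiability plus $L_2$ integrability does not strictly guarantee either condition. I would follow Stein (1981) and interpret ``differentiable'' as weakly differentiable (absolutely continuous along coordinate lines), with $\e|\partial_j g_i(\bsX)| < \infty$ as an implicit regularity assumption. Under that assumption, Stein's argument avoids boundary terms altogether: write $z\phi(z) = \int_z^\infty t\phi(t)\,dt$ for $z>0$ (and the analogous expression for $z<0$), then apply Fubini to swap the order of integration. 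I would state this mild regularity condition explicitly and note that it holds for the shrinkers considered later in the paper.
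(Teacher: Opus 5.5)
Your argument is correct. It uses the same ingredients as the paper's proof (a whitening change of variables, Stein's identity, and cyclicity of the trace), but arranges them differently.

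The paper whitens the whole problem first. With $\bsZ = \bsSigma^{-1/2}\bsX$, it rewrites the risk as a $\bsSigma$-weighted loss for estimating the mean of $\bsZ$ by $\bsZ - f(\bsZ)$, where $f(\bsZ) = \bsSigma^{-1/2} g(\bsSigma^{1/2}\bsZ)$. It then invokes the identity-covariance SURE result for that weighted loss as a black box. Finally, it translates both $\|\bsSigma^{1/2} f(\bsZ)\|_2^2$ and $\Tr(\bsSigma \mathcal{J}_f(\bsZ))$ back into $\|g(\bsX)\|_2^2$ and $\Tr(\bsSigma \mathcal{J}_g(\bsX))$.

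You instead expand the loss in the original coordinates. Then $\Tr(\bsSigma)$ and $\e\|g(\bsX)\|_2^2$ appear with no work, and only the cross term $\e\big((\bsX-\bsmu)^\tran g(\bsX)\big)$ needs whitening, for which any factorization $\bsSigma = \boldsymbol A\boldsymbol A^\tran$ will do.

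Your route buys self-containedness and explicit hypotheses. You prove Stein's identity rather than cite it. You are also right that ``differentiable and $L_2$ integrable'' does not by itself justify the integration by parts. The paper's ``applying SURE'' step silently relies on the same extra conditions: weak differentiability and $\e|\partial_j g_i(\bsX)| < \infty$. These matter here because the paper's shrinkers are not differentiable at the origin; they are weakly differentiable with integrable partial derivatives when $K \geq 3$.

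One slip to fix. With $\phi$ the standard normal density, the identity driving Stein's Fubini argument is $\phi(z) = \int_z^\infty t\,\phi(t)\,dt$ (equivalently $\phi(z) = -\int_{-\infty}^z t\,\phi(t)\,dt$), not $z\phi(z) = \int_z^\infty t\,\phi(t)\,dt$. Substituting the correct identity into $\e\big(h'(Z)\big) = \int h'(z)\phi(z)\,dz$ and swapping the order of integration yields $\e\big(Z h(Z)\big)$ with no boundary terms, as you intend.
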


\begin{proof*}
See the Appendix, Section \ref{sec:proofSureLemma}. 
\end{proof*}

\subsection{Candidate Estimators}

We consider several candidate shrinkage estimators to apply to the difference-in-means estimator $\hbstau$, all of which will work with our adaptive treatment allocation method. 

\textbf{Bock's estimator}. The shrinker of \cite{bock1975minimax} is explicitly designed to work with Gaussian vectors with dependent entries. 
Denoting as $\lambdamax$ the largest eigenvalue of $\bsSig$, the estimator is given by 
\[ \bsdeltb = \left(1 - \frac{\tilde p - 2}{\hbstau^\tran \bsSig^{-1} \hbstau} \right) \hbstau \hspace{5mm} \text{ where } \hspace{5mm} \tilde p = \frac{\Tr(\bsSig)}{\lambdamax}. \]
Note that $0 \leq \tilde p \leq K$. Bock showed that the estimator dominates $\hbstau$ if $\Tr(\bsSig) > 2\lambdamax$. The estimator is attractive for its simplicity, and admits several useful approximations. 

\textbf{SURE-Min estimator}. An alternative estimator can be obtained by 
choosing a shrinkage factor to directly minimize Stein's Unbiased Risk Estimate (Equation \ref{eq:SURE}), a tradition with a long history in the statistics literature \citep[see e.g.][]{li1985stein, xie2012sure}. We consider the class of  estimators that shrink $\hbstau$ toward zero by a constant factor $\rho \in \mathbb{R}$, i.e. estimators of the form
\[ \bsdelt(\hbstau) = (1 - \rho)\hbstau.\]
An unbiased estimate of the risk of such an estimator is 
\[ \text{SURE}(\bsdelt(\hbstau)) = \Tr(\bsSigma ) + \rho^2 || \hbstau ||_2^2 - 2\rho \cdot \Tr(\bsSig), \]
which has minimizer $\rho = \Tr(\Sigma) \big/ ||\hbstau||_2^2$. Thus, we define our second candidate estimator as 
\[ \bsdelts = \left( 1 - \frac{\Tr(\bsSig)}{||\hbstau||_2^2}\right) \hbstau, \]
where $|| \hbstau ||_2^2 = \sum_k \hat \tau_k^2$. We can also use Lemma \ref{lemma:sure} under which the risk of $\bsdelts$ is strictly less than that of $\hbstau$ regardless of the true value of $\bstau$. 


\begin{lemma}\label{lemma:riskBound_deltas}
$\bsdelts$ has risk strictly less than $\hbstau$ if 
\[ 4 \cdot \lambdamax <  \Tr(\bsSig).\]
\end{lemma}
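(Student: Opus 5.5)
Write $\bsdelts = \hbstau - g(\hbstau)$ with
\[ g(\boldsymbol x) = \frac{c\,\boldsymbol x}{\|\boldsymbol x\|_2^2}, \qquad c = \Tr(\bsSig). \]
The plan is to apply Lemma~\ref{lemma:sure} with $\bsX = \hbstau$ and $\bsmu = \bstau$, and then show that the expectation term in Equation~\ref{eq:risk} is strictly negative. Before doing so, we check the lemma's regularity conditions. The function $g$ is smooth away from the origin. Moreover, $\|g(\boldsymbol x)\|_2^2 = c^2/\|\boldsymbol x\|_2^2$, which is integrable against a nondegenerate Gaussian density whenever $K \ge 3$, since $\int_{\|\boldsymbol x\|\le 1} \|\boldsymbol x\|^{-2}\, d\boldsymbol x < \infty$ in dimension $K\ge 3$. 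The condition $4\lambdamax < \Tr(\bsSig) \le K\lambdamax$ forces $K \ge 5$, so integrability holds. The singularity at the origin is a measure-zero set, and the usual Stein integration-by-parts argument goes through; equivalently, one can apply the lemma to a smoothed $g_\epsilon$ and pass to the limit by dominated convergence.

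The main computation is the Jacobian term. Differentiating gives
\[ \mathcal{J}_g(\boldsymbol x) = c\left( \frac{\ident}{\|\boldsymbol x\|_2^2} - \frac{2\,\boldsymbol x \boldsymbol x^\tran}{\|\boldsymbol x\|_2^4} \right), \]
so that
\[ \Tr\big(\bsSig \mathcal{J}_g(\boldsymbol x)\big) = \frac{c\,\Tr(\bsSig)}{\|\boldsymbol x\|_2^2} - \frac{2c\,\boldsymbol x^\tran \bsSig \boldsymbol x}{\|\boldsymbol x\|_2^4}. \]
Substituting into Equation~\ref{eq:risk}, the integrand becomes
\[ \frac{c^2 - 2c\Tr(\bsSig)}{\|\hbstau\|_2^2} + \frac{4c\,\hbstau^\tran\bsSig\hbstau}{\|\hbstau\|_2^4}. \]

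The key inequality is the Rayleigh bound $\hbstau^\tran\bsSig\hbstau \le \lambdamax \|\hbstau\|_2^2$. Applying it, the integrand is at most
\[ \frac{c\,\big(c - 2\Tr(\bsSig) + 4\lambdamax\big)}{\|\hbstau\|_2^2}. \]
With $c = \Tr(\bsSig)$, this bound equals
\[ \frac{\Tr(\bsSig)\big(4\lambdamax - \Tr(\bsSig)\big)}{\|\hbstau\|_2^2}. \]
Under the hypothesis $4\lambdamax < \Tr(\bsSig)$, this is strictly negative almost surely. Because $\E\|\hbstau\|_2^{-2}$ is finite and positive, the expectation term in Equation~\ref{eq:risk} is strictly negative. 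The risk of $\bsdelts$ is therefore strictly less than $\Tr(\bsSig)$, which is the risk of $\hbstau$.

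I expect the main obstacle to be the regularity step, namely justifying Lemma~\ref{lemma:sure} for a $g$ that is singular at the origin. I would handle it by the smoothing or limiting argument described above, relying on $K\ge 5$ to guarantee that both $\|g\|_2^2$ and the Jacobian terms, which are of order $\|\boldsymbol x\|^{-2}$, are integrable. The algebra is routine once $\mathcal{J}_g$ is computed.
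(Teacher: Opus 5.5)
Your proposal is correct and follows essentially the same route as the paper: apply Lemma~\ref{lemma:sure} with $g(\hbstau) = \Tr(\bsSig)\,\hbstau/\|\hbstau\|_2^2$, bound $\hbstau^\tran\bsSig\hbstau \le \lambdamax\|\hbstau\|_2^2$, and read off the sign of $\Tr(\bsSig)\,\E\big((4\lambdamax - \Tr(\bsSig))/\|\hbstau\|_2^2\big)$. You also check the regularity conditions: the hypothesis forces $K \ge 5$, so the $\|\hbstau\|_2^{-2}$ singularity is integrable and the SURE identity applies. This is a worthwhile detail that the paper's proof omits.
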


\begin{proof*}
See the Appendix, Section \ref{sec:proof_deltas}.  
\end{proof*}
 
\textbf{Dimmery's estimator}. A final alternative is a modified version of the estimator proposed in \cite{dimmery2019shrinkage}. The authors consider 
online experiments in which the objective is to estimate the mean response under each treatment. They propose a heteroscedastic version of the James-Stein estimator that shrinks toward the grand mean. Because we consider contrasts between the mean response under each treatment and the mean response under control, we modify the estimator to shrink toward 0. We refer to this estimator as $\bsdeltad$, whose entries are given by 
\[ \hat \delta_{\text{D}, k} = \left(1 - \frac{(K - 2) \sigma_k^2}{|| \hbstau ||_2^2} \right) \hat \tau_k, \hspace{10mm} k = 1, \dots K. \] 
 
\noindent We can  again use Lemma \ref{lemma:sure} to establish a condition under which $\bsdeltad$ dominates $\hbstau$.


\begin{lemma}\label{lemma:riskBound_deltad}
$\btaujs$ has risk strictly less than $\hbstau$ if
\begin{equation}\label{eq:domCondition}
\frac{1}{2} \left( \max_k \sigma_k^2 \right) \left( (K - 2) (\max_k \sigma_k^2) + 4 \lambdamax  \right) \leq \sum_k \sigma_k^4.
\end{equation}
\end{lemma}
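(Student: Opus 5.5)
The plan is to apply Lemma~\ref{lemma:sure} with $\bsX = \hbstau$, $\bsmu = \bstau$ and
\[
g(\bsX) = \frac{K-2}{\|\bsX\|_2^2}\,\boldsymbol D \bsX, \qquad \boldsymbol D = \mathrm{diag}(\sigma_1^2,\dots,\sigma_K^2),
\]
so that $\bsdeltad = \hbstau - g(\hbstau)$. Since the risk of $\hbstau$ is $\Tr(\bsSig)$, it suffices to show that
\[
\e\Big(\|g(\hbstau)\|_2^2 - 2\,\Tr\big(\bsSig\,\mathcal J_g(\hbstau)\big)\Big) < 0 .
\]
We assume $K \ge 3$, since otherwise the estimator is trivial or expands $\hbstau$. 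The function $g$ is singular only at the origin, and it is weakly differentiable with $\e\|g\|_2^2 < \infty$ when $K\ge 3$. I will note this briefly, as in the classical James--Stein argument, so that the lemma applies.

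\textbf{Computing the two terms.} A direct calculation gives
\[
\|g(\bsX)\|_2^2 = (K-2)^2\,\frac{\bsX^\tran \boldsymbol D^2 \bsX}{\|\bsX\|_2^4},
\qquad
\mathcal J_g(\bsX) = (K-2)\left(\frac{\boldsymbol D}{\|\bsX\|_2^2} - \frac{2\,\boldsymbol D\bsX\bsX^\tran}{\|\bsX\|_2^4}\right).
\]
Because the diagonal entries of $\bsSig$ are the $\sigma_k^2$, we have $\Tr(\bsSig\boldsymbol D)=\sum_k\sigma_k^4$. Also $\Tr(\bsSig\boldsymbol D\bsX\bsX^\tran) = \bsX^\tran\bsSig\boldsymbol D\bsX$. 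Hence the quantity inside the expectation equals
\[
\frac{K-2}{\|\bsX\|_2^4}\Big((K-2)\,\bsX^\tran\boldsymbol D^2\bsX + 4\,\bsX^\tran\bsSig\boldsymbol D\bsX - 2\|\bsX\|_2^2\textstyle\sum_k\sigma_k^4\Big).
\]

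\textbf{Bounding the quadratic forms.} For the first form, $\bsX^\tran\boldsymbol D^2\bsX \le (\max_k\sigma_k^2)^2\|\bsX\|_2^2$. For the second, Cauchy--Schwarz gives
\[
\bsX^\tran\bsSig\boldsymbol D\bsX \le \|\bsSig\bsX\|_2\,\|\boldsymbol D\bsX\|_2 \le \lambdamax\,(\max_k\sigma_k^2)\,\|\bsX\|_2^2 .
\]
Substituting both bounds, the integrand is at most
\[
\frac{K-2}{\|\bsX\|_2^2}\Big((\max_k\sigma_k^2)\big((K-2)\max_k\sigma_k^2 + 4\lambdamax\big) - 2\textstyle\sum_k\sigma_k^4\Big).
\]
By condition~\eqref{eq:domCondition} this bound is $\le 0$ pointwise. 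Since $\e(1/\|\hbstau\|_2^2)<\infty$ for $K\ge3$, the expected difference is $\le 0$.

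\textbf{Strictness (the main obstacle).} Condition~\eqref{eq:domCondition} holds only with $\le$, so I must show that at least one of the two inequalities above is strict on a set of positive probability. Because $\hbstau$ has a density, it is enough to show they cannot both be equalities for almost every $\bsX$.
\begin{itemize}
\item If the $\sigma_k^2$ are not all equal, the first inequality is strict off a proper subspace.
\item If all $\sigma_k^2$ equal some $\sigma^2$, then $\boldsymbol D=\sigma^2\ident$, and equality in the second inequality requires $\bsX^\tran\bsSig\bsX=\lambdamax\|\bsX\|_2^2$. This holds only on the top eigenspace of $\bsSig$. That eigenspace is a proper subspace unless $\bsSig$ is a multiple of the identity, which is impossible because the off-diagonal entries equal $V_0/n_0>0$.
\end{itemize}
In either case the inequality is strict with positive probability. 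Hence the expected difference is strictly negative, and $\bsdeltad$ has strictly smaller risk than $\hbstau$.
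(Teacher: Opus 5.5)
Your proof is correct and follows the same route as the paper: apply Lemma~\ref{lemma:sure} with $g(\bsX) = (K-2)\boldsymbol D\bsX/\|\bsX\|_2^2$. You then bound $\bsX^\tran\boldsymbol D^2\bsX$ by $(\max_k\sigma_k^2)^2\|\bsX\|_2^2$ and $\bsX^\tran\bsSig\boldsymbol D\bsX$ by $\lambdamax(\max_k\sigma_k^2)\|\bsX\|_2^2$ via Cauchy--Schwarz, and the condition follows from the sign of the resulting numerator. Your strictness step is a genuine improvement: the paper only concludes domination ``if the numerator is negative,'' which misses the equality case allowed by the $\leq$ in \eqref{eq:domCondition}, and your measure-zero-subspace argument (using $V_0/n_0>0$ when all $\sigma_k^2$ are equal) closes that gap.
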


\begin{proof*}
See the Appendix, Section \ref{sec:proof_deltad}. 
\end{proof*}

\noindent As with the maximum eigenvalue condition for Bock's estimator, Lemmas \ref{lemma:riskBound_deltas} and  \ref{lemma:riskBound_deltad} are useful for practitioners in that they provide a condition that can be checked at the end of the online trial to obtain a guarantee that the risk will be lower than that of $\hbstau$. 

In practice, we can truncate the shrinkage factor for each of these estimators so that it cannot go below $0$. These ``positive-part" analogues are known to have lower risk than their ``smooth" counterparts \citep{baranchik1964multiple}. However, positive-part estimators no longer admit risk expressions as expectations of ratios of Gaussian quadratic forms, because the truncation induces expectations over truncated Gaussian regions. For this reason, our theoretical analysis focuses instead on the original smooth estimators, whose risks may be seen as an upper bound on the positive-part versions. 

\section{Oracle Designs}\label{sec:oracle}

Our goal is to design a treatment allocation approach that minimizes the risk of each candidate estimator.  We begin by analyzing the non-adaptive setting with $N$ units, and consider ``oracle" designs in which the potential outcome means $\mu_0, \dots, \mu_K$ and variances $V_0, \dots, V_K$ are known. In practice, these parameters must be estimated from the data, but the oracle designs give us intuition about the optimal risk-minimizing allocations.  

We focus on the regime in which the signal-to-noise ratio (SNR) is low. 
This assumption restricts to the regime where our shrinkage estimators can yield meaningful performance gains, as is widely discussed in the Empirical Bayes literature \citep[see e.g.][]{efron2012large}. In less noisy regimes, applying a shrinkage estimator provides less benefit.


\subsection{A Modified Neyman Allocation}

In the standard Neyman allocation, the treatment probability is proportional to each standard deviation $\sqrt{V_k}$. This approach minimizes the expected squared error when estimating the mean for each arm. However, our goal is slightly different: we want to minimize the risk when estimating each treatment effect, which is a contrast between the active treatment arms and the control arm. If we use the difference-in-means estimator $\hbstau = \left( \hat \tau_1, \dots, \hat \tau_K \right)$, the risk is


\[ \mathcal{R}\left( \hbstau, \bstau \right) = \Tr(\bsSig) = \sum_{k = 1}^K \frac{V_0}{n_0} + \frac{V_k}{n_k} = \frac{K V_0}{n_0} + \sum_{k = 1}^K \frac{V_k}{n_k}.  \]
Hence, up to rounding, the risk-minimizing allocation of units to treatment arms is the solution to the following optimization problem (Problem \ref{eq:opt-prob}): 

\begin{equation}\label{eq:opt-prob}
\begin{aligned}
\min_{n_0,\dots,n_K}\quad & \frac{K V_0}{n_0} + \sum_{k=1}^K \frac{V_k}{n_k} \\
\text{subject to}\quad & \sum_{k = 0}^K n_k = N, \\
& 0 < n_0, \dots, n_K. 
\end{aligned}
\end{equation}
This is a convex problem whose solution is a slight modification to the standard Neyman Allocation. 

\begin{lemma}\label{lemma:neymanAlloc}
When using the difference-in-means estimator, the risk-minimizing treatment allocation is 
\[ n_0^\star = \frac{N \sqrt{K V_0}}{\sqrt{K V_0} + \sum_{\ell = 1}^K \sqrt{V_\ell}} \hspace{10mm} \text{ and } \hspace{10mm} n_k^\star = \frac{N \sqrt{V_k}}{\sqrt{K V_0} + \sum_{\ell = 1}^K \sqrt{V_\ell}} \text{ for $k = 1, \dots, K$.}  \]
\end{lemma}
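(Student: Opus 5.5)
We treat Problem \ref{eq:opt-prob} as a continuous convex program. Setting $a_0 = K V_0$ and $a_k = V_k$ for $k = 1, \dots, K$, the objective is $f(\bsn) = \sum_{k=0}^K a_k / n_k$ on the open positive orthant, with one linear equality constraint. Each term $a_k/n_k$ is strictly convex in $n_k > 0$ whenever $a_k > 0$, so $f$ is strictly convex and the feasible set is convex. Any stationary point of the Lagrangian is therefore the unique global minimizer. We assume throughout that $V_k > 0$ for all $k$, which is implicit in the nondegeneracy of $\bsSig$.

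First, form the Lagrangian $\mathcal{L}(\bsn, \lambda) = \sum_k a_k / n_k + \lambda\big(\sum_k n_k - N\big)$. Setting $\partial \mathcal{L} / \partial n_k = -a_k/n_k^2 + \lambda = 0$ gives $n_k = \sqrt{a_k/\lambda}$, which is positive. So the strict positivity constraints are inactive and no KKT multipliers are needed for them. Substituting into $\sum_k n_k = N$ gives $\lambda^{-1/2} = N / \sum_{\ell=0}^K \sqrt{a_\ell}$. Hence
\[
n_k^\star = \frac{N\sqrt{a_k}}{\sum_{\ell=0}^K \sqrt{a_\ell}},
\]
and recalling $\sqrt{a_0} = \sqrt{K V_0}$ and $\sqrt{a_k} = \sqrt{V_k}$ yields exactly the expressions in Lemma \ref{lemma:neymanAlloc}.

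To certify global optimality without appealing to Lagrangian theory, we use a Cauchy--Schwarz argument. For any feasible $\bsn$,
\[
\Big(\sum_k \sqrt{a_k}\Big)^2 = \Big(\sum_k \sqrt{a_k/n_k}\,\sqrt{n_k}\Big)^2 \le \Big(\sum_k a_k/n_k\Big) N.
\]
Thus $f(\bsn) \ge \big(\sum_k \sqrt{a_k}\big)^2/N$, with equality iff $\sqrt{a_k/n_k} \propto \sqrt{n_k}$, i.e. $n_k \propto \sqrt{a_k}$. This recovers the same allocation, proves uniqueness, and gives the minimal risk $\big(\sqrt{KV_0} + \sum_\ell \sqrt{V_\ell}\big)^2 / N$.

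There is no serious obstacle. The only care needed is the interpretive remark that the $n_k$ are relaxed to real numbers ("up to rounding"), and that positivity of all $V_k$ is needed so the optimum lies in the interior rather than on the boundary.
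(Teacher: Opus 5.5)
Your proof is correct, and its main line is essentially the paper's: the paper also forms the Lagrangian, sets $-KV_0/n_0^2+\lambda=0$ and $-V_k/n_k^2+\lambda=0$, and normalizes via $\sum_k n_k=N$. The paper justifies this by convexity and Slater's condition, where you use strict convexity; strict convexity plus the first-order convexity inequality is all that sufficiency needs. Your Cauchy--Schwarz bound $f(\bsn)\geq \bigl(\sqrt{KV_0}+\sum_{\ell}\sqrt{V_\ell}\bigr)^2/N = N\sna^2$ is a useful extra that certifies global optimality, uniqueness, and the optimal risk without any duality theory. One small quibble: nondegeneracy of $\bsSig$ does not by itself force every $V_k>0$ (e.g.\ $V_0=0$ with the other $V_k>0$ gives a nondegenerate diagonal $\bsSig$), so positivity is best stated as the standing assumption you in fact make.
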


\begin{proof*}
See the Appendix, Section \ref{sec:proof_neymanAlloc}.
\end{proof*}

\subsection{Shrinker Allocations in the Low Signal-to-Noise Ratio Regime}

\subsubsection{Approximate Minimization}

The Neyman Allocation provides a useful starting place for choosing a treatment allocation that minimizes the risk using a shrinkage estimator. For any $g(\bsX)$ satisfying the conditions of Lemma \ref{lemma:sure}, the allocation which minimizes the risk of estimator  
\[ \bsX - g(\bsX), \] 
is encoded in the following optimization problem (Problem \ref{eq:opt-prob-enhanced}):

\begin{equation}\label{eq:opt-prob-enhanced}
\begin{aligned}
\min_{n_0,\dots,n_K}\quad & \frac{K V_0}{n_0} + \sum_{k=1}^K \frac{V_k}{n_k} + \e \bigg(|| g(\bsX)||_2^2 - 2 \cdot \Tr \big( \bsSig \mathcal{J}_{g}( \bsX )\big)\bigg)\\
\text{subject to}\quad & \sum_{k = 0}^K n_k = N, \\
& 0 < n_0, \dots, n_K. 
\end{aligned}
\end{equation}
In this form, we can interpret 
\[ \Delta = \e \bigg(|| g(\bsX)||_2^2 - 2 \cdot \Tr \big( \bsSig \mathcal{J}_{g}( \bsX )\big)\bigg)\]
as a regularization term added to the objective in Problem \ref{eq:opt-prob}. 

Assuming the regularization effect is small, the solutions to Problem \ref{eq:opt-prob-enhanced} will be ``local" to the Neyman allocation $\boldsymbol{n^\star} = (n_0^\star, \dots, n_K^{\star})$. 
Hence, we can characterize the optimizers of Problem \ref{eq:opt-prob-enhanced} by analyzing the behavior of the objective near the Neyman allocation. Denote as $\boldsymbol{n^\dag} = (n_0^\dag, \dots, n_K^{\dag})$ as a generic solution to Problem \ref{eq:opt-prob-enhanced}. A simple approximation yields 

\begin{equation}\label{eq:solApprox}
\frac{n_k^\dag - n_k^\star}{n_k^\star} \approx \frac{1}{2 \sna^2} \left[ \frac{1}{N} \sum_{j=0}^K n_j^\star \left. \frac{\partial \Delta}{\partial n_j} \right|_{\boldsymbol n = \boldsymbol n^\star} - \left. \frac{\partial \Delta}{\partial n_k} \right|_{\boldsymbol n = \boldsymbol n^\star} \right] \hspace{5mm} \text{ where } \hspace{5mm}\sna = \frac{1}{N} \left( \sqrt{K V_0} + \sum_{k = 1}^K \sqrt{V_k} \right). 
\end{equation}

For a derivation, see the Appendix, Section~\ref{appendix:solApprox}. This result encodes an intuitive result: relative to the Neyman allocation, minimizers of the shrinker risk will allocate more units to arms that more rapidly decrease the value of the regularization term. We use this result to characterize minimizers of shrinker risks. 

\subsubsection{Bock's Estimator}

We first consider Bock's estimator, $\bsdeltb$. The regularization term for $\bsdeltb$ admits a simple representation in the low signal-to-noise ratio regime, allowing us to obtain the following result. 

\begin{lemma}\label{lemma:bockRiskZero}
Suppose $K>2$ and
\begin{equation}\label{eq:riskConditionBock}
\left( \max_{k\geq 1}\sqrt{V_k}-\min_{k\geq 1}\sqrt{V_k} \right) \leq\frac{1}{2}\sqrt{KV_0}. 
\end{equation}
Then there exists $\epsilon>0$ such that, whenever
\[ \kappa =\bstau^\tran\bsSigmaN^{-1}\bstau<\epsilon, \]
the derivatives of the regularization term for Bock's estimator, evaluated at the modified Neyman allocation, satisfy
\[ \left.\frac{\partial\Delta}{\partial n_0}\right|_{\bsn=\bsn^\star}<\min_{k\geq1}\left.\frac{\partial\Delta}{\partial n_k}\right|_{\bsn=\bsn^\star}. \]
Moreover, among the active treatment arms, if $V_k < V_j$ for some $1 \leq j, k \leq K$, then 
\[ \left.\frac{\partial\Delta}{\partial n_j}\right|_{\bsn=\bsn^\star}<\left.\frac{\partial\Delta}{\partial n_k}\right|_{\bsn=\bsn^\star}. \]
\end{lemma}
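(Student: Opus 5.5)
The plan is to compute the regularization term $\Delta$ for Bock's estimator exactly at $\bstau = 0$, differentiate it in closed form at $\bsn^\star$, and then extend the strict inequalities to small $\kappa$ by continuity.

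\emph{Step 1: $\Delta$ at $\bstau=0$.} Write $Q = \hbstau^\tran \bsSig^{-1}\hbstau$ and $g(\boldsymbol x) = (\tilde p - 2)\boldsymbol x / Q$. Then
\[ \mathcal{J}_g = (\tilde p-2)\left(\ident/Q - 2\boldsymbol x\boldsymbol x^\tran\bsSig^{-1}/Q^2\right), \]
and using $\Tr(\bsSig \boldsymbol x\boldsymbol x^\tran\bsSig^{-1}) = \boldsymbol x^\tran \boldsymbol x$ this gives
\[ \Delta = \e\left[ \left( (\tilde p-2)^2 + 4(\tilde p-2) \right) \frac{\hbstau^\tran\hbstau}{Q^2} - 2(\tilde p-2)\frac{\Tr(\bsSig)}{Q}\right]. \]
When $\bstau=0$, write $\hbstau = \bsSig^{1/2}\bsZ$ and split $\bsZ$ into its radius and an independent uniform direction $\bsU$. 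This gives $\e(1/Q)=1/(K-2)$ (finite since $K>2$) and $\e(\hbstau^\tran\hbstau/Q^2) = \Tr(\bsSig)/(K(K-2))$. Hence
\[ \Delta\big|_{\bstau=0} = \frac{\Tr(\bsSig)}{K(K-2)}\, h(\tilde p), \qquad h(p) = (p-2)(p+2-2K). \]
So $\Delta$ is a smooth function $F(T,\lambda)$ of $T=\Tr(\bsSig)$ and $\lambda = \lambdamax$. Since $h'(p) = 2(p-K) < 0$ whenever $\tilde p < K$, and $\tilde p = T/\lambda$, we get $\partial F/\partial\lambda > 0$. Here $\tilde p<K$ holds because $\bsSig$ is not a multiple of the identity (it has the rank-one term $V_0/n_0\,\boldsymbol 1\boldsymbol 1^\tran$).

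\emph{Step 2: chain rule at $\bsn^\star$.} Write $\bsSig = \mathrm{diag}(d_k) + a\boldsymbol 1\boldsymbol 1^\tran$ with $d_k = V_k/n_k$ and $a=V_0/n_0$. At the allocation of Lemma \ref{lemma:neymanAlloc}, $d_k = \sqrt{V_k}\,\sna$ and $a=\sqrt{V_0/K}\,\sna$. The key observation is the Lagrange property of $\bsn^\star$: $\partial T/\partial n_j = -\sna^2$ for every $j=0,\dots,K$, so the $T$-part of the derivative is identical across arms. Let $\boldsymbol v$ be the unit top eigenvector of $\bsSig$. By Hellmann--Feynman,
\[ \frac{\partial \lambda}{\partial n_k} = -\sna^2 v_k^2 \quad (k\ge1), \qquad \frac{\partial\lambda}{\partial n_0} = -\sna^2(\boldsymbol 1^\tran \boldsymbol v)^2/K. \]
Therefore
\[ \frac{\partial\Delta}{\partial n_j} = C - F_\lambda\,\sna^2\, c_j, \qquad c_k=v_k^2,\quad c_0=(\boldsymbol 1^\tran\boldsymbol v)^2/K, \]
with a common constant $C$ and $F_\lambda>0$. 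Both claims of the lemma thus reduce to orderings of the $c_j$.

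\emph{Step 3: eigenvector ordering.} From $(\lambda - d_k)v_k = a\,\boldsymbol 1^\tran\boldsymbol v$ we get $v_k \propto 1/(\lambda-d_k)$ with all entries of the same sign and $\lambda>\max_k d_k$. Hence $V_k<V_j$ implies $d_k<d_j$, which implies $v_k^2<v_j^2$, and this gives the second claim. For the first claim we need $(\sum_k v_k)^2 > K\max_k v_k^2$, i.e.
\[ \sum_k \frac{\lambda-d_{\max}}{\lambda-d_k} > \sqrt K. \]
I would lower-bound $\lambda$ by the Rayleigh quotient at $\boldsymbol 1/\sqrt K$, namely $\lambda \ge \bar d + aK$ with $aK = \sqrt{KV_0}\,\sna$. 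Each ratio is increasing in $\lambda$, and condition \eqref{eq:riskConditionBock} says $d_{\max}-d_{\min}\le \tfrac12 aK$. Together these bound each ratio from below, with the ratio for the maximal arm equal to $1$.

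\emph{Main obstacle.} Getting this bound sharp enough for small $K$ is the hard part. The crude bound ratio $\ge 1/2$ only suffices when $K>4$. For $K=3,4$ I would instead use the full sum bound, using that $\bar d$ sits between the extremes and $\lambda - d_{\max} \ge aK + \bar d - d_{\max}$. Failing that, I would use the secular equation $1 = a\sum_k 1/(\lambda-d_k)$ to pin $\lambda$ down more tightly.

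\emph{Step 4: continuity in $\kappa$.} The resulting strict inequalities hold at $\bstau=0$. $\Delta$ and its $n$-derivatives are continuous in $\bstau$: the integrand is dominated, since $1/Q$ and $1/Q^2\cdot\|\hbstau\|^2$ are integrable for $K>2$ uniformly near $\bstau=0$. Hence there is an $\epsilon>0$ such that the strict inequalities persist whenever $\kappa=\bstau^\tran\bsSigmaN^{-1}\bstau<\epsilon$.
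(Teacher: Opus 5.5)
Your proposal is essentially the paper's proof. The shared steps are: exact evaluation of $\Delta$ at $\bstau=\boldsymbol 0$ via the radius/uniform-direction split; the chain rule using $\partial\Tr(\bsSig)/\partial n_j=-\sna^2$ for all $j$ together with first-order eigenvalue perturbation; the rank-one-update eigenvector formula $v_k\propto(\lambda-d_k)^{-1}$ (the paper cites Bunch--Nielsen--Sorensen); the Rayleigh-quotient bound $\lambda\ge\bar d+aK$; and continuity in $\bstau$.

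The one thing to fix is your ``main obstacle,'' which does not exist. Your ingredients give $\lambda-d_{\max}\ge aK/2$ and $d_{\max}-d_k\le aK/2$, hence $(\lambda-d_{\max})/(\lambda-d_k)\ge 1/2$ for every $k$. The maximal arm contributes exactly $1$. So the sum is at least $1+(K-1)/2=(K+1)/2$, and $(K+1)/2>\sqrt K$ is equivalent to $(\sqrt K-1)^2>0$, which holds for every $K\ge 2$. Your restriction to $K>4$ comes from using the weaker bound $K/2$, i.e., counting the maximal arm as $1/2$ as well. This corrected count is exactly the paper's Claim: entries of the top eigenvector differ by at most a factor of $2$, and $(K+1)^2/(4K)>1$. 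The separate treatment of $K=3,4$ via the secular equation is therefore unnecessary. As written, though, it is the only unfinished piece of your argument.
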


\begin{proof*}
See the Appendix, Section~\ref{sec:proofBockRisk}.
\end{proof*}

Lemma \ref{lemma:bockRiskZero} tells us that under a mild heteroscedasticity constraint and in the low SNR regime, the regularization term gradient will be more negative for the control arm than any of the active treatment arms. Moreover, among the active treatment arms, the regularization term gradient is smaller for arms with larger potential outcome variances. Combined with Approximation~\eqref{eq:solApprox}, this suggests that the risk-minimizing allocation for Bock's estimator will allocate more units to the control arm than the Neyman Allocation while shifting proportionally fewer units away from higher-variance treatment arms. This is confirmed empirically in Section \ref{sec:nonadaptSim}.

\subsubsection{SURE-Min and Dimmery Estimators}

The SURE-min and Dimmery estimators, $\bsdelts$ and $\bsdeltad$, do not admit exact closed-form risk expressions at $\bstau=\boldsymbol 0$. For $\bsdelts$, we can conduct a heuristic analysis at a low signal-to-noise ratio and use Approximation~\eqref{eq:solApprox} to approximately characterize the risk-minimizing allocations. The more complex form of $\bsdeltad$ does not allow for risk approximations that are analytically tractable; hence, we defer analysis of its risk-minimizing allocation to the empirical analysis in the next section. 

We apply a series of moment-matching approximations to the risk of $\bsdelts$, and invoke our low-SNR assumption, to obtain the approximate regularization term  

\begin{equation}\label{eq:steinminRiskLowSNR}
\Delta \approx \Tr(\bsSig) \left( \frac{4\Tr(\bsSig^2)-\Tr(\bsSig)^2}
{\Tr(\bsSig)^2-2\Tr(\bsSig^2)} \right).
\end{equation}
Using the Chain Rule, we differentiate Expression~\eqref{eq:steinminRiskLowSNR} with respect to $n_j$, yielding
\[ \frac{\partial\Delta}{\partial n_j} \approx \frac{\partial\Delta}{\partial\Tr(\bsSig)} \frac{\partial\Tr(\bsSig)}{\partial n_j} + \frac{\partial\Delta}{\partial\Tr(\bsSig^2)} \frac{\partial\Tr(\bsSig^2)}{\partial n_j}. \]
At the Neyman allocation, $\left. \partial\Tr(\bsSig)/\partial n_k\right|_{\bsn=\bsn^\star}
=-\sna^2$
for every $k=0,\dots,K$. Thus, after substituting the gradient into
Approximation~\eqref{eq:solApprox}, all terms involving
$\partial\Delta/\partial\Tr(\bsSig)$ cancel, and we obtain
\[ \frac{n_k^\dag-n_k^\star}{n_k^\star} \approx \frac{1}{2\sna^2} \left. \frac{\partial\Delta}{\partial\Tr(\bsSig^2)} \right|_{\bsn=\bsn^\star} \left[ \frac{1}{N}\sum_{j=0}^K n_j^\star \left. \frac{\partial\Tr(\bsSig^2)}{\partial n_j} \right|_{\bsn=\bsn^\star} - \left. \frac{\partial\Tr(\bsSig^2)}{\partial n_k} \right|_{\bsn=\bsns} \right]. \]

Note that the factor,  $\left.\partial\Delta/\partial\Tr(\bsSig^2)\right|_{\bsn=\bsns}$, must be positive. Next, we observe
\[ \Tr(\bsSig^2) = K^2 \frac{V_0^2}{n_0^2} +2\frac{V_0}{n_0} \sum_{k = 1}^K \frac{V_k}{n_k} + \sum_{k = 1}^K \frac{V_k^2}{n_k^2}. \]
Differentiating and evaluating at the Neyman allocation, we obtain
\begin{equation}\label{eq:trSigSqDerivs}
\left. \frac{\partial \Tr(\bsSig^2)}{\partial n_0}\right|_{\bsns} = -2 \sna^3 \left( \sqrt{K V_0} + \frac{1}{K}\sum_{k = 1}^K \sqrt{V_k} \right) \hspace{3mm} \text{ and } \hspace{3mm}  \left. \frac{\partial \Tr(\bsSig^2)}{\partial n_k}\right|_{\bsns} = -2 \sna^3 \left(\sqrt{\frac{V_0}{K}} + \sqrt{V_k} \right).
\end{equation}

The expressions in \eqref{eq:trSigSqDerivs} allow us to assess which arms will be assigned more units. Unless $V_0$ is extremely small relative to the other $V_k$ values, the derivative with respect to $n_0$ is more negative than the derivatives with respect to the treatment-arm sample sizes. Hence, the approximate $\mathcal{R}(\bsdelts, \bstau)$-minimizing allocation will again shift units toward the control arm relative to the Neyman allocation. For additional technical details, see the Appendix, Section \ref{sec:riskApprox}.

\subsection{Simulated Static Allocations}\label{sec:nonadaptSim}

To complement our heuristic analysis, we also directly compute the risk-minimizing allocations across several different simulation regimes. In each simulated setting, we also compute the non-adaptive Neyman allocation $\boldsymbol n^\star$, and assess how minimizing the risks of $\bsdeltb, \bsdelts,$ and $\bsdeltad$ perturbs this baseline allocation.


\subsubsection{Computing Risk-Minimizing Allocations}

For notational convenience, we write $\mathcal R(\bsdelt, \bstau ; \bsn,\boldsymbol V)$ to represent risk of estimator $\bsdelt$ evaluated at allocation $\bsn$ under potential outcome variances $\boldsymbol{V}$ and true treatment effects $\bstau$. Notably, $\mathcal R(\bsdelt, \bstau ; \bsn,\boldsymbol V)$
is \emph{not} a convex function of the allocations $n_0, \dots, n_K$ for any of our shrinkage estimators. Hence, we search for the risk-minimizing allocations through a greedy swapping algorithm. Define
\[ \boldsymbol{n^{(j)}} = \{ n_{0}^{(j)}, \dots, n_{K}^{(j)}\} \in \mathbb{Z}^{K +1} \] 
as the allocation of units to treatment arms at iteration $j$ of the algorithm. Next, define the set of potential swaps
\[ S^{(j)} = \{ \boldsymbol n \in \mathbb{Z}^{K + 1} \mid \text{ $\boldsymbol n$ swaps exactly one unit across treatment arms from } \boldsymbol{n^{(j)}} \}\,. \] 
Because there are $K + 1$ treatment arms, the ``swap set" $S^{(j)} $ will contain $K(K+1)$ possible allocations. At each iteration of the algorithm, we simply move in the direction of the swap that most reduces the risk. This approach is encoded below, in Algorithm~\ref{greedyAlgorithm}:

\begin{equation}\label{greedyAlgorithm}
\begin{aligned}
& \texttt{Start with  the Neyman allocation $\boldsymbol{n^{(0)}} = \boldsymbol{n^\star}$.} \hspace{55mm} \\
& \texttt{For iteration $j = 1, 2, \dots$:}\\
& \hspace{5mm} \texttt{For each potential allocation $\bsn \in S^{(j-1)}$:}\\
& \hspace{10mm} \texttt{Compute $\mathcal R(\bsdelt, \bstau ; \bsn,\boldsymbol V)$. } \\
& \hspace{5mm} \texttt{Set $\boldsymbol{n^{(j)}} = \argmin_{\bsn \in S^{(j - 1)}} \mathcal R(\bsdelt, \bstau ; \bsn,\boldsymbol V)$} \\
& \hspace{5mm} \texttt{If $\mathcal R(\bsdelt, \bstau ; \boldsymbol{n^{(j)}},\boldsymbol V) \geq \mathcal R(\bsdelt, \bstau ; \boldsymbol{n^{(j-1)}},\boldsymbol V)$}\\
& \hspace{10mm} \texttt{Return $\boldsymbol{n^{(j-1)}}$.}
\end{aligned}
\end{equation}
For details on how $\mathcal R(\bsdelt, \bstau ; \boldsymbol{n^{(j)}},\boldsymbol V)$ is computed, see Section~\ref{sec:effRisk}. Although Algorithm~\ref{greedyAlgorithm} could plausibly become trapped in a local minimum, we find that for the sample sizes considered in this section, its output is unchanged across a variety of starting points.

\subsubsection{Simulation Setup}\label{sec:simSetup}

We simulate data given a variety of data-generating regimes. Throughout, we fix the total sample size at $N=1,000$ and compute, for each setting, the Neyman allocation $\boldsymbol n^\star$ as well as the allocations that minimize the risks of $\bsdeltb$, $\bsdelts$, and $\bsdeltad$ using the greedy swapping approach (Algorithm~\ref{greedyAlgorithm}).

We alter four key features across our simulations: the number of treatment arms $K$; the magnitude of the control potential outcome variance $V_0$ relative to the active treatment-arm variances $\{V_k\}_{k=1}^K$; the sparsity of $\bstau$; and the signal-to-noise ratio at the Neyman allocation, $\kappa=\bstau^\top\bsSigmaN^{-1}\bstau$. 

To generate the data, we first draw the potential outcome variances $V_k$ for the active treatment arms as independent and identically distributed samples from a log normal distribution with log-location parameter of $\log(350)$ and log-standard-deviation parameter $0.60$. These values are then sorted so that $V_1 = V_{(1)}, \dots, V_k = V_{(K)}$; this sorting will later help us to reason about reallocations across active treatment arms according to their relative variances. Next, we consider two control-variance regimes.  In the low control-variance regime, $V_0$ is set at half of the average potential outcome variance of the treatment arm, i.e. $V_0 = \tfrac{1}{2}\cdot K^{-1}\sum_{k=1}^K V_k$. In the high control-variance regime, $V_0$ is equal to four times the average variance of the treatment arm, i.e. $V_0 = 4\cdot K^{-1}\sum_{k=1}^K V_k$. 

The treatment effects $\bstau$ are next sampled to ensure the desired sparsity and signal-to-noise-ratio. We consider two configurations for the shape of the ``signal'' $\bstau$. In the dense setting, an unscaled $\tilde{\bstau}$ is sampled i.i.d. from a $\text{Unif}([1, 2])$ distribution, resulting in all causal effects having comparable magnitude. In the sparse setting, $\tilde{\bstau} = (1, \dots, 0)$ has only one nonzero entry, which corresponds to the first arm (and thus, due to the sorting, also corresponds to the smallest potential outcome variance among the active treatment arms). 

Once the unscaled effects are sampled, they are scaled to produce a $\bstau$ at the desired signal-to-noise-ratio $\kappa = \bstau^\tran \bsSigmaN^{-1} \bstau$. $\bsSigmaN$ is the covariance matrix at the Neyman allocation, so $\kappa$ is an approximation to the signal-to-noise ratio at the true minimizer, as the covariance matrix itself changes as we reallocate units across treatment arms. 

Tables \ref{tab:simAll_control} and \ref{tab:simAll_treat1} report the results of all the simulations, across the grid of regimes:
\[
K\in\{6,12\},\qquad
V_0\in\{\texttt{low},\texttt{high}\},\qquad 
\bstau\in\{\texttt{dense},\texttt{sparse}\}, \qquad \kappa\in\{0,2,4\}.
\]
Note that when $\kappa = 0$, we must have $\bstau = \boldsymbol 0$, so $\bstau$ is neither dense nor sparse. This setting is reflected in the tables with a ``$-$" symbol in the ``$\bstau$ shape" column. 

\subsubsection{Results}\label{sec:simResults}

In Tables \ref{tab:simAll_control} and \ref{tab:simAll_treat1}, we show how the risk-minimizing allocations differ from the Neyman allocation. Several trends are evident. 


\paragraph{Large, consistent reallocations to the control arm.}
Table~\ref{tab:simAll_control} gives the share of units assigned to the control arm under the Neyman allocation, and shows the increase in the control share for each of the risk-minimizing allocations. The clearest trend is that the shrinker risk-minimizing designs always allocate more units to control than the Neyman allocation. This pattern holds across all dimensions, $V_0$ values, sparsity settings, and signal-to-noise ratios. 

In nearly all cases, the design minimizing the risk of Bock's estimator reassigns the most units to the control arm, while reassignment volumes are smaller for designs minimizing the risk of the SURE-min and Dimmery estimators. The only exceptions occur when $V_0$ is high, $\bstau$ is dense, and $\kappa>0$, where the SURE-min design reallocates the same or slightly more units to control than the Bock design.

The magnitude of these reallocations is also mediated by signal strength. Especially for Bock's estimator, they are typically largest when $\kappa=0$, tend to attenuate at a moderate signal-to-noise ratio ($\kappa = 2$), and become smallest when the signal-to-noise ratio is large $(\kappa = 4)$. Reallocations toward control are also typically larger when there are more active treatment arms ($K = 12$ versus $K = 6$).

The control-variance regime affects the baseline Neyman allocation to control, but has surprisingly little effect on reallocations toward control for Bock's estimator. However, the low-control-variance regime tends to yield smaller control reallocations for the $\bsdelts$- and $\bsdeltad$-risk minimizing designs. 

Lastly, the magnitude of the reallocation is mediated by the sparsity of the treatment effects. When $\kappa>0$, reallocations toward control under $\bsdeltb$ are substantially larger when $\bstau$ is sparse than when it is dense, often by a wide margin. In several settings, sparse signals lead to control reallocations for Bock that are comparable to those observed at $\kappa=0$, whereas dense signals produce more moderate shifts. The same interaction is present, though markedly less dramatic, for the SURE-min and Dimmery estimators. 

Broadly, this pattern mirrors our heuristic investigation, which showed that risk-minimizing allocations for all estimators favor more control units relative to the Neyman allocation $\boldsymbol{n^\star}$. The simulations also reflect a somewhat intuitive pattern. Increasing the allocation to control arms will tend to de-correlate the entries of $\hbstau$ by reducing the magnitude of the off-diagonal entries in $\bsSig$, which are all equal to $V_0/n_0$. This effectively increases the amount of independent information available to the shrinkage estimators.

\paragraph{Modest reallocations across active treatment arms.}
The risk-minimizing designs also reallocate units across the active treatment arms, though these changes are modest in magnitude relative to reallocations toward control. Table \ref{tab:simAll_treat1} reports reallocations toward treatment arm 1 and toward treatment arm $K$, measured relative to the Neyman allocation. Recall that treatment arm 1 has the smallest potential outcome variance among active treatments and the only nonzero signal when $\bstau$ is sparse. Treatment arm $K$ has the largest potential outcome variance among active treatments. 

The minimizing allocations for Bock's estimator always decrease the share of units assigned to the first active treatment arm and almost always decrease the share assigned to the $K^{th}$ treatment arm. The Neyman allocation to the first active treatment arm can be quite small, especially when $K = 12$, so these reductions can leave a very small proportion of units allocated to the first treatment arm. The behavior is largely insensitive to the dimension, control-arm variance, and signal-to-noise ratio. Reallocations away from the treatments arms are larger when the signal is sparse than when it is dense, corresponding to the increased preference for control units in this setting. 

The minimizing allocations for the SURE-min estimator are somewhat more variable. The risk minimizer allocates units away from treatment arm 1 when the signal-to-noise ratio is small, but tends to allocate units toward the arm when $\kappa$ grows, especially when the signal is sparse. Meanwhile, units are almost never allocated toward the $K^{th}$ treatment arm, and are typically allocated away from it. 

Dimmery's estimator exhibits the most variable behavior across simulation conditions. When the signal-to-noise ratio is zero or when the signal is dense, the minimizer either allocates fewer units to the first treatment arm, or just a few more units. However, in the case of a sparse signal (which is concentrated in the first treatment arm), Dimmery's estimator allocates many more units to this arm. Hence, Dimmery's estimator exhibits a strong preference for ``overweighting'' the arm with a strong signal, even though that arm exhibits low potential outcome variances. $\bsdeltad$-risk-minimizing allocations tend to exhibit small changes (both positive and negative) in allocations to the $K^{th}$ treatment arm.

\begin{table}[ht]
\centering
\begin{threeparttable}
\caption{Risk-minimizing allocations for $N = 1{,}000$: control share.}
\label{tab:simAll_control}

\small
\setlength{\tabcolsep}{8pt}

\begin{tabular}{cccccrrr}
\toprule
& & & & \multicolumn{1}{c}{Neyman}
& \multicolumn{3}{c}{$\Delta n_0/N$ (\textit{pp})} \\
\cmidrule(lr){6-8}
$K$ & $V_0$ regime & $\bstau$ shape & $\kappa$
& $n_0^\star/N$ (\%)
& Bock & SURE-min & Dimmery \\
\midrule
6 & high & $-$ & 0 & 38\% & +23 & +13 & +10 \\
6 & high & dense & 2 & 38\% & +11 & +11 & +7 \\
6 & high & sparse & 2 & 38\% & +19 & +10 & +8 \\
6 & high & dense & 4 & 38\% & +6 & +9 & +6 \\
6 & high & sparse & 4 & 38\% & +16 & +8 & +8 \\
\midrule
6 & low & $-$ & 0 & 20\% & +22 & +6 & +4 \\
6 & low & dense & 2 & 20\% & +12 & +6 & +4 \\
6 & low & sparse & 2 & 20\% & +19 & +4 & +4 \\
6 & low & dense & 4 & 20\% & +7 & +6 & +4 \\
6 & low & sparse & 4 & 20\% & +16 & +3 & +4 \\
\midrule
12 & high & $-$ & 0 & 29\% & +38 & +15 & +14 \\
12 & high & dense & 2 & 29\% & +16 & +12 & +10 \\
12 & high & sparse & 2 & 29\% & +34 & +14 & +11 \\
12 & high & dense & 4 & 29\% & +11 & +10 & +9 \\
12 & high & sparse & 4 & 29\% & +30 & +12 & +11 \\
\midrule
12 & low & $-$ & 0 & 15\% & +34 & +7 & +6 \\
12 & low & dense & 2 & 15\% & +22 & +6 & +5 \\
12 & low & sparse & 2 & 15\% & +32 & +6 & +5 \\
12 & low & dense & 4 & 15\% & +14 & +6 & +5 \\
12 & low & sparse & 4 & 15\% & +29 & +5 & +6 \\
\bottomrule
\end{tabular}

\footnotesize
$\Delta n_0$ represents the change in the proportion of units allocated to the
control arm relative to the Neyman allocation $n_0^\star$. The Neyman column reports
percentages, while the remaining columns report percentage-point changes (\textit{pp}).

\end{threeparttable}
\end{table}

\begin{table}[ht]
\centering
\begin{threeparttable}
\caption{Risk-minimizing allocations for $N = 1{,}000$: treatment-arm shares.}
\label{tab:simAll_treat1}

\small
\setlength{\tabcolsep}{4pt}

\begin{tabular}{ccccrrrrrrrr}
\toprule
& & & &
\multicolumn{1}{c}{Neyman}
& \multicolumn{3}{c}{$\Delta n_1/N$ (\textit{pp})}
& \multicolumn{1}{c}{Neyman}
& \multicolumn{3}{c}{$\Delta n_K/N$ (\textit{pp})} \\
\cmidrule(lr){6-8} \cmidrule(lr){10-12}
$K$ & $V_0$ regime & $\bstau$ shape & $\kappa$
& $n_1^\star/N$ (\%)
& Bock & SURE-min & Dimmery
& $n_K^\star/N$ (\%)
& Bock & SURE-min & Dimmery \\
\midrule
6 & high & $-$ & 0 & 5\% & $-3$ & $-2$ & $-3$ & 14\% & $-4$ & $-2$ & $-$ \\
6 & high & dense & 2 & 5\% & $-1$ & $-2$ & $-3$ & 14\% & $-2$ & $-1$ & $+1$ \\
6 & high & sparse & 2 & 5\% & $-3$ & $-$ & $+4$ & 14\% & $-3$ & $-2$ & $-2$ \\
6 & high & dense & 4 & 5\% & $-1$ & $-2$ & $-2$ & 14\% & $-1$ & $-2$ & $-3$ \\
6 & high & sparse & 4 & 5\% & $-3$ & $+2$ & $+6$ & 14\% & $-3$ & $-2$ & $-3$ \\
\midrule
6 & low & $-$ & 0 & 9\% & $-4$ & $-2$ & $-2$ & 18\% & $-2$ & $+1$ & $+2$ \\
6 & low & dense & 2 & 9\% & $-2$ & $-2$ & $-2$ & 18\% & $-1$ & $-$ & $-$ \\
6 & low & sparse & 2 & 9\% & $-4$ & $+2$ & $+9$ & 18\% & $-2$ & $-$ & $-2$ \\
6 & low & dense & 4 & 9\% & $-2$ & $-1$ & $+2$ & 18\% & $-$ & $-$ & $+2$ \\
6 & low & sparse & 4 & 9\% & $-4$ & $+4$ & $+12$ & 18\% & $-1$ & $-1$ & $-4$ \\
\midrule
12 & high & $-$ & 0 & 4\% & $-3$ & $-2$ & $-2$ & 8\% & $-4$ & $-1$ & $-$ \\
12 & high & dense & 2 & 4\% & $-1$ & $-1$ & $-$ & 8\% & $-2$ & $-1$ & $+1$ \\
12 & high & sparse & 2 & 4\% & $-3$ & $-1$ & $+5$ & 8\% & $-3$ & $-1$ & $-$ \\
12 & high & dense & 4 & 4\% & $-1$ & $-1$ & $+2$ & 8\% & $-1$ & $-1$ & $-1$ \\
12 & high & sparse & 4 & 4\% & $-3$ & $-$ & $+8$ & 8\% & $-3$ & $-1$ & $-1$ \\
\midrule
12 & low & $-$ & 0 & 4\% & $-3$ & $-1$ & $-2$ & 9\% & $-2$ & $-$ & $+1$ \\
12 & low & dense & 2 & 4\% & $-2$ & $-1$ & $-1$ & 9\% & $-1$ & $-$ & $+1$ \\
12 & low & sparse & 2 & 4\% & $-3$ & $-$ & $+7$ & 9\% & $-2$ & $-$ & $-$ \\
12 & low & dense & 4 & 4\% & $-1$ & $-$ & $+1$ & 9\% & $-1$ & $-$ & $-$ \\
12 & low & sparse & 4 & 4\% & $-3$ & $+1$ & $+10$ & 9\% & $-2$ & $-$ & $-1$ \\
\bottomrule
\end{tabular}

\footnotesize
$\Delta n_1$ and $\Delta n_K$ denote the change in the proportion of units allocated to treatment arms~1 and $K$, respectively, relative to the Neyman allocation values $n_1^\star$ and $n_K^\star$. The Neyman columns report percentages, while the remaining columns report percentage-point changes (\textit{pp}). Positive values indicate reallocations toward the corresponding arm. For readability, values that round to $0$ are replaced with $-$. 
\end{threeparttable}
\end{table}

Figures~\ref{fig:alloc_controlshift_delta0}, \ref{fig:alloc_controlshift_delta9}, and \ref{fig:alloc_armspecific_sparse} provide three representative visual summaries of risk-minimizing allocations when $K = 12$. Figure \ref{fig:alloc_controlshift_delta0} demonstrates the tendency of all shrinker-risk-minimizers -- especially Bock's estimator $\bsdeltb$ -- to reallocate sample toward the control arm. Figure \ref{fig:alloc_controlshift_delta9} shows how this pattern attenuates at larger values of the signal-to-noise ratio, especially for $\bsdelts$ and $\bsdeltad$. Figure \ref{fig:alloc_armspecific_sparse} reflects how, other things equal, the $\bsdeltad$-minimizing allocation tends to assign units to the arm with strong signal when $\bstau$ is sparse. 

\begin{figure}[ht]
\centering
\includegraphics[width=0.95\linewidth]{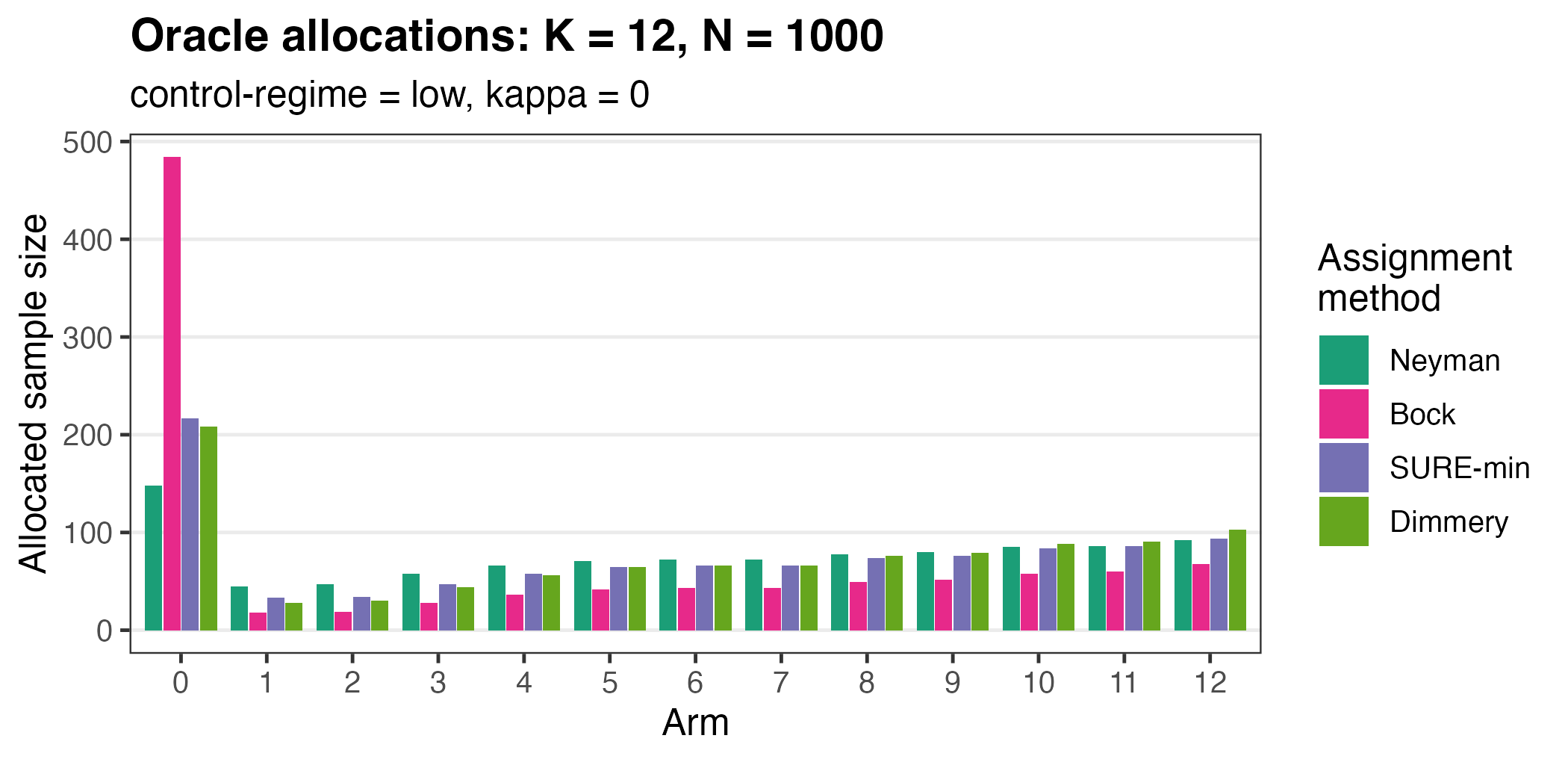}
\caption{Oracle allocations for $K=12$ and $N=1,000$ with low control variance, and $\kappa=0$. All shrinker-risk-minimizing designs shift units toward the control arm relative to the Neyman allocation, with the largest control shift under Bock's estimator $\bsdeltb$. SURE-min $\bsdelts$ and Dimmery $\bsdeltad$ exhibit the same trend but with smaller magnitude.}
\label{fig:alloc_controlshift_delta0}
\end{figure}

\begin{figure}[ht]
\centering
\includegraphics[width=0.95\linewidth]{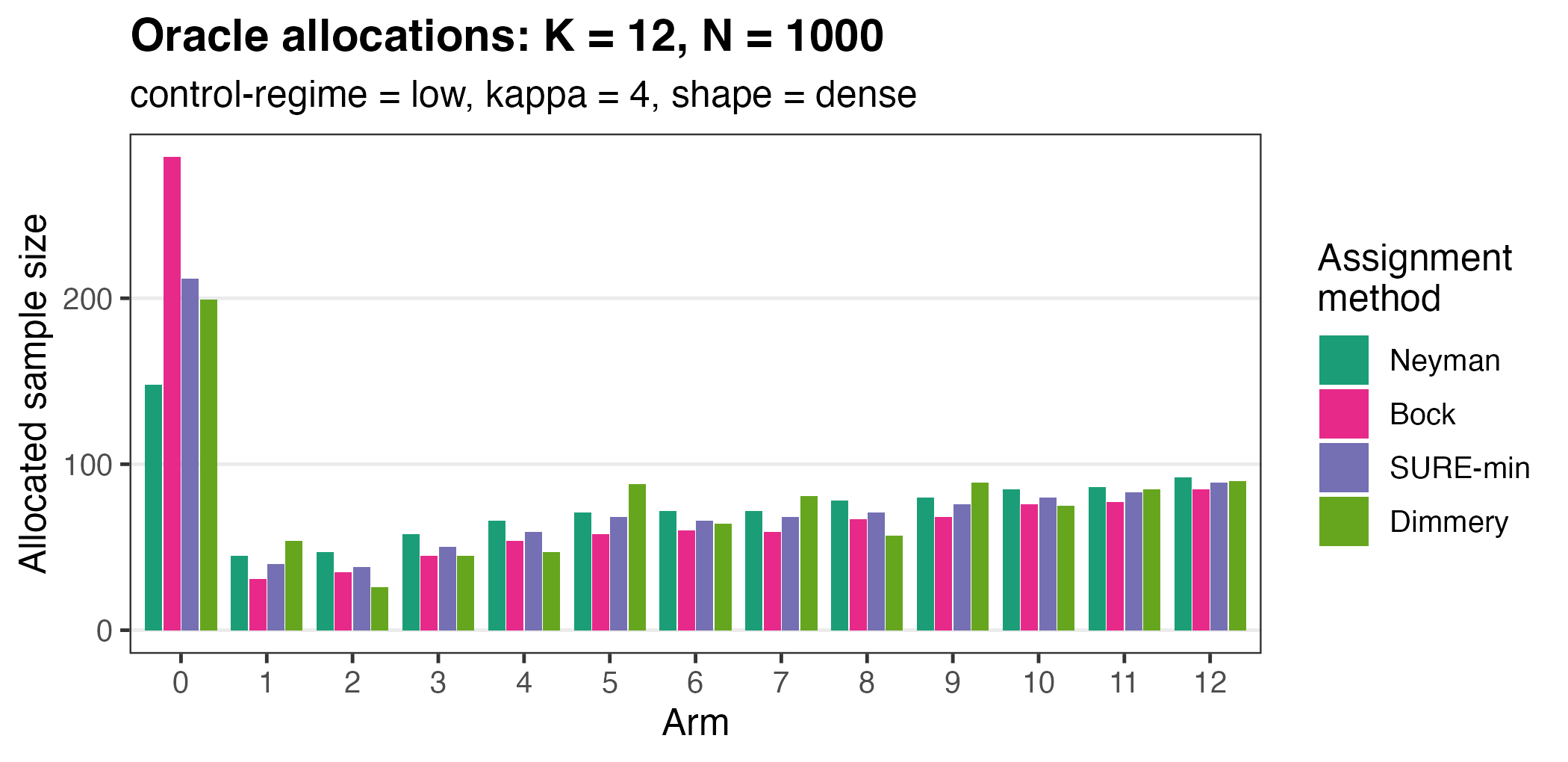}
\caption{Oracle allocations for $K=12$ and $N=1,000$ with low control variance, $\kappa=4$, and dense $\boldsymbol{\tau}$. Compared to the $\kappa=0$ case (Figure~\ref{fig:alloc_controlshift_delta0}), the shrinker preference for allocating additional units to control is substantially reduced for all estimators, especially the Bock estimator (though it still favors the control arm most among the three shrinkers).}
\label{fig:alloc_controlshift_delta9}
\end{figure}

\begin{figure}[ht]
\centering
\includegraphics[width=0.95\linewidth]{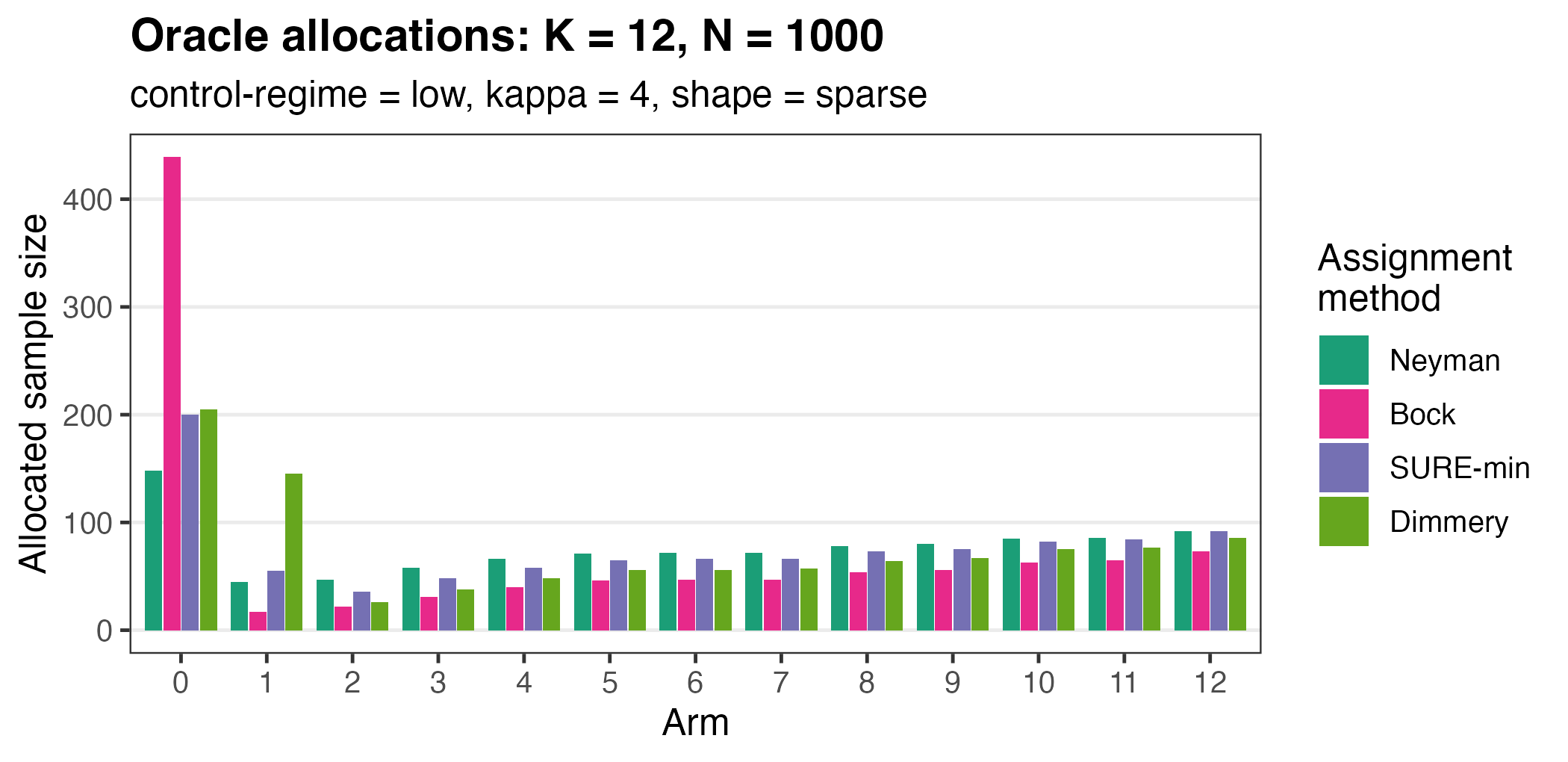}
\caption{Oracle allocations for $K=12$ and $N=1,000$ with low control-variance regime, $\kappa=4$, and sparse $\boldsymbol{\tau}$. Here, we see two key trends. First, the control reallocation for Bock's estimator is much larger than in the otherwise-identical case in which $\bstau$ is dense (Figure \ref{fig:alloc_controlshift_delta9}). This reflects the dependency of the minimizing allocation for Bock's estimator on the signal density. Second, we see that the minimizing allocation for Dimmery's estimator -- but not the Bock or SURE-min estimator -- tends to allocate more units to the treatment arm with strong signal when $\bstau$ is sparse.}
\label{fig:alloc_armspecific_sparse}
\end{figure}

\section{Adaptive Risk Minimization}\label{sec:adaptivity}

The approach in Section \ref{sec:oracle} allows us to characterize the risk-minimizing designs, but it does not provide a pathway to sequential adaptivity because the parameters are assumed known in the design phase. In this section, we introduce a simple algorithm for an adaptive experiment that seeks to minimize the risk of $\bsdeltb, \bsdelts$, or $\bsdeltad$. 

\subsection{Efficient Risk Computation}\label{sec:effRisk}

For each of our candidate estimators, we can use Equation \ref{eq:risk} to obtain the risk. In each case, the resultant expression is a linear combination of expectations of ratios of Gaussian quadratic forms. 

We use $\bsdeltb$ as our motivating example. As discussed in Section \ref{sec:oracle}
\begin{equation}\label{eq:riskStatement}
\begin{aligned}
\mathcal{R}\left( \bsdeltb, \bstau \right) &= \Tr(\boldsymbol \Sigma) + \E \left( \frac{(\tilde p - 2)^2 ||\hbstau||_2^2}{(\hbstau^\tran \bsSig^{-1} \hbstau)^2}  - 
 \frac{2(\tilde p - 2)}{\hbstau^\tran \bsSig^{-1} \hbstau} \Tr(\bsSig) + \frac{4(\tilde p - 2)|| \hbstau ||_2^2}{(\hbstau^\tran \bsSig^{-1} \hbstau)^2} \right)  \\
&= \Tr(\boldsymbol \Sigma) + (\tilde p^2 - 4) \E \left( \frac{||\hbstau||_2^2}{(\hbstau^\tran \bsSig^{-1} \hbstau)^2} \right) - 2(\tilde p - 2)\Tr(\boldsymbol \Sigma) \E \left( \frac{1 }{\hbstau^\top \bsSig^{-1} \hbstau}  \right)  .
\end{aligned}
\end{equation}
\noindent Recall that $\hbstau$ is approximately normally distributed. Hence, the two quantities within expectations are ratios of Gaussian quadratic forms. Results from \cite{bao2013moments} provide a simple procedure to evaluate these expectations via univariate numerical integrals, which can be computed efficiently. 

Observe that we can rewrite the risk as 
\[ \mathcal{R}\left( \bsdeltb, \bstau \right) = \Tr(\boldsymbol \Sigma) + (\tilde p^2 - 4) \E \left( \frac{\bstheta^\tran \boldsymbol \Sigma \bstheta}{\left( \bstheta^{\tran} \bstheta \right)^2} \right) - 2(\tilde p - 2)\Tr(\bsSig) \E \left( \frac{1 }{\left( \bstheta^{\tran} \bstheta \right)}  \right)  \] 
where $\bstheta = \bsSig^{-1/2} \hbstau \sim \mathcal{N} \left( \bsSig^{-1/2} \bstau, \boldsymbol{I_K} \right)$. 
A direct application of the results in \cite{bao2013moments} yields the following integral expressions: 

\begin{equation}\label{eq:integrals1}
\begin{aligned}
\E\left(  \frac{ \bstheta^{\tran}\bsSig \bstheta}{(\bstheta^{\tran} \bstheta)^2} \right) &= \int_0^{\infty}  ( 1 + 2 t)^{-K/2} \cdot \exp\bigg( - (\bstau^\tran \bsSig^{-1} \bstau) \cdot \frac{t}{1 + 2t} \bigg) \cdot \bigg( \frac{\Tr (\bsSig)}{1 + 2t} + \frac{\bstau^\tran \bstau}{(1 + 2t)^2} \bigg) \hspace{1mm} t \hspace{1mm} dt \,, \mbox{ and} \\
\E\left(  \frac{1}{\bstheta^{\tran} \bstheta} \right) &= \int_0^{\infty}( 1 + 2 t)^{-K/2}  \cdot  \exp\bigg( - (\bstau^\tran \bsSig^{-1} \bstau) \cdot \frac{t}{1 + 2t} \bigg) dt .
\end{aligned}
\end{equation}

Crucially, the integrals in \eqref{eq:integrals1} are univariate, and hence can be efficiently computed to high precision. Leveraging a \texttt{C++} integration into \texttt{R} via RCPP \citep{rcpp} yields further speed improvements. In Table \ref{tab:risk_timing}, we provide speed benchmarks in milliseconds over 5,000 repetitions for the computation of the risk of each estimator using RCPP.
All benchmarks were run on a 14-inch MacBook Pro with an Apple M4 Pro processor and 24~GB of memory, running MacOS. Even in high dimensions, the most onerous computation -- the risk of $\bsdeltad$ -- takes only one tenth of a millisecond on average. 

\begin{table}[ht]
\centering
\caption{Avg. computation time (milliseconds) over 5,000 iterations.}
\label{tab:risk_timing}
\begin{tabular}{cccc}
\toprule
Dimension  $K$ & Bock ($\bsdeltb$)  & SURE-min ($\bsdelts$) & Dimmery ($\bsdeltad$) \\
\midrule
4  & 0.0063 & 0.0111 & 0.0187 \\
6  & 0.0064 & 0.0163 & 0.0293 \\
8  & 0.0072 & 0.0177 & 0.0367 \\
12 & 0.0119 & 0.0296 & 0.0630 \\
16 & 0.0163 & 0.0419 & 0.1017 \\
\bottomrule
\end{tabular}
\end{table}

\subsection{Greedy Algorithm}

In the adaptive design literature, there are several different proposals for the cadence at which treatment assignment rules are updated. Many approaches consider a ``pilot" or ``warm-up" phase in which treatment assignments are non-adaptive and equally frequent across arms, with the information collected in this phase used to determine a single assignment rule for the latter experimental phase \citep{blackwell2022batch, cai2024performance, tabord2018stratification, hahn2011adaptive}. Alternatively, treatment assignment rules may be updated in batches \citep{LiOwen2024, zhang2020inference, zhao2023adaptive} or recomputed sequentially with each new arrival \citep{kato2020efficient, dai2023clip}. 

We propose a simple, greedy algorithm that combines several of these approaches. Suppose the adaptive experiment is pre-specified to involve $N$ total units. For the first $N_{\text{warm-up}} \ll N$ arrivals, we assign the treatment according to complete randomization, with equal probability of receiving the control or any of the $K$ active treatments. Denote as $\hbstau^{(i)}$ the causal estimates after arrival $i$ (computed using the difference-in-means estimator), and $\boldsymbol{\hat V^{(i)}}$ the estimated potential outcome variances after arrival $i$ (computed using the standard sample variance formula). Let 
\[ \boldsymbol{n^{(i)}} = \left(n_0^{(i)}, \dots n_K^{(i)} \right)^\tran \in \mathbb{R}^{K + 1}\] 
represent the total number of units assigned to each arm after arrival $i$, and let $\boldsymbol e_k \in \mathbb{R}^{K + 1}$ be the vector whose $(k + 1)^\text{th}$ entry is 1 and whose other entries are $0$. For $i$ such that $N_{\text{warm-up}} < i \leq N$, we assign treatment according to the deterministic rule: 

\[ W_i = \argmin_{k \in \{0,\dots,K\}} \mathcal{R} \left( \bsdelt, \boldsymbol{\hat \tau^{(i-1)}}; \boldsymbol{n^{(i - 1)}} +\boldsymbol e_k,\, \boldsymbol{\hat V^{(i - 1)}} \right). \]

Simply: after the completion of the warm-up period, we assign units to arms based on which assignment will minimize our current estimate of the shrinker risk. This rule can be applied with any of $\bsdelt = \bsdeltb, \bsdelts$, or $\bsdeltad$, and the risk estimates are computed using the efficient RCPP integral representations discussed in Section \ref{sec:effRisk}. Unlike the standard Neyman allocation setting, the shrinker risk depends on both $\bstau$ and $\boldsymbol V$, so we retain updating estimates of both these quantities throughout the experiment. This means early estimates of treatment effects can influence future allocation decisions. Additionally, note that this algorithm is not random after we condition on all the arrivals prior to the $i^\text{th}$ one. 

We do not make claims of optimality for this algorithm relative to alternative treatment rules that might try to minimize the shrinker risk. Nevertheless, we consider this approach attractive in its simplicity and demonstrate its utility in the sections to follow. 

\subsection{Simulations}

We ran simulations to evaluate in which settings adaptivity based on shrinkage provides the most benefit. The simulations consist of a series of experiments where units arrive sequentially with a fixed total sample size $N$. There are two main choices we consider for the simulated experiments: how treatment is allocated to different treatment arms, and after the experiment is over, what estimator is used to estimate the treatment effects. For the treatment allocation approaches, we compare sequential risk-minimization for $\bsdeltb, \bsdelts$, and $\bsdeltad$ to the static allocation methods of complete randomization and Neyman allocation. 

For the estimators, we compare the $\bsdeltb, \bsdelts$, and $\bsdeltad$  to the difference-in-means estimator $\hbstau$. We evaluate the positive-part versions of these estimators, as these are the versions that would typically be used in practice. However, the adaptive allocation rules continue to minimize the surrogate risks derived for the corresponding smooth estimators.

We calculate the mean squared error as each unit arrives in the trial. We simulate repeated iterations, each time drawing a new set of potential outcomes from the same distribution, and running a sequential experiment. 
We summarize the results over $1,000$ iterations by computing the average MSE at each unit's arrival, producing an average MSE trajectory. For detailed information about the simulation design, see Appendix~\ref{sec:sim_design}.

We highlight five regimes among the ones considered in Section~\ref{sec:nonadaptSim}. Figures~\ref{fig:K6-kappa0}-\ref{fig:K6-dense-kappa4} each show the trajectory of mean squared error as units arrive in trial for one simulation regime. All have $K=6$ and high control variance, and we vary the signal-to-noise ratio $\kappa$ from $0$ to $2$ to $4$. When the signal is nonzero, we also show results for both dense and sparse signals $\bstau$. Additional simulations can be found in Appendix~\ref{sec:sim_add_results}; 
Figures~\ref{fig:K12-kappa0}-\ref{fig:K12-dense-kappa4} show results for the $K = 12$ regime, high control-variance regime, while Figures~\ref{fig:K6-low-kappa0}-\ref{fig:K6-low-dense-kappa4} show the $K = 6$ case but for the low control-variance regime. 

Total sample sizes $N$ are selected so that there are $400$ units on average across treatment and control arms, yielding $N = 2,800$ for $K = 6$ and $N = 5,200$ for $K = 12$. \cite{zhao2023adaptive} suggests a sample size of $\sqrt{N}$ for the pilot phase of a two-phase experiment when using an adaptive Neyman Allocation; we opt for a slightly larger value of $N_{\text{warm-up}} = 100$. 


Figures~\ref{fig:K6-kappa0}-\ref{fig:K6-dense-kappa4} demonstrate several trends. First, we observe that using the SURE-min estimator $\bsdelts$ or the Dimmery estimator $\bsdeltad$ yields improvement in average risk over using the difference-in-means estimator $\hbstau$, regardless of the assignment method. These benefits are most noticeable when the SNR $\kappa$ is low; when $\kappa > 0$, they are more noticeable when $\bstau$ is sparse. The improvements are more muted when switching to Bock estimator $\bsdeltb$. The simulation regime -- with $K = 6$ and substantial heterogeneity across $V_0, V_1, \dots, V_K$ -- tends to yield values of the effective dimension parameter $\tilde p$ close to the cutoff value of $2$, restricting the potential estimation improvements from this estimator. 

Beyond the gains from shrinkage estimation itself, assigning treatments to minimize the shrinker risk yields further improvements
over a simple Neyman allocation, when using either $\bsdelts$ or $\bsdeltad$. The effect is most notable when $\kappa$ is small, but persists at modest $\kappa$ values and even at $\kappa = 4$ when $\bstau$ is sparse. In aggregate, the strong empirical performance of the adaptive procedures suggests that minimizing the (analytically tractable) surrogate risk is an effective proxy for minimizing the risk of the corresponding positive-part estimators.


Tables~\ref{tab:adaptiveSimsK6} and ~\ref{tab:adaptiveSimsK12} also summarize the performance of treatment allocation strategies across a wider set of regimes: we vary both the dimension ($K = 6$ and $K = 12)$ and the $V_0$ value (low vs. high). We report two relevant numbers: the average across simulations of the MSE assessed at unit 1,000 in the adaptive trial; and the mean MSE from unit $1,000$ until the end of the experiment, also averaged across simulations. We chose these  measures because the MSE is very large at the outset of the experiment and decreases rapidly as units arrive, so these measures assess trial performance at a more stable, later phase. 

For simplicity, for each treatment allocation strategy we report only the result for the estimator that is most naturally paired with that strategy, e.g. applying Bock's estimator to the risk-minimization treatment allocation that targets Bock's estimator. Both complete randomization and the Neyman treatment allocation strategies are paired with the difference-in-means estimator. For each of the two metrics, the value of the lowest MSE is in bold.


Tables~\ref{tab:adaptiveSimsK6} and ~\ref{tab:adaptiveSimsK12} demonstrate that the risk-minimization treatment allocations targeting the SURE-min and Dimmery's estimator provide the best performance across all regimes. 
Targeting the SURE-min estimator results in the best performance when $\bstau$ is dense, while targeting Dimmery's estimator performs better when $\bstau$ is sparse. When $\bstau$ is sparse, the difference between targeting the SURE-min estimator and Dimmery's estimator is largest when $\kappa = 4$.


\begin{figure}[p]
    \centering
    \includegraphics[width=0.9\linewidth]{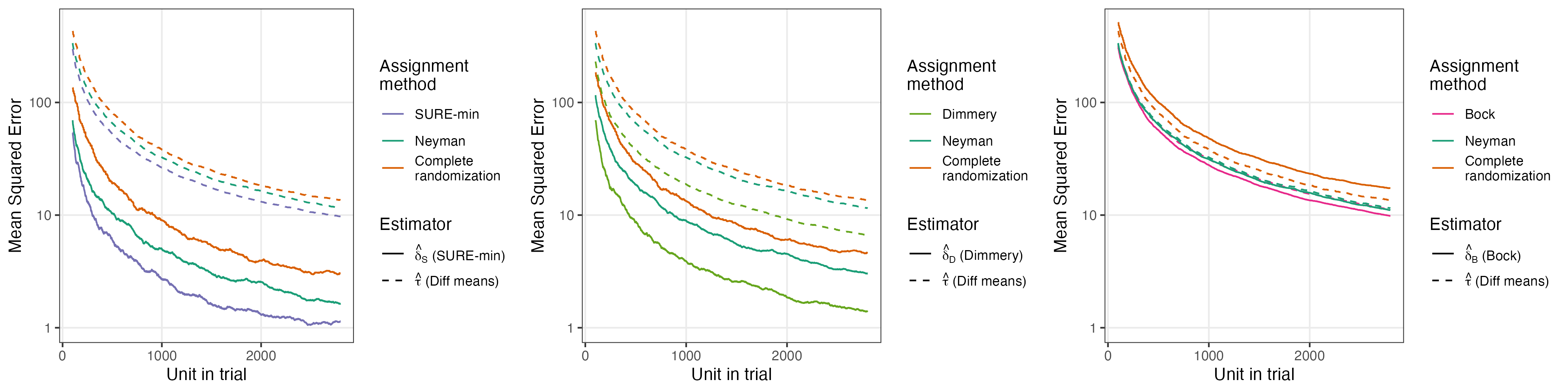}
    \caption{$K = 6, \kappa=0$.}
    \label{fig:K6-kappa0}
    \includegraphics[width=0.9\linewidth]{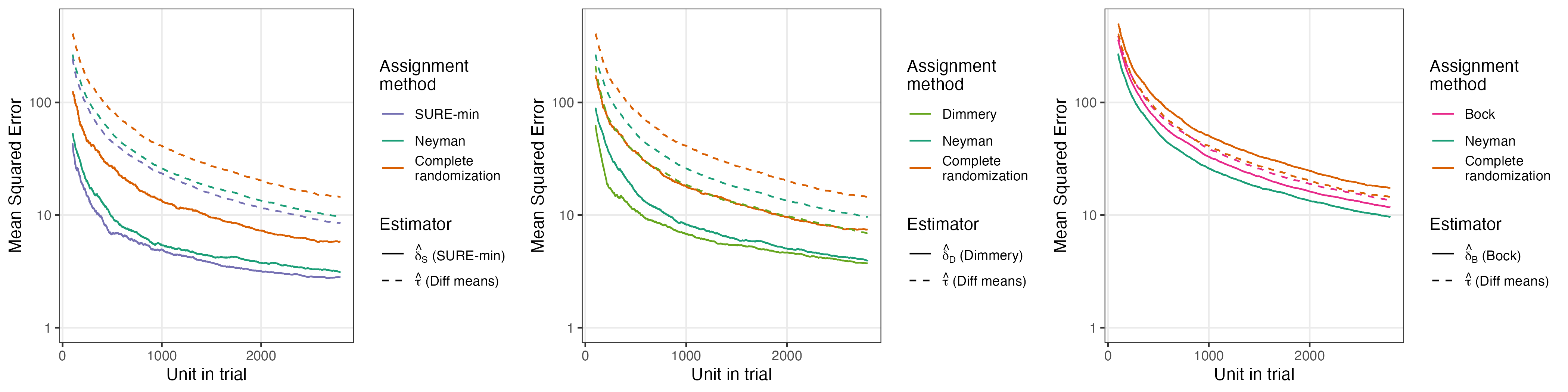}
    \caption{$K = 6, \bstau$ sparse, $\kappa=2$.}
    \label{fig:K6-sparse-kappa2}
    \includegraphics[width=0.9\linewidth]{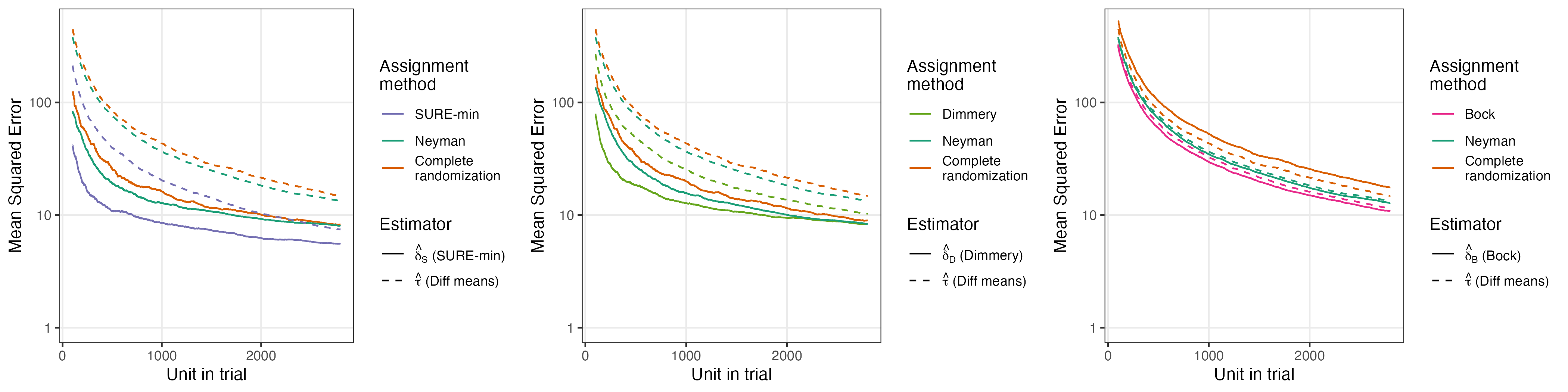}
    \caption{$K = 6, \bstau$ dense, $\kappa=2$.}
    \label{fig:K6-dense-kappa2}
\end{figure}

\begin{figure}[p]
    \centering
    \includegraphics[width=0.9\linewidth]{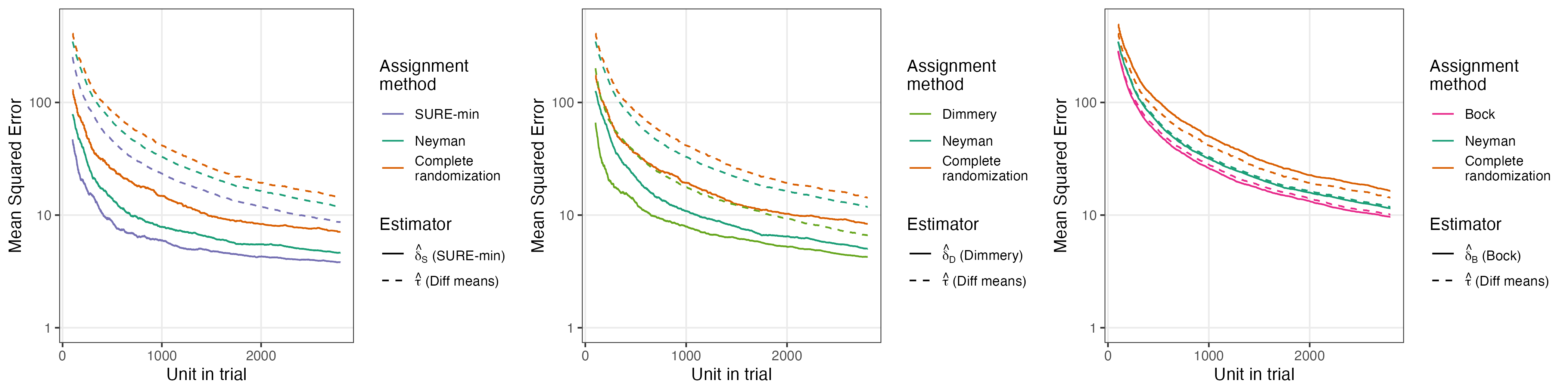}
    \caption{$K = 6, \bstau$ sparse, $\kappa=4$.}
    \label{fig:K6-sparse-kappa4}
    \includegraphics[width=0.9\linewidth]{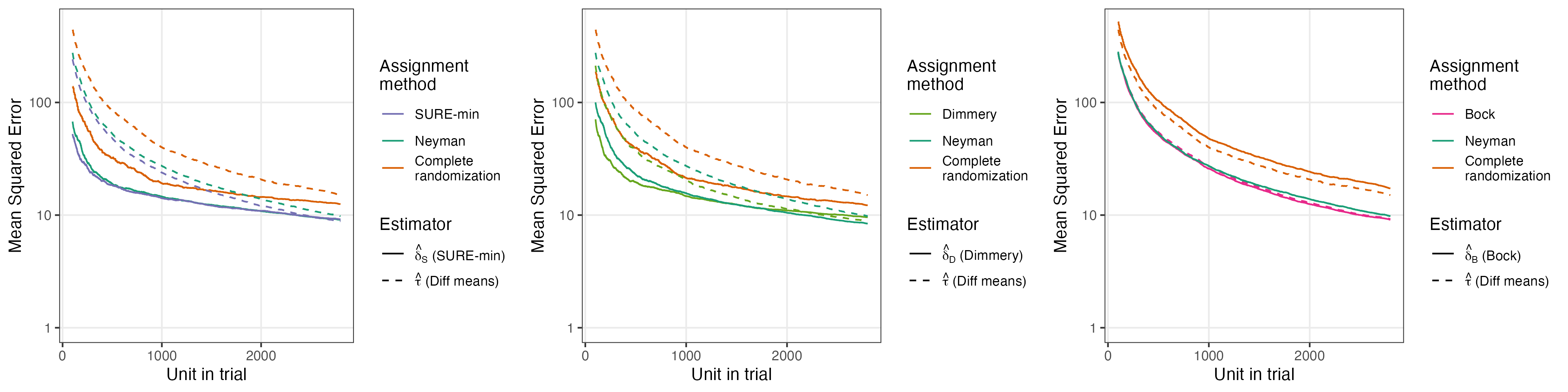}
    \caption{$K = 6, \bstau$ dense, $\kappa=4$.}
    \label{fig:K6-dense-kappa4}
\end{figure}

\begin{table}[p]
\centering
\begin{threeparttable}
\caption{Summary of simulation performance of adaptive treatment allocation strategies for $K = 6$}
\label{tab:adaptiveSimsK6}

\small
\setlength{\tabcolsep}{4pt}

\begin{tabular}{rlll|rrrrr|rrrrr}
\toprule
 &  &  & 
& \multicolumn{5}{c}{MSE at Unit $1000$}
& \multicolumn{5}{c}{Mean MSE over Unit $\geq 1000$} \\
\cmidrule(lr){5-9} \cmidrule(lr){10-14}
$K$& $V_0$ regime & $\bstau$ shape & $\kappa$ & CR & Neyman & $\bsdeltb$ & $\bsdelts$ & $\bsdeltad$
& CR & Neyman & $\bsdeltb$ & $\bsdelts$ & $\bsdeltad$ \\
\midrule
6 & high & dense & 0 & 38.3 & 32.6 & 27.9 & \textbf{2.7} & 3.9 & 21.3 & 18.5 & 15.7 & \textbf{1.5} & 2.2\\
6 & high & sparse & 0 & 38.3 & 25.6 & 23.8 & \textbf{2.7} & 3.5 & 22.6 & 14.5 & 13.4 & \textbf{1.6} & 2.0\\
6 & high & dense & 2 & 43.4 & 36.7 & 29.7 & \textbf{8.5} & 12.7 & 24.4 & 21.2 & 17.1 & \textbf{6.7} & 10.0\\
6 & high & sparse & 2 & 41.3 & 25.9 & 33.0 & \textbf{5.0} & 6.8 & 23.3 & 15.3 & 18.7 & \textbf{3.4} & 4.9\\
6 & high & dense & 4 & 40.1 & 27.3 & 25.8 & \textbf{14.1} & 14.8 & 23.7 & 15.7 & 14.5 & \textbf{11.3} & 11.5\\
6 & high & sparse & 4 & 41.6 & 32.9 & 26.1 & \textbf{6.0} & 7.9 & 22.9 & 18.7 & 14.9 & \textbf{4.5} & 5.6\\
\midrule
6 & low & dense & 0 & 30.1 & 17.1 & 11.4 & \textbf{1.5} & 2.6 & 17.0 & 9.9 & 6.6 & \textbf{0.8} & 1.4\\
6 & low & sparse & 0 & 21.6 & 16.6 & 11.3 & \textbf{1.1} & 2.0 & 12.4 & 9.5 & 6.3 & \textbf{0.6} & 1.2\\
6 & low & dense & 2 & 18.2 & 20.0 & 14.7 & \textbf{4.9} & 5.2 & 10.3 & 11.3 & 8.8 & \textbf{3.9} & 4.3\\
6 & low & sparse & 2 & 26.0 & 23.5 & 11.8 & \textbf{3.0} & 4.9 & 15.0 & 13.7 & 6.8 & \textbf{2.4} & 3.7\\
6 & low & dense & 4 & 14.2 & 15.3 & 13.1 & \textbf{6.9} & 9.3 & 8.1 & 8.8 & 7.8 & \textbf{5.6} & 7.1\\
6 & low & sparse & 4 & 18.0 & 17.7 & 13.3 & \textbf{4.7} & 5.2 & 10.5 & 10.1 & 8.0 & 3.8 & \textbf{3.7}\\
\bottomrule
\end{tabular}

\footnotesize
Each metric (MSE for Unit 1000 and Mean MSE for Unit
$\geq 1000$) is the mean across simulation iterations.
For each metric, the value with the lowest unrounded MSE is in bold.
CR stands for Complete Randomization. $\bsdeltb$ targets the Bock estimator, $\bsdelts$ targets the SURE-min estimator, and $\bsdeltad$ targets the Dimmery estimator.

\end{threeparttable}
\end{table}

\begin{table}[p]
\centering
\begin{threeparttable}
\caption{Summary of simulation performance of adaptive treatment allocation strategies for $K = 12$}
\label{tab:adaptiveSimsK12}

\small
\setlength{\tabcolsep}{4pt}
\begin{tabular}{rlll|rrrrr|rrrrr}
\toprule
 &  &  & 
& \multicolumn{5}{c}{MSE at Unit $1000$}
& \multicolumn{5}{c}{Mean MSE over Unit $\geq 1000$} \\
\cmidrule(lr){5-9} \cmidrule(lr){10-14}
$K$& $V_0$ regime & $\bstau$ shape & $\kappa$ & CR & Neyman & $\bsdeltb$ & $\bsdelts$ & $\bsdeltad$
& CR & Neyman & $\bsdeltb$ & $\bsdelts$ & $\bsdeltad$ \\
\midrule
12 & high & dense & 0 & 169.4 & 102.8 & 94.8 & 7.4 & \textbf{2.8} & 65.3 & 40.6 & 34.9 & 2.8 & \textbf{1.2}\\
12 & high & sparse & 0 & 157.3 & 87.7 & 78.7 & 5.9 & \textbf{4.3} & 60.5 & 33.9 & 30.2 & 2.2 & \textbf{1.7}\\
12 & high & dense & 2 & 166.5 & 100.5 & 89.2 & 16.9 & \textbf{14.0} & 62.7 & 39.3 & 34.2 & 11.2 & \textbf{11.1}\\
12 & high & sparse & 2 & 179.9 & 100.9 & 94.4 & 7.9 & \textbf{5.8} & 72.3 & 39.6 & 34.6 & 3.6 & \textbf{3.0}\\
12 & high & dense & 4 & 164.5 & 97.2 & 86.4 & 24.3 & \textbf{23.5} & 65.3 & 37.8 & 33.1 & \textbf{17.3} & 18.6\\
12 & high & sparse & 4 & 178.6 & 73.5 & 90.2 & 12.2 & \textbf{7.5} & 68.9 & 29.2 & 33.7 & 8.0 & \textbf{4.7}\\
\midrule
12 & low & dense & 0 & 71.2 & 58.4 & 55.0 & 3.4 & \textbf{1.9} & 27.7 & 22.9 & 18.6 & 1.3 & \textbf{0.8}\\
12 & low & sparse & 0 & 62.5 & 52.9 & 63.0 & 3.4 & \textbf{1.6} & 24.2 & 20.9 & 21.3 & 1.2 & \textbf{0.6}\\
12 & low & dense & 2 & 73.0 & 72.3 & 48.6 & 7.9 & \textbf{6.0} & 28.6 & 29.2 & 17.5 & 5.6 & \textbf{5.0}\\
12 & low & sparse & 2 & 65.6 & 77.0 & 54.9 & 5.4 & \textbf{3.2} & 25.8 & 30.0 & 19.6 & 3.4 & \textbf{1.9}\\
12 & low & dense & 4 & 65.9 & 60.4 & 54.9 & 14.0 & \textbf{10.4} & 25.5 & 23.2 & 21.1 & 10.4 & \textbf{8.8}\\
12 & low & sparse & 4 & 66.5 & 65.6 & 46.7 & 4.6 & \textbf{2.8} & 26.2 & 25.2 & 16.7 & 2.5 & \textbf{1.6}\\
\bottomrule
\end{tabular}

\end{threeparttable}
\end{table}

\section{Confidence Intervals}\label{sec:cis}

Practitioners often require uncertainty intervals that can be reported along with shrinker point estimates. We briefly discuss the construction of shortened intervals that can be reported at the conclusion of these adaptive trials.

Our confidence intervals are based on the FAB (``frequentist assisted by Bayes") procedure, first proposed in \cite{pratt1963shorter} and developed into a modern framework in \cite{yu_hoff_2018}. 
Pratt observed that, when we have access to prior information about the value of a parameter, we can construct confidence intervals for the parameter that attain frequentist coverage but are typically shorter than Wald intervals. These intervals are narrower than Wald intervals when the observed data is in a high-density region of the prior and wider than Wald intervals when the observed data is in the tails of the prior, generating shorter intervals ``on average" when integrated over the prior.

\cite{yu_hoff_2018} extend this approach to the multigroup data setting. In the absence of an assumed prior, they assume a model for the heterogeneity across the groups, and estimates its parameters using data from all of the groups. They also extend Pratt's approach to the case where group-specific variances are unknown, and show the approach generalizes straightforwardly to the heteroscedastic case. Intervals are extended to linear regression coefficients in \cite{10.1214/18-EJS1517} and to the small-area estimation problem in \cite{10.1093/jssam/smz010}.

To account for the heteroscedastic and dependent sampling model described in Equation~\eqref{eq:samplingModel}, we make some minor adaptations to the procedure suggested in \cite{yu_hoff_2018}. Details are given in Appendix \ref{appendix:fab}. We suggest a single FAB interval structure, regardless of the estimator that is being used. 

In Table \ref{tab:k6-fab-high-hetero-high-v}, we show simulation results for the $K = 6$, high-control-variance setting. Both Wald and adapted FAB intervals are computed for each of the estimated causal effects at the end of each simulated trial. We again vary the signal-to-noise ratio $\kappa$; the signal density; and the adaptive allocation scheme used to assign treatments. Minimum coverage rates across the treatment effects are provided in the fourth and fifth columns. Notably, minimum coverage remains close to nominal levels in all settings. Moreover, the final column shows that we can typically attain substantial length reductions (12-15\%), relative to Wald intervals, using the FAB approach. Further simulations for the $K = 12$ case can be found in Appendix \ref{appendix:fab}. 

\begin{table}[ht]
\centering
\begin{tabular}{lllrrr}
\toprule
\makecell[l]{Signal-to-\\Noise Ratio} & $\bstau$ setting & \makecell[l]{Allocation \\Scheme} & \makecell[l]{Minimum \\ Wald Coverage} & \makecell[l]{Minimum \\ FAB Coverage} & \makecell[l]{Avg. FAB\\length reduction} \\
\midrule
$\kappa=0$ & $-$ & Balanced & 94\% & 94\% & $-$15\% \\
 & & Neyman & 94\% & 94\% & $-$15\% \\
 & & SURE-min & 95\% & 95\% & $-$15\% \\
 & & Dimmery & 95\% & 95\% & $-$15\% \\
 & & Bock & 94\% & 94\% & $-$15\% \\
\addlinespace
$\kappa=2$ & sparse & Balanced & 94\% & 93\% & $-$15\% \\
 & & Neyman & 94\% & 94\% & $-$15\% \\
 & & SURE-min & 95\% & 95\% & $-$15\% \\
 & & Dimmery & 95\% & 95\% & $-$15\% \\
 & & Bock & 94\% & 94\% & $-$14\% \\
\addlinespace
$\kappa=2$ & dense & Balanced & 94\% & 93\% & $-$14\% \\
 & & Neyman & 94\% & 93\% & $-$13\% \\
 & & SURE-min & 96\% & 95\% & $-$14\% \\
 & & Dimmery & 96\% & 95\% & $-$14\% \\
 & & Bock & 94\% & 93\% & $-$13\% \\
\addlinespace
$\kappa=4$ & sparse & Balanced & 94\% & 94\% & $-$14\% \\
 & & Neyman & 94\% & 94\% & $-$14\% \\
 & & SURE-min & 95\% & 95\% & $-$14\% \\
 & & Dimmery & 95\% & 94\% & $-$14\% \\
 & & Bock & 93\% & 93\% & $-$14\% \\
\addlinespace
$\kappa=4$ & dense & Balanced & 94\% & 92\% & $-$13\% \\
 & & Neyman & 95\% & 93\% & $-$12\% \\
 & & SURE-min & 95\% & 94\% & $-$12\% \\
 & & Dimmery & 96\% & 95\% & $-$13\% \\
 & & Bock & 94\% & 93\% & $-$12\% \\
\bottomrule
\end{tabular}
\caption{FAB interval performance for $K=6$ with high control variance. Coverage columns report the minimum across the $K$ treatment effects; length reduction is computed as the mean FAB interval length divided by the mean Wald interval length, minus one.}
\label{tab:k6-fab-high-hetero-high-v}
\end{table}

We note two related problems reserved for future work. First, we do not attempt to construct ``always-valid" or ``anytime-valid" confidence sets. Such intervals -- which have been a major focus of research in online A/B testing \citep[see e.g.][]{Johari2017, johari2022always, maharaj2023anytime} -- are designed to address the ``peeking" problem, in which analysts use inferential quantities computed midway through an online trial to assess whether to terminate the trial. If standard confidence sets are used, this will inflate the Type I error, while always-valid intervals are designed to be consulted at any point in the trial without inflating the false positive rate. Here, we focus solely on confidence intervals that are constructed at the end of the trial. 

Second, we do not make an explicit correction for adaptivity in the trial. Several papers \citep{hadad2021confidence, zhao2023adaptive} have highlighted that adaptivity in data collection induces dependency of later treatment assignments on earlier outcomes, which can induce bias in standard causal estimators. This dependency can also make the estimators' sampling distributions heavy-tailed, undermining the validity of standard confidence procedures. Here, our estimators are already known to be biased. Moreover, in simulations, we do not find that our confidence procedures systematically undercover due to adaptivity. We note that specialized corrections may be needed in cases involving more extreme heteroscedasticity.

\section{Discussion}\label{sec:conclusion}

We have proposed an approach to adaptive experimental designs, in which our causal estimates are obtained by applying an Empirical Bayes shrinker to the estimated contrasts between active treatment arms and a control arm. We considered three shrinkers targeted to the case of heteroscedastic and dependent data: Bock's estimator $\bsdeltb$, an easy-to-compute SURE-minimizing estimator $\bsdelts$, and Dimmery's estimator $\bsdeltad$. Each admits a testable condition under which its risk is guaranteed to be smaller than that of the difference-in-means estimator $\hbstau$.  

Under a non-adaptive design with known parameter values, the risk-minimizing allocation for each of these shrinkage estimators diverges from a standard Neyman allocation. To varying degrees, these allocations inflate the number of units assigned to the control arm, thus de-correlating the arm-specific causal estimates. This makes intuitive sense, as it provides the shrinkers with more independent data from which to learn. 

We also showed that these estimators can be used for sequential adaptivity, because their risks can be expressed as efficiently-computable univariate integrals. 
Hence, by maintaining online estimates of the mean and variance parameters and using a greedy minimization scheme, we can assign units to treatment arms to minimize the estimated risk of each estimator. Simulations demonstrate that this method outperforms an adaptive Neyman allocation, particularly when the signal-to-noise ratio is small. 

There are many future directions for this research. First, our greedy algorithm -- used for adaptively assigning treatments to units -- is deterministic conditional on the units already observed. This approach makes it sensitive to unlucky observations early in the experiment. Many modern methods \citep[e.g.]{dai2023clip, LiOwen2024} instead allocate treatment probabilistically at each stage. These methods either gradually update treatment probabilities with each new arrival to the trial, or recompute treatment probabilities at each stage in a batched experiment. Such approaches may better manage the exploration-exploitation tradeoff than our greedy approach, and may also yield faster convergence to the true optimal treatment probabilities for a given value of $\bstau$ and the potential outcome variances $\{V_k\}_{k = 0}^K$. 

Moreover, we have considered three reasonable estimators in this manuscript, finding that the SURE-min estimator $\bsdelts$ and Dimmery's estimator $\bsdeltad$ tend to yield improved performance in different simulation regimes. Yet our method is valid for any shrinkage estimator whose risk can be expressed via a sum of expectations of Gaussian quadratic form ratios. There is a rich literature on the construction of shrinkage estimators using unbiased risk estimators in the vein of Stein's Unbiased Risk Estimate \citep{li1985stein, xie2012estimating, donoho1995adapting}. Hence, future efforts should focus on the design of shrinkers that are specially tailored to the tasks of causal estimation in adaptive experiments. Such estimators would be especially useful if they incorporated covariates to improve estimation accuracy. Construction of tailored shrinkers -- in tandem with practical guidance about when and where to use different candidate estimators -- would help to operationalize our method and make it accessible to a broader group of practitioners. 

\bibliography{references}

\appendix
\addtocontents{toc}{\protect\setcounter{tocdepth}{-1}}

\section{Proof of Lemma \ref{lemma:sure}}\label{sec:proofSureLemma}

\begin{proof}
The proof draws heavily from Lemma 4.1 and Theorem 4.1 in \cite{strawderman2003minimax}. Because $\bsSigma$ is symmetric and positive definite, it has a unique symmetric positive definite square root $\bsSigma^{1/2}$. Define $\bsZ = \bsSigma^{-1/2} \bsX$ (and, equivalently, $\bsSigma^{1/2}  \bsZ = \bsX$) such that $\bsZ \sim \mathcal{N} \left( \bsSigma^{-1/2}  \bsmu, \boldsymbol I \right)$. Observe:

\begin{align*}
\e \left( || \bsdelt(\bsX) - \bsmu ||_2^2 \right) &= \e \left( \left( \bsX - g(\bsX) - \bsmu \right)^\tran \left( \bsX - g(\bsX) - \bsmu \right) \right) \\
&= \e \left( \left( \bsSigma^{1/2}  \bsZ - g(\bsSigma^{1/2}  \bsZ) - \bsmu \right)^\tran \left( \bsSigma^{1/2}  \bsZ - g(\bsSigma^{1/2}  \bsZ) - \bsmu \right) \right) \\
&= \e \left( \left( \bsZ - \bsSigma^{-1/2} g(\bsSigma^{1/2}  \bsZ) - \bsSigma^{-1/2} \bsmu \right)^\tran \bsSigma \left(   \bsZ - \bsSigma^{-1/2} g(\bsSigma^{1/2}  \bsZ) - \bsSigma^{-1/2} \bsmu \right) \right) 
\end{align*}

The final line is the expected covariance-weighted squared-error loss when estimating the mean of $\bsZ$ using 
\[ \bsdelt'(\bsZ) = \bsZ - f(\bsZ) \hspace{5mm} \text{ for } \hspace{5mm} f(\bsZ) = \bsSigma^{-1/2} g(\bsSigma^{1/2} \bsZ), \]
where $\bsZ$ is multivariate normal with identity covariance. Hence, we are back to the standard setting for using SURE. Applying SURE, we obtain 
\begin{align*}
\e \left( || \bsdelt(\bsX) - \bsmu ||_2^2 \right) &= 
\e \left( || \bsSig^{1/2} \left(\bsdelt'(\bsZ) - \bsSigma^{-1/2} \bsmu\right) ||_2^2 \right) \\
&= \Tr(\bsSig) + \e \left( || \bsSig^{1/2} f(\bsZ) ||_2^2 \right) -
2\,\operatorname{tr}\!\left(\bsSig \mathcal{J}_f(\bsZ)\right),
\end{align*}
where $\mathcal{J}_{f}( \bsZ )$ is the Jacobian matrix of $f(\cdot)$ evaluated at $\bsZ$. We observe immediately that 
\begin{align*}
|| \bsSig^{1/2} f(\bsZ) ||_2^2 &= g(\bsSigma^{1/2} \bsZ)^\tran  \bsSigma^{-1/2} \bsSigma^{1/2} \bsSigma^{1/2} \bsSigma^{-1/2} g(\bsSigma^{1/2} \bsZ) \\
&= g(\bsSigma^{1/2} \bsZ)^\tran g(\bsSigma^{1/2} \bsZ) = ||g(\bsX)||_2^2.
\end{align*}
Lastly, observe that 
\begin{align*}
\Tr(\bsSig \mathcal{J}_f(\bsZ)) &= \Tr \left(\bsSig \bsSig^{-1/2}
\mathcal{J}_g(\bsSig^{1/2}\bsZ)
\bsSig^{1/2} \right) = \Tr \left(\bsSig 
\mathcal{J}_g(\bsX) \right),
\end{align*}
where we have used the cyclic property of the trace and the fact that $\bsSigma^{1/2} \bsZ = \bsX$.

\end{proof}

\section{Proof of Lemma \ref{lemma:riskBound_deltas}}\label{sec:proof_deltas}

\begin{proof}
We compute the risk of $\bsdelts$ using Expression \ref{eq:risk}:
\begin{align*}
\mathcal{R}\left( \bsdelts, \bstau \right) 
&=  \Tr(\bsSigma) + \E \left( \frac{\Tr(\bsSig)^2}{|| \hbstau||_2^2} - 2\frac{\Tr(\bsSig)^2}{|| \hbstau||_2^2} + 4 \frac{\Tr(\bsSig) \left(\hbstau^ \tran \bsSig \hbstau \right)}{||\hbstau||_2^4}   \right) \\
&=  \Tr(\bsSigma) + \E \left( - \frac{\Tr(\bsSig)^2}{|| \hbstau||_2^2} + 4 \frac{\Tr(\bsSig) \left(\hbstau^ \tran \bsSig \hbstau \right)}{||\hbstau||_2^4}   \right) \\
&\leq  \Tr(\boldsymbol \Sigma) + \E \left( \frac{4\cdot \Tr(\bsSig) \cdot\lambdamax}{|| \hbstau||_2^2}  - \frac{\Tr(\boldsymbol \Sigma)^2 }{|| \hbstau||_2^2}  \right) \\
&=  \Tr(\boldsymbol \Sigma) + \Tr(\bsSig) \cdot \E \left( \frac{4 \cdot \lambdamax  - \Tr(\bsSigma)}{|| \hbstau||_2^2}    \right). 
\end{align*}

We know that $\mathcal{R}\left( \hbstau, \bstau \right) = \Tr(\bsSigma)$, so $\bsdelts$ dominates $\hbstau$ if the expectation term is negative, which occurs precisely when the given condition holds. 

\end{proof}

\section{Proof of Lemma \ref{lemma:riskBound_deltad}}\label{sec:proof_deltad}
\begin{proof}
From Expression \ref{eq:risk}, we can directly compute the risk of $\bsdeltad$. Denote as $\bsSigma_{\star}$ the diagonal matrix whose diagonal entries are equal to the diagonal entries of $\bsSigma$. Then, we obtain
\begin{align*}
\mathcal{R}\left( \bsdeltad, \bstau \right) 
&=  \Tr(\bsSigma) + \E \left( \frac{(K - 2)^2}{|| \hbstau||_2^4} \left(\hbstau^\tran \bsSigma_{\star}^2 \hbstau \right) + \frac{4(K - 2)}{|| \hbstau||_2^4} \left(\hbstau^\tran \bsSigma \bsSigma_{\star} \hbstau \right) - \frac{2(K-2) \Tr(\boldsymbol \Sigma_{\star}^2) }{|| \hbstau||_2^2}  \right) \\
&\leq \Tr(\bsSigma) + (K - 2) \E \left( \frac{4 \hbstau^\tran \bsSigma \bsSigma_{\star} \hbstau }{|| \hbstau||_2^4} + \frac{(K - 2) \max_k \sigma_k^4 - 2\cdot\Tr(\bsSigma_{\star}^2)}{||\hbstau||_2^2} \right)
\end{align*}

We bound the first quadratic form directly. For any vector $\hbstau$,
\begin{align*}
\hbstau^\tran \bsSigma \bsSigma_{\star} \hbstau
&= \hbstau^\tran \bsSigma \big( \bsSigma_{\star} \hbstau \big) \\
&\le \| \hbstau \|_2 \, \big\| \bsSigma \big( \bsSigma_{\star} \hbstau \big) \big\|_2 \\
&\le \| \hbstau \|_2 \, \lambdamax \, \big\| \bsSigma_{\star} \hbstau \big\|_2 \\
&\le \lambdamax \, \big( \max_k \sigma_k^2 \big) \, \| \hbstau \|_2^2 .
\end{align*}

Plugging this in, we obtain
\begin{align*}
\mathcal{R}\left( \bsdeltad, \bstau \right) 
&\leq \Tr(\bsSigma) + (K - 2) \E \left(  \frac{4 \lambdamax \big( \max_k \sigma_k^2 \big)  + (K - 2) \max_k \sigma_k^4 - 2\cdot\Tr(\bsSigma_{\star}^2)}{||\hbstau||_2^2} \right)
\end{align*}
If the numerator is negative, then $\bsdeltad$ dominates $\hbstau$. This yields precisely the given condition. 

\end{proof}

\section{Proof of Lemma \ref{lemma:neymanAlloc}}\label{sec:proof_neymanAlloc}

\begin{proof}
The objective is convex, and an equal allocation is one example of a feasible interior point satisfying Slater's condition. Hence, we can use the standard approach of Lagrange Multipliers to find the optimum. The Lagrangian is given by

\begin{align*}
\mathcal{L}(n_0, \dots, n_K, \lambda) &= \frac{K V_0}{n_0} + \sum_{k = 1}^K \frac{V_k}{n_k} + \lambda \left( \sum_{k = 0}^K n_k - N \right),
\end{align*}
with derivatives 
\begin{align*}
\frac{\partial \mathcal{L}(n_0, \dots, n_K, \lambda) }{\partial n_0} &= - \frac{K V_0}{n_0^2} + \lambda,\\
\frac{\partial \mathcal{L}(n_0, \dots, n_K, \lambda) }{\partial n_k} &= - \frac{V_k}{n_k^2} + \lambda, \hspace{10mm} k = 1, \dots, K \\
\frac{\partial \mathcal{L}(n_0, \dots, n_K), \lambda }{\partial \lambda} &= \sum_{k = 0}^K n_k - N. 
\end{align*}
Setting these derivatives equal to 0, we obtain
\begin{equation}\label{eq:eqDeriv}
\lambda = \frac{K V_0}{n_0^2} = \frac{V_1}{n_1^2} = \dots = \frac{V_K}{n_K^2}.
\end{equation}
This implies that 
\begin{align*}
\sum_{k = 0}^K n_k = \frac{1}{\sqrt{\lambda}} \left( \sqrt{K V_0} + \sum_{k = 1}^K \sqrt{V_k} \right)  = N. 
\end{align*}
from which the result directly follows. 
\end{proof}

\section{Derivation of Approximation (\ref{eq:solApprox})}\label{appendix:solApprox}

Here, we use the notation $\bsSig(\bsn)$ and $\Delta(\bsn)$ to make clear that both these objects are functions of the sample sizes. We can write the Lagrangian for Problem \ref{eq:opt-prob-enhanced} as
\begin{align*}
\mathcal{L}(\bsn, \lambda) &= \Tr\big(\bsSig(\bsn)\big) + \Delta (\bsn) + \lambda \left( \boldsymbol 1^\top \bsn \right).
\end{align*}
with gradient 
\[ \nabla_{\bsn}\mathcal{L}(\bsn, \lambda) = \nabla_{\bsn} \Tr\big(\bsSig(\bsn)\big) + \nabla_{\bsn} \Delta (\bsn) + \lambda \boldsymbol 1^\top.
\]

We consider the behavior of the Lagrangian at $\bsns + \bsd$, a point that is ``local" to the Neyman allocation. Under the assumption that the regularization term and its curvature are small, we suppose 
\[ \left. \nabla_{\bsn} \Delta (\bsn) \right|_{\bsn=\bsns + \bsd} \approx \left. \nabla_{\bsn} \Delta (\bsns)\right|_{\bsn=\bsns}.\] 
Next, we Taylor expand the trace term around $\boldsymbol{n^\star}$, obtaining
\begin{align*}
\left.\nabla_{\bsn}\mathcal{L}(\bsn, \lambda)\right|_{\bsns + \bsd} & \approx \left. \nabla_{\bsn} \Tr\big(\bsSig(\bsn)\big)\right|_{\bsns} + \left. \nabla_{\bsn}^2 \Tr\big(\bsSig(\bsn)\big)\right|_{\bsns} \bsd +  \left. \nabla_{\bsn} \Delta (\bsn)\right|_{\bsns} + \lambda \boldsymbol 1^\top.
\end{align*}

The solution to Problem \ref{eq:opt-prob} is $\bsns$, with associated Lagrange multiplier $\lambda^\star$. The first-order condition to Problem \ref{eq:opt-prob} implies 
\begin{equation}\label{eq:foc_base}
\left. \nabla_{\bsn} \Tr\big(\bsSig(\bsn)\big)\right|_{\bsns} + \lambda^\star \boldsymbol 1^\top = \boldsymbol 0^\top.
\end{equation}
Denote the solution to Problem \ref{eq:opt-prob-enhanced} as $\bsnd = \bsns + \bsdd$, with associated Lagrange multiplier $\lambda^\dag$. The first-order condition for the  solution to Problem \ref{eq:opt-prob-enhanced} is
\begin{align*}
\left.\nabla_{\bsn}\mathcal{L}(\bsn, \lambda)\right|_{\bsns + \bsdd} \approx \left. \nabla_{\bsn} \Tr\big(\bsSig(\bsn)\big)\right|_{\bsns} + \left. \nabla_{\bsn}^2 \Tr\big(\bsSig(\bsn)\big)\right|_{\bsns} \bsdd +  \left. \nabla_{\bsn} \Delta (\bsn)\right|_{\bsns} + \lambda^\dag \boldsymbol 1^\top = \boldsymbol 0^\top.
\end{align*}
Plugging in Equation~\eqref{eq:foc_base}, we obtain 
\begin{align*}
\left. \nabla_{\bsn}^2 \Tr\big(\bsSig(\bsn)\big)\right|_{\bsns} \bsdd +  \left. \nabla_{\bsn} \Delta (\bsn)\right|_{\bsns} + \left(\lambda^\dag - \lambda^\star \right) \boldsymbol 1^\top &= \boldsymbol 0^\top. \\
\left. \nabla_{\bsn}^2 \Tr\big(\bsSig(\bsn)\big)\right|_{\bsns} \bsdd &= \left(\lambda^\star - \lambda^\dag \right) \boldsymbol 1^\top - \left. \nabla_{\bsn} \Delta (\bsn)\right|_{\bsns} \\
\bsdd &= \left( \left. \nabla_{\bsn}^2 \Tr\big(\bsSig(\bsn)\big)\right|_{\bsns} \right)^{-1} \bigg(\left(\lambda^\star - \lambda^\dag \right) \boldsymbol 1^\top - \left. \nabla_{\bsn} \Delta (\bsn)\right|_{\bsns} \bigg).
\end{align*}

Now, we consider the Hessian term. Recall that $\Tr\big(\bsSig(\bsn)\big) = \frac{KV_0}{n_0}+\sum_{k=1}^K\frac{V_k}{n_k}$. Differentiation yields
\[  \frac{\partial^2}{\partial n_0^2} \Tr\big(\bsSig(\bsn)\big) = \frac{2KV_0}{(n_0)^3} \hspace{5mm} \text{ and } \hspace{5mm} \frac{\partial^2}{\partial n_k^2} \Tr\big(\bsSig(\bsn)\big) = \frac{2V_k}{(n_k)^3}, \qquad k=1,\dots,K.\]
All mixed partial derivatives are  $0$. Plugging in the Neyman allocation solution, we obtain 
\[ \left( \left. \nabla_{\bsn}^2 \Tr\big(\bsSig(\bsn)\big) \right|_{\bsns} \right)^{-1} = \frac{1}{2\sna^2} \operatorname{diag}(n_0^\star,\dots,n_K^\star).\]
Thus, on a coordinate-wise basis, we obtain 
\begin{equation}\label{eq:dkIntermediate}
d_k^\dag = \frac{n_k^\star}{2\sna^2} \left[ \lambda^\star-\lambda^\dag - \left. \frac{\partial\Delta}{\partial n_k} \right|_{\bsns} \right].
\end{equation}

Next, we observe that $\boldsymbol 1^\top \bsdd = 0$ because both solutions respect the sample-size constraint. Hence, we can sum up Equation~\eqref{eq:dkIntermediate} over all the coordinates, to obtain 
\begin{align*}
0 = \sum_{k=0}^K d_k^\dag &= \frac{1}{2S_{\mathrm{NA}}^2} \sum_{k=0}^K n_k^\star \left[ \lambda^\star-\lambda^\dag - \left. \frac{\partial\Delta}{\partial n_k} \right|_{\bsns} \right] \\
\lambda^\star-\lambda^\dag &= \frac{1}{N} \sum_{j=0}^K n_j^\star \left. \frac{\partial\Delta}{\partial n_j} \right|_{\bsns}.
\end{align*}

Substituting this expression into Equation~\eqref{eq:dkIntermediate} and dividing by $n_k^\star$, we obtain 
\[ \frac{d_k}{n_k^\star} = \frac{1}{2 \sna^2} \left[ \frac{1}{N} \sum_{j=0}^K n_j^\star \left. \frac{\partial\Delta}{\partial n_j} \right|_{\bsns} - \left. \frac{\partial\Delta}{\partial n_k} \right|_{\bsn=\bsns} \right]. \]

\section{Proof of Lemma \ref{lemma:bockRiskZero}}\label{sec:proofBockRisk}

\begin{proof}
We begin by deriving the risk of $\bsdeltb$ at $\bstau = \boldsymbol 0$. Applying Lemma \ref{lemma:sure}, we obtain the risk expression 
\begin{equation}\label{eq:bockRisk}
\begin{aligned}
\mathcal{R}\left( \bsdeltb, \bstau \right) &= \Tr(\boldsymbol \Sigma) + \E \left( \frac{(\tilde p - 2)^2 ||\hbstau||_2^2}{(\hbstau^\tran \bsSig^{-1} \hbstau)^2}  - 
 \frac{2(\tilde p - 2)}{\hbstau^\tran \bsSig^{-1} \hbstau} \Tr(\bsSig) + \frac{4(\tilde p - 2)|| \hbstau ||_2^2}{(\hbstau^\tran \bsSig^{-1} \hbstau)^2} \right)  \\
&= \Tr(\boldsymbol \Sigma) + (\tilde p^2 - 4) \E \left( \frac{||\hbstau||_2^2}{(\hbstau^\tran \bsSig^{-1} \hbstau)^2} \right) - 2(\tilde p - 2)\Tr(\boldsymbol \Sigma) \E \left( \frac{1 }{\hbstau^\top \bsSig^{-1} \hbstau}  \right)  .
\end{aligned}
\end{equation}
If $||\bstau||_2^2 = 0$, then $\hbstau \sim \mathcal{N}(\boldsymbol 0, \bsSig)$. Define $\bsZ=\bsSig^{-1/2}\hbstau$, so $\bsZ\sim \mathcal N(\boldsymbol 0, \boldsymbol I)$. Then
\[ (\hbstau^\top \bsSig^{-1} \hbstau) = || \bsZ ||_2^2 \sim \chi^2_K. \] 
The inverse-$\chi^2$ moment yields 
\[ \e \left(\frac{1}{\hbstau^\top \bsSig^{-1}\hbstau}\right) =\frac{1}{K-2}.\]
To evaluate the remaining term, observe
\[ \frac{\|\hbstau\|_2^2}{(\hbstau^\top\bsSig^{-1}\hbstau)^2} = \frac{\bsZ^\top \bsSig \bsZ}{||\bsZ||_2^4} = \frac{||\bsZ||_2^2 \cdot \bsU^\top \bsSig \bsU}{||\bsZ||_2^4} = \frac{\bsU^\top \bsSig \bsU}{||\bsZ||_2^2} \]
where $\bsU=\bsZ/||\bsZ||_2$, such that $\bsU$ is uniform on the unit sphere and independent of $||\bsZ||_2$.
By symmetry, $\e(\bsU^\top\bsSig \bsU)=\Tr(\bsSig)/K$. Applying independence and the prior result, we obtain
\[ \e \left(\frac{\|\hbstau\|_2^2}{(\hbstau^\top\bsSig^{-1}\hbstau)^2}\right) = 
\frac{\mathrm{tr}(\bsSig)}{K(K-2)}.\]
Plugging these identities into Equation \ref{eq:bockRisk} yields 
\begin{equation}\label{eq:bockApproxRisk}
\mathcal{R}\left(\bsdeltb,\boldsymbol 0\right) = \Tr(\bsSig)\left( 1 -\frac{2(\tilde p-2)}{K} +\frac{(\tilde p-2)^2}{K(K-2)} \right),
\end{equation}
so the regularization term evaluates to 
\[ \Delta=\Tr(\bsSig)h(\tilde p), \hspace{5mm} \text{ where } \hspace{5mm} h(p)=-\frac{2(p-2)}{K}+\frac{(p-2)^2}{K(K-2)}. \]

Next, we find the gradient of this expression with respect to the sample sizes at the Neyman Allocation. By the Chain Rule, for $j = 0, \dots, K$,
\[ \frac{\partial\Delta}{\partial n_j}=h(\tilde p)\frac{\partial\Tr(\bsSig)}{\partial n_j}+\Tr(\bsSig)h'(\tilde p)\frac{\partial\tilde p}{\partial n_j}, \hspace{5mm} \text{ where } \hspace{5mm} h'(p)=\frac{2(p-K)}{K(K-2)}. \]
Denote as $\bsSigmaN$ the covariance matrix of $\hbstau$ at the Neyman allocation, which is given by 
\begin{equation}\label{eq:sigmaneyman}
\bsSigmaN = \sna \left( \text{diag}\left( \sqrt{V_1}, \dots, \sqrt{V_K} \right) + \sqrt{\frac{V_0}{K}} \boldsymbol{1} \boldsymbol{1}^\top\right), 
\end{equation}
where
\[ \sna = \frac{1}{N} \left( \sqrt{K V_0} + \sum_{k = 1}^K \sqrt{V_k} \right).\] 
At the Neyman allocation, for $k \in \{1, \dots, K\}$, 
\[ \left.\frac{\partial\Tr(\bsSig)}{\partial n_0}\right|_{\bsn=\bsn^\star}=-\frac{KV_0}{(n_0^\star)^2}=-\sna^2 \hspace{5mm} \text{ and } \hspace{5mm} \left.\frac{\partial\Tr(\bsSig)}{\partial n_k}\right|_{\bsn=\bsn^\star}=-\frac{V_k}{(n_k^\star)^2}=-\sna^2, \]
Also denote as $\bsnu = (\nu_1, \dots, \nu_K)^\top$ the dominant eigenvector associated with $\lambdamaxn$, scaled such that $||\bsnu||_2^2 = 1$. Then

\[ \left.\frac{\partial\tilde p}{\partial n_0}\right|_{\bsn=\bsn^\star}=\sna^2\frac{\frac{\Tr(\bsSigmaN)}{K}(\mathbbm 1^\tran\bsnu)^2-\lambdamaxn}{\lambdamaxn^2} \]
and
\[ \left.\frac{\partial\tilde p}{\partial n_k}\right|_{\bsn=\bsn^\star}=\sna^2\frac{\Tr(\bsSigmaN)\nu_k^2-\lambdamaxn}{\lambdamaxn^2}, \hspace{5mm} k=1,\dots,K, \]
where we have used the fact that, at the Neyman allocation, 
\[ \frac{K V_0}{(n_0^\star)^2} = \frac{V_1}{(n_1^\star)^2} = \dots = \frac{V_K}{(n_K^\star)^2} = \sna^2\]
(see the Appendix, Section \ref{sec:proof_neymanAlloc} for more details).

Plugging into the Chain Rule, we obtain
\[ \left.\frac{\partial\Delta}{\partial n_0}\right|_{\bsn=\bsn^\star}=\sna^2\left[-h(\tilde p^\star)+\frac{\Tr(\bsSigmaN)h'(\tilde p^\star)}{\lambdamaxn^2}\left\{\frac{\Tr(\bsSigmaN)}{K}(\mathbbm 1^\tran\bsnu)^2-\lambdamaxn\right\}\right] \]
and
\[ \left.\frac{\partial\Delta}{\partial n_k}\right|_{\bsn=\bsn^\star}=\sna^2\left[-h(\tilde p^\star)+\frac{\Tr(\bsSigmaN)h'(\tilde p^\star)}{\lambdamaxn^2}\left\{\Tr(\bsSigmaN)\nu_k^2-\lambdamaxn\right\}\right], \hspace{5mm} k=1,\dots,K. \]
where
\[ \tilde p^\star=\frac{\Tr(\bsSigmaN)}{\lambdamaxn}. \]

The coefficient multiplying the bracketed terms must be negative because $\tilde p^\star < K$ and thus $h'(\tilde p^\star) < 0$. Thus, we see that the derivative is strictly decreasing in $\nu_k^2$. By the Bunch-Nielsen-Sorensen formula \citep{bunch1978rank}, $\bsnu$ will have $k^{th}$ entry such that 
\[\nu_k \propto \frac{1}{\lambdamaxn/\sna-\sqrt{V_k}}.\] 
Hence, larger values of $V_k$ will correspond to larger values of $\nu_k^2$. This establishes the claimed ordering of the derivatives among the active treatment arms.

Next, we invoke one additional claim.
\begin{claim}\label{claim:bockMinAlloc}
Under the condition
\[ \left(\max_{k\geq1}\sqrt{V_k}-\min_{k\geq1}\sqrt{V_k} \right) \leq \frac{1}{2}\sqrt{KV_0}, \]
the dominant eigenvector satisfies
\[ \frac{(\mathbbm 1^\tran\bsnu)^2}{K}>\max_{k\geq1}\nu_k^2. \]
\end{claim}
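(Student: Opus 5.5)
The plan is to reduce the claim to a scalar inequality on the dominant eigenvalue, using the secular equation behind the Bunch--Nielsen--Sorensen formula already invoked above.

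\textbf{Setup.} By Equation~\eqref{eq:sigmaneyman}, $\bsSigmaN/\sna = D + c\,\boldsymbol 1\boldsymbol 1^\top$, where $D=\mathrm{diag}(s_1,\dots,s_K)$, $s_k=\sqrt{V_k}$ and $c=\sqrt{V_0/K}$. Positive scaling changes neither eigenvectors nor the ratio in the claim, so I work with $M=D+c\boldsymbol 1\boldsymbol 1^\top$ and its top eigenvalue $\lambda$. Since $c>0$, $\lambda>s_{\max}$. The eigen-equation gives
\[
(\lambda-s_k)\nu_k = c\,(\boldsymbol 1^\top\bsnu),
\]
so $\nu_k\propto w_k:=1/(\lambda-s_k)>0$. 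Because $\boldsymbol 1^\top\bsnu\neq0$, summing the eigen-equation yields the secular equation $\sum_k w_k = 1/c = \sqrt{K/V_0}$.

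\textbf{Reduction.} The ratio in the claim is scale invariant, so it becomes $(\sum_k w_k)^2/K > \max_k w_k^2$. All $w_k$ are positive, so this is equivalent to $\sum_k w_k>\sqrt K\,w_{\max}$. Substituting the secular equation, it reads $\sqrt{K/V_0}>\sqrt K/(\lambda-s_{\max})$, that is,
\[
\lambda - s_{\max} > \sqrt{V_0}.
\]
So the whole claim comes down to a lower bound on the spectral gap between $\lambda$ and the largest diagonal entry.

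\textbf{Lower bound on $\lambda$ (the main step).} I would use the Rayleigh quotient at $\boldsymbol 1/\sqrt K$:
\[
\lambda \ \ge\ \bar s + Kc \;=\; \bar s+\sqrt{KV_0}, \qquad \bar s = K^{-1}\textstyle\sum_k s_k .
\]
It remains to control $s_{\max}-\bar s$. The crude bound $s_{\max}-\bar s\le s_{\max}-s_{\min}$ only suffices for $K>4$, so I would use the sharper bound
\[
s_{\max}-\bar s\ \le\ \tfrac{K-1}{K}\,(s_{\max}-s_{\min})\ \le\ \tfrac{K-1}{2K}\sqrt{KV_0},
\]
where the second inequality is hypothesis~\eqref{eq:riskConditionBock}. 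Combining the two displays gives
\[
\lambda-s_{\max}\ \ge\ \sqrt{KV_0}\,\frac{K+1}{2K} \;=\; \sqrt{V_0}\,\frac{K+1}{2\sqrt K}.
\]
Finally, $\frac{K+1}{2\sqrt K}>1$ is equivalent to $(\sqrt K-1)^2>0$, which holds for every $K>2$. This gives the strict inequality and proves the claim.

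I expect the only delicate point to be the middle step. One has to notice that the weaker bound via $s_{\max}-s_{\min}$ fails for small $K$, and that the $(K-1)/K$ factor exactly rescues every $K\ge 2$.
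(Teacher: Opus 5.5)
Your proof is correct, but its central step differs from the paper's. The paper never uses the secular equation. From the Bunch--Nielsen--Sorensen form $\nu_k\propto(\lambdamaxn/\sna-\sqrt{V_k})^{-1}$, together with the same Rayleigh-quotient bound $\lambdamaxn/\sna\ge\sqrt{KV_0}+\frac{1}{K}\sum_k\sqrt{V_k}$ (which the paper uses without stating it), it shows that $\max_k\nu_k/\min_k\nu_k\le 2$. It then applies the counting bound $\boldsymbol 1^\top\bsnu\ge(K-1)\min_k\nu_k+\max_k\nu_k\ge\frac{K+1}{2}\max_k\nu_k$.

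The paper in fact uses exactly the crude estimate $\frac{1}{K}\sum_k\sqrt{V_k}\ge\min_k\sqrt{V_k}$ that you discard. It stays valid for $K\le 4$ because the counting step accounts for the fact that not every entry of $\bsnu$ can sit at its maximum; your factor $(K-1)/K$ does the same job. The two routes even reach the same constant. Via $\sum_k w_k=\sqrt{K/V_0}$, your gap bound $\lambda-s_{\max}\ge\frac{K+1}{2\sqrt K}\sqrt{V_0}$ is equivalent to $(\boldsymbol 1^\top\bsnu)^2/K\ge\frac{(K+1)^2}{4K}\max_k\nu_k^2$, which is the paper's final inequality.

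Your route buys an exact reformulation. With your $\lambda=\lambdamaxn/\sna$, the claim holds if and only if $\lambda-\max_k\sqrt{V_k}>\sqrt{V_0}$. This shows how much slack the hypothesis has: for instance, $\max_k\sqrt{V_k}-\frac{1}{K}\sum_k\sqrt{V_k}<(\sqrt K-1)\sqrt{V_0}$ already suffices, and this is strictly weaker than the paper's range condition. Along the way you also verify $\lambda>s_{\max}$ and $\boldsymbol 1^\top\bsnu\neq 0$, which the paper's use of the formula silently assumes. The paper's route, in turn, isolates a standalone structural fact, that the entries of $\bsnu$ differ by at most a factor of two, without needing the normalization identity.
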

\noindent The proof is provided in Appendix Section~\ref{sec:proof_bockMinAlloc}.

Now, observe that the two derivative expressions differ only through the bracketed terms, and that the common pre-factor for the bracketed terms must be negative because $\tilde p^\star < K$ and thus $h'(\tilde p^\star) < 0$. By Claim~\ref{claim:bockMinAlloc}, the bracketed term is largest for the control arm, and thus 
\[ \left.\frac{\partial\Delta}{\partial n_0}\right|_{\bsn=\bsn^\star}<\min_{k\geq1}\left.\frac{\partial\Delta}{\partial n_k}\right|_{\bsn=\bsn^\star} \]
when $\bstau=\boldsymbol 0$.

Lastly, the derivatives of the general regularization term $\Delta$
with respect to $\bsn$ are continuous in $\bstau$ in a neighborhood of
$\boldsymbol 0$. Because the inequalities are strict at $\bstau=\boldsymbol 0$,
there exists an $\epsilon>0$ such that the same derivative ordering holds whenever
\[ \bstau^\tran\bsSigmaN^{-1}\bstau < \epsilon.\]
\end{proof}

\section{Proof of Claim \ref{claim:bockMinAlloc}}\label{sec:proof_bockMinAlloc}

\begin{proof}

Under the heteroscedasticity condition on the potential outcomes (Condition \ref{eq:riskConditionBock}) we have
\[ \left( \max_{k \geq 1}\sqrt{V_k}-\min_{k \geq 1}\sqrt{V_k} \right) \leq \frac{1}{2} \sqrt{K V_0}. \]
By the Bunch-Nielsen-Sorensen formula \citep{bunch1978rank}, the dominant eigenvector $\bsnu$ will have $k^{th}$ entry proportional to $\left(\lambdamaxn/\sna-\sqrt{V_k}\right)^{-1}$. Condition \ref{eq:riskConditionBock} induces a simple bound on the ratio between the largest and smallest entries of $\bsnu$:
\begin{align*}
\frac{\max_k \nu_k}{\min_k \nu_k} &= \frac{\lambdamaxn/\sna - \min_{k \geq 1} \sqrt{V_k}}{\lambdamaxn/\sna - \max_{k \geq 1} \sqrt{V_k}} = 1 + \frac{\max_{k \geq 1} \sqrt{V_k} - \min_{k \geq 1} \sqrt{V_k}}{\lambdamaxn/\sna - \max_{k \geq 1} \sqrt{V_k}} \\
& \leq 1 + \frac{\frac{1}{2} \sqrt{K V_0}}{\sqrt{K V_0} + \left( \frac{1}{K} \sum_{k = 1}^K \sqrt{V_k} - \max_{k \geq 1 } \sqrt{V_k}\right)} \\
& \leq 1 + \frac{\frac{1}{2} \sqrt{K V_0}}{\sqrt{K V_0} + \left( \min_{k \geq 1} \sqrt{V_k} - \max_{k \geq 1 } \sqrt{V_k}\right)} \\
& \leq 1 + \frac{\frac{1}{2} \sqrt{K V_0}}{\sqrt{KV_0} - \frac{1}{2} \sqrt{K V_0}} = 2.
\end{align*}

If the entries of the dominant eigenvector $\bsnu$ cannot vary by more than a factor of 2, this implies:
\begin{align*}
 \frac{(\mathbbm{1}^T \bsnu)^2}{K} &\geq \frac{1}{K} \left((K - 1) \min_k \nu_k + \max_k \nu_k \right)^2 \\
 &\geq \frac{1}{K} \left(\frac{K - 1}{2} \max_k \nu_k + \max_k \nu_k \right)^2 \\
 &= \frac{1}{K} \left(\frac{K + 1}{2} \max_k \nu_k \right)^2 \\
 &= \frac{(K + 1)^2}{4K} \max_k \nu_k^2\\
 & > \max_k \nu_k^2,
\end{align*}
where the last line follows from the fact that the minimizing value of $(K + 1)^2/4K$ over $K \geq 2$ positive integers is $1.125$. Hence, this directly implies
\[ \frac{(\mathbbm 1^\tran\bsnu)^2}{K}>\max_{k\geq1}\nu_k^2. \]
\end{proof}

\section{Heuristic risk approximations for $\bsdelts$}\label{sec:riskApprox}

The SURE-minimizing estimator $\bsdelts$ does not admit a closed-form risk expression when $\bstau = 0$. Hence, we rely on a sequence of approximations to characterize the behavior of risk-minimizing allocations.

Applying Lemma~\ref{lemma:sure} yields
\begin{equation}\label{eq:riskStatementSteinMin}
\begin{aligned}
\mathcal{R}\left( \bsdelts, \bstau \right)
&=
\Tr(\bsSigma)
+
4 \Tr(\bsSig)\,
\e \left( \frac{\hbstau^\tran \bsSig \hbstau}{\|\hbstau\|_2^4} \right)
-
\Tr(\bsSig)^2\,
\e \left( \frac{1}{\|\hbstau\|_2^2} \right).
\end{aligned}
\end{equation}

To obtain a usable approximation, we approximate the Gaussian quadratic forms
appearing in Equation~\eqref{eq:riskStatementSteinMin} using the moment-matching
approach of \cite{satterthwaite1946approximate}. Namely, we approximate
\[
\|\hbstau\|_2^2 \sim \text{Gamma}\!\left( \frac{m^2}{v}, \frac{v}{m} \right),
\]
where
\[
m := \e\!\left( \|\hbstau\|_2^2 \right)
= \Tr(\bsSig) + \|\bstau\|_2^2,
\qquad
v := \var\!\left( \|\hbstau\|_2^2 \right)
= 2\Tr(\bsSig^2) + 4\,\bstau^\tran \bsSig \bstau .
\]
Via this approximation, we obtain a tractable expression for
\[
\e \left( \frac{1}{\|\hbstau\|_2^2} \right)
\;\approx\;
\frac{\Tr(\bsSig) + \|\bstau\|_2^2}
{\big(\Tr(\bsSig) + \|\bstau\|_2^2\big)^2
- 2\Tr(\bsSig^2) - 4\bstau^\tran \bsSig \bstau}.
\]

To approximate the ratio expectation, we employ the crude
approximation
\[
\hbstau^\tran \bsSig \hbstau
\;\approx\;
\frac{\e\!\left( \hbstau^\tran \bsSig \hbstau \right)}
{\e\!\left( \|\hbstau\|_2^2 \right)}\,
\|\hbstau\|_2^2,
\]
which holds exactly when $\bsSig$ is proportional to the identity. Using
\[
\e\!\left( \hbstau^\tran \bsSig \hbstau \right)
= \Tr(\bsSig^2) + \bstau^\tran \bsSig \bstau,
\]
this yields the approximation
\[
\e \left( \frac{\hbstau^\tran \bsSig \hbstau}{\|\hbstau\|_2^4} \right)
\;\approx\;
\frac{\Tr(\bsSig^2) + \bstau^\tran \bsSig \bstau}
{\Tr(\bsSig) + \|\bstau\|_2^2}\,
\e \left( \frac{1}{\|\hbstau\|_2^2} \right).
\]

Substituting these approximations into
Equation~\ref{eq:riskStatementSteinMin} yields a closed-form approximation to the
risk. In the low signal-to-noise regime, we treat $\|\bstau\|_2^2$ and
$\bstau^\tran \bsSig \bstau$ as small relative to $\Tr(\bsSig)$ and
$\Tr(\bsSig^2)$, and retain only the leading terms obtained by setting
$\bstau=\boldsymbol 0$ inside the approximations. This yields
\[ \mathcal R(\bsdelts,\bstau) \approx \Tr(\bsSig) \left( 1 + \frac{4\Tr(\bsSig^2)-\Tr(\bsSig)^2}{\Tr(\bsSig)^2-2\Tr(\bsSig^2)} \right).\]

\section{Simulation design}\label{sec:sim_design}

An experiment is a fixed regime of simulation parameters. Within each experiment, run multiple iterations, where each iteration has a particular set of fixed potential outcomes. Assume the following data-generating process:

\begin{itemize}
    \item $\mu_0$ is the mean of the control treatment arm.
    \item $\bstau$ is the vector of treatment effects, $\{\tau_1, \dots \tau_K\}$.
    \item Then $\bsmu = \mu_0 + \bstau = \begin{pmatrix} \mu_0 \\ \mu_0 + \tau_1 \\ \vdots \\ \mu_0 + \tau_{K} \\ \end{pmatrix}$ is the vector of potential outcome means.
    \item $V_0$ is the variance of the control treatment arm, and $V_1, \dots, V_K$ are the variances of the active treatment arms.
    \item $\bsOmega$ is a $K \times K$ diagonal matrix and is the true covariance of the potential outcomes, and is a function of the variances $V_0, \dots, V_K$. We assume $\cov(\mathbf{Y}_i, \mathbf{Y}_j) = 0$.
    \item Define $\bsY_i = \{Y_i(0), Y_i(1), \dots Y_i(K)\}$ as the vector of potential outcomes for unit $i$.
\end{itemize}

Then the potential outcomes for each unit $\bsY_i$ are generated:
\[ \mathbf{Y}_i \sim \mathcal{N}_{K + 1}\left(\bsmu,\bsOmega \right).\]

For each experiment, set fixed the data-generating process for the potential outcomes. The DGP is based on whether $\bstau$ is sparse or dense, the control-variance regime, and the signal-to-noise ratio $\kappa$.

\begin{itemize}
    \item \texttt{K}: number of \emph{active} treatments. Total number of arms = $K + 1$ ($K$ active arms + 1 control arm).
    \item  \texttt{Omega}: true (diagonal) covariance matrix of potential outcomes. Active treatment arm variances are sampled iid from a log normal distribution with log-location parameter of $\log(350)$ and log-standard-deviation parameter $0.60$. Then:
    \begin{itemize}
        \item if in the low control-variance regime, $V_0$ is set to half of the average potential outcome variance of the treatment arm, i.e. $V_0 = \tfrac{1}{2}\cdot K^{-1}\sum_{k=1}^K V_k$.
        \item if in the high control-variance regime, $V_0$ is set to four times the average variance of the treatment arm, i.e. $V_0 = 4\cdot K^{-1}\sum_{k=1}^K V_k$.
    \end{itemize} 
    \item \texttt{tau}: the $K$-vector of treatment effects.  
    \begin{itemize}
        \item if in the ``dense" setting, an unscaled $\tilde{\bstau}$ is sampled i.i.d. from a $\text{Unif}([1, 2])$ distribution, resulting in all causal effects having comparable magnitude. 
        \item if in the ``sparse" setting, $\tilde{\bstau} = (1, \dots, 0)$ has only one nonzero entry, which corresponds to the first arm.
        \item effects are then scaled to match the desired value of $\kappa$, the signal-to-noise ratio. 
        \item \texttt{mu0} is set to $0$ so that there is a fixed mapping from \texttt{tau} to \texttt{mu} (the $K+1$ vector of arm means).
    \end{itemize}
    \item \texttt{N}: number of observations in one iteration.
\end{itemize}

Also set fixed the sequential experimental design procedure:
\begin{itemize}
    \item \texttt{b1}: initial batch size. How many observations are used to get initial estimates of mean and variance before risk minimization or other adaptive algorithm starts up. Set to 100, a rough upper bound to the $\sqrt{N}$ suggestion from \cite{zhao2023adaptive}.
    \item Choose a risk function, from the following options: SURE-min, Bock, Dimmery, Neyman. Alternatively, you can follow complete randomization, where instead of minimizing the risk at each step, you just sequentially iterate through the different treatment arms. The Neyman risk minimizes the variance by minimizing the sum of the diagonal of the covariance matrix of the treatment effects, $\hat{\bsSig}_i$ (see below).
    \item Choose a list of estimators, from the following options: difference in means, SURE-min, Bock, Dimmery.
\end{itemize}

Finally, fix \texttt{n.iters}, the number of iterations for the experiment.

Note that there are two components that can be chosen independently: the risk function, and the estimator. For example, you could do sequential assignment using the Bock risk function, but then after the experiment is over, you could estimate the treatment effects without shrinkage using the difference-in-means estimator. 

Within each iteration:
\begin{itemize}
    \item Generate \texttt{Y}, the matrix of potential outcomes, a $N \times (K + 1)$ matrix.
    \item At each step $i = 1, \dots N$
    \begin{itemize}
        \item If $i \leq b_1$, then assign $W_i$ sequentially.
        \item If $i > b_1$, then:
        \begin{itemize}
            \item Estimate the running mean of each arm, $\hat{\mu}_{k, i}$ for $k = 0, \dots, K$ (e.g. running mean includes observations up to $i-1$).
            \item Estimate the running variance of each arm, $\hat{V}_{k, i}$ for $k = 0, \dots, K$.
            \item Estimate the running treatment effects using the difference-in-means estimator, $\hbstau_{i}$.
            \item Estimate the running covariance matrix of the treatment effects, $\hat{\bsSig}_i$.
            \item Using the chosen risk function for the experiment, calculate what the estimated risk would be if the next unit were to be assigned to each arm. Define $\hat{\mathcal{R}}_{k,i}$ as the risk if the next unit $i$ is assigned to arm $k$.  Recall that the risk $\hat{\mathcal{R}}_{k,i}$ is the average expected loss over all arms.  Calculate $\hat{\mathcal{R}}_{k,i}$ for $k = 0, \dots, K$ (using $\hbstau_i$ and $\hat{\bsSig}_i$). 
            \item Choose the arm with the smallest risk $\hat{\mathcal{R}}_{k,i}$ and assign $W_i = \underset{k \in 0, \dots, K}{\operatorname{argmin}} \ \hat{\mathcal{R}}_{k,i}$.
            \item Record the outcome $Y_i^{obs} = Y_i(W_i)$.
        \end{itemize}
    \end{itemize}
    \item At the end of the iteration, we now have:
        \begin{itemize}
        \item An N-vector of observed outcomes $\bsY^{obs}$.
        \item An N-vector of allocated treatments $\bsW$.
        \item A list of N values of $\hbstau_{i}$ and $\hat{\bsSig}_i$.
        \end{itemize}
\end{itemize}

To evaluate the experiment:
\begin{itemize}
    \item First evaluate each iteration:
    \begin{itemize}
        \item For each step $i$, compute the vector of $K$ MSE values of each estimator. For example, the MSE for $\bsdeltb_i$ is $(\hat{\delta}_{B, k, i}- \tau_k)^2$ for $k = 1, \dots K$.
        \item For each step $i$, calculate the ``compound'' MSE, which is the sum of the $K$ MSE values.
        \item Track trajectories of compound MSE as units arrive.
    \end{itemize}
    \item Across iterations:
    \begin{itemize}
        \item Calculate the ``mean trajectory'' by averaging the compound MSE values across iterations, for each unit.
        \item We now have the average compound MSE for each estimator for each unit.
        \item Then we plot the trajectories for each estimator.
    \end{itemize}
\end{itemize}

Finally, we plot multiple experiments, with the same parameter DGP but different risk functions, on the same panel of plots, to evaluate which risk function and estimators produced the lowest MSE.

\section{Additional simulation results\label{sec:sim_add_results}}

\begin{table}[H]
\centering
\begin{threeparttable}
\caption{Summary of simulation performance of adaptive treatment allocation strategies for $K = 9$}
\label{tab:adaptiveSimsK9}

\small
\setlength{\tabcolsep}{4pt}

\begin{tabular}{rlll|rrrrr|rrrrr}
\toprule
 &  &  & 
& \multicolumn{5}{c}{MSE at Unit $1000$}
& \multicolumn{5}{c}{Mean MSE over Unit $\geq 1000$} \\
\cmidrule(lr){5-9} \cmidrule(lr){10-14}
$K$& $V_0$ & $\bstau$ & $\kappa$ & CR & Neyman & Bock & SURE-min & Dimmery
& CR & Neyman & Bock & SURE-min & Dimmery \\
\midrule
9 & high & dense & 0 & 102.0 & 61.5 & 55.1 & 6.0 & \textbf{4.7} & 46.2 & 28.1 & 25.3 & 2.4 & \textbf{2.1}\\
9 & high & sparse & 0 & 90.6 & 60.8 & 57.8 & 4.2 & \textbf{3.4} & 40.5 & 27.6 & 26.7 & 1.9 & \textbf{1.5}\\
9 & high & dense & 2 & 103.8 & 65.5 & 58.4 & 13.6 & \textbf{13.1} & 46.8 & 30.2 & 26.5 & 10.2 & \textbf{10.1}\\
9 & high & sparse & 2 & 85.3 & 74.4 & 51.3 & 7.2 & \textbf{7.0} & 39.9 & 34.8 & 23.2 & 4.5 & \textbf{4.1}\\
9 & high & dense & 4 & 94.1 & 56.9 & 56.3 & \textbf{18.5} & 20.8 & 44.9 & 26.3 & 25.9 & \textbf{13.3} & 16.3\\
9 & high & sparse & 4 & 86.9 & 44.2 & 42.5 & 8.1 & \textbf{6.8} & 41.4 & 20.6 & 19.0 & 5.6 & \textbf{5.1}\\
\midrule
9 & low & dense & 0 & 49.7 & 40.7 & 31.8 & 2.6 & \textbf{1.8} & 22.8 & 18.5 & 13.8 & 1.2 & \textbf{0.8}\\
9 & low & sparse & 0 & 47.3 & 39.8 & 28.7 & \textbf{1.9} & 2.0 & 21.3 & 18.2 & 12.1 & 0.9 & \textbf{0.9}\\
9 & low & dense & 2 & 49.6 & 34.9 & 25.9 & \textbf{5.2} & 5.9 & 22.6 & 16.3 & 11.9 & \textbf{3.9} & 4.8\\
9 & low & sparse & 2 & 40.6 & 47.9 & 27.5 & 3.7 & \textbf{3.6} & 18.7 & 21.5 & 12.4 & \textbf{2.3} & 2.4\\
9 & low & dense & 4 & 37.1 & 30.9 & 24.1 & 8.1 & \textbf{7.9} & 16.7 & 14.3 & 11.4 & \textbf{6.3} & 6.5\\
9 & low & sparse & 4 & 48.0 & 43.2 & 28.2 & 6.3 & \textbf{5.7} & 22.0 & 19.6 & 13.1 & 4.9 & \textbf{4.1}\\
\bottomrule
\end{tabular}
\end{threeparttable}
\end{table}

\begin{figure}[H]
    \centering
    \includegraphics[width=0.9\linewidth]{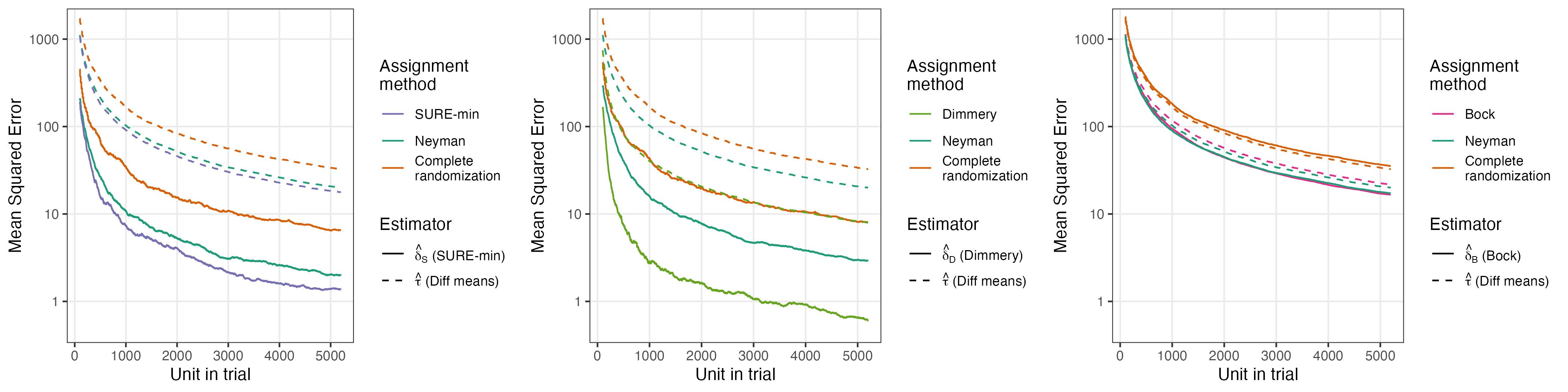}
    \caption{$K = 12, \kappa=0$.}
    \label{fig:K12-kappa0}
    \includegraphics[width=0.9\linewidth]{figures/rm_regime_K-6_V-high_Kappa-2_Tau-sparse_panel.png}
    \caption{$K = 12, \bstau$ sparse, $\kappa=2$.}
    \label{fig:K12-sparse-kappa2}
    \includegraphics[width=0.9\linewidth]{figures/rm_regime_K-6_V-high_Kappa-2_Tau-dense_panel.png}
    \caption{$K = 12, \bstau$ dense, $\kappa=2$.}
    \label{fig:K12-dense-kappa2}
\end{figure}

\begin{figure}[H]
    \centering
    \includegraphics[width=0.9\linewidth]{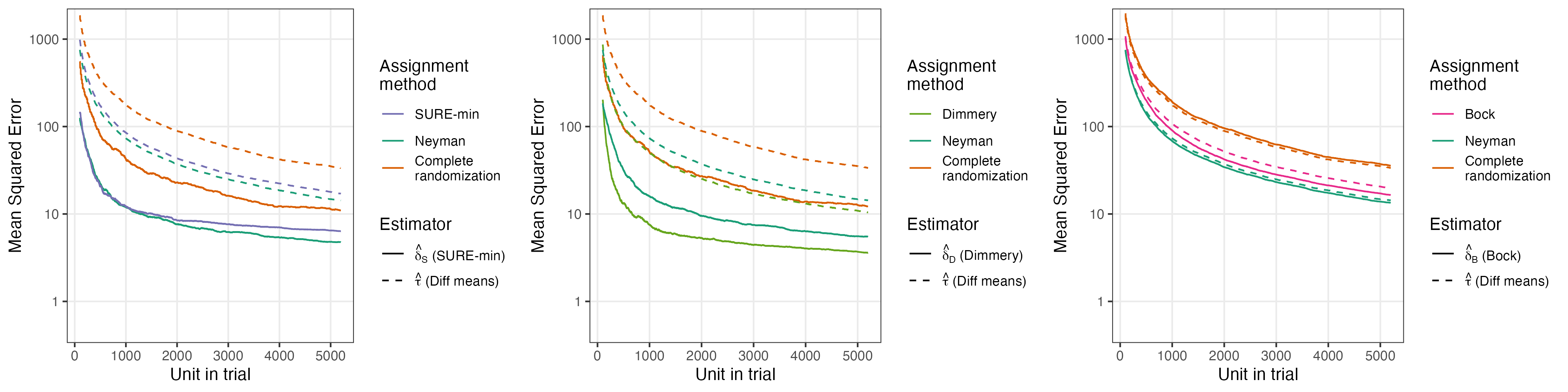}
    \caption{$K = 12, \bstau$ sparse, $\kappa=4$.}
    \label{fig:K12-sparse-kappa4}
    \includegraphics[width=0.9\linewidth]{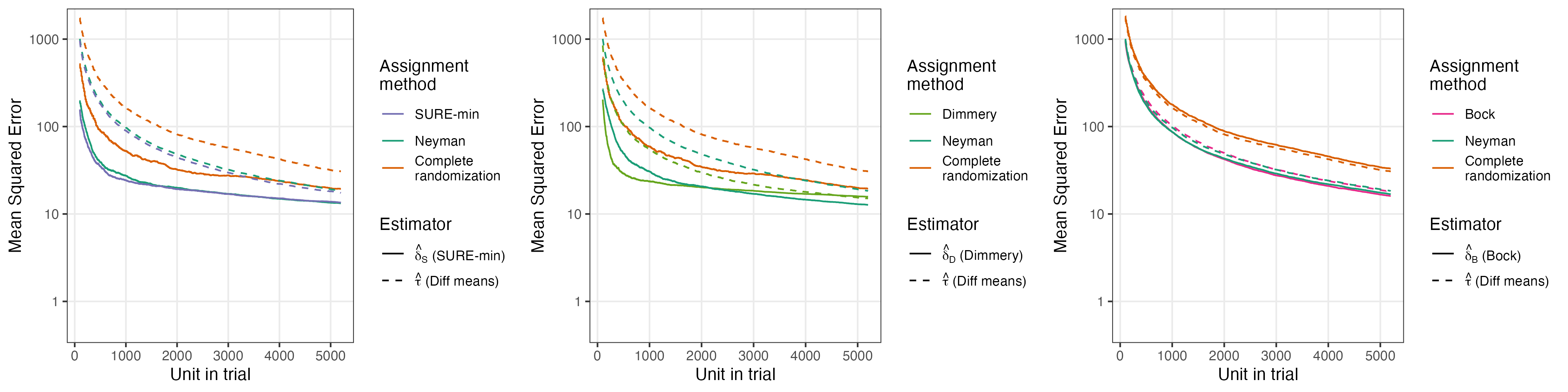}
    \caption{$K = 12, \bstau$ dense, $\kappa=4$.}
    \label{fig:K12-dense-kappa4}
\end{figure}

\begin{figure}[H]
    \centering
    \includegraphics[width=0.9\linewidth]{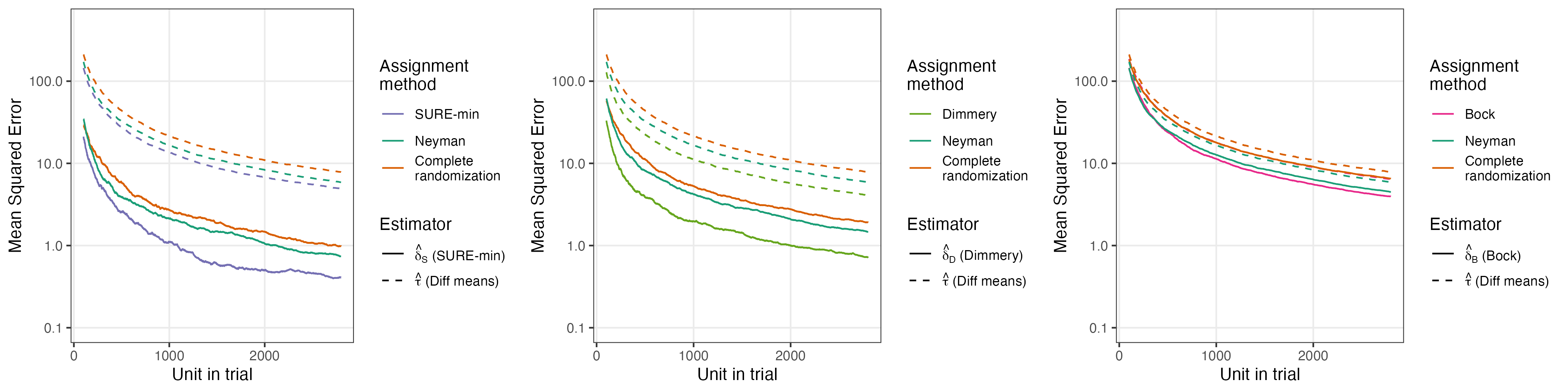}
    \caption{$K = 6$, low-variance control regime, $\kappa=0$.}
    \label{fig:K6-low-kappa0}
    \includegraphics[width=0.9\linewidth]{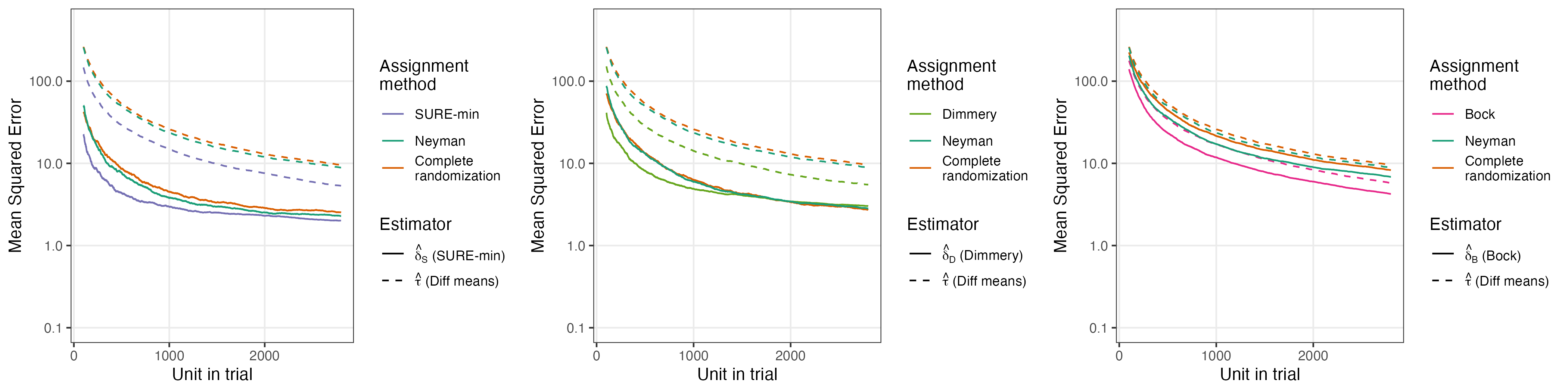}
    \caption{$K = 6$, low-variance control regime, $\bstau$ sparse, $\kappa=2$.}
    \label{fig:K6-low-sparse-kappa2}
    \includegraphics[width=0.9\linewidth]{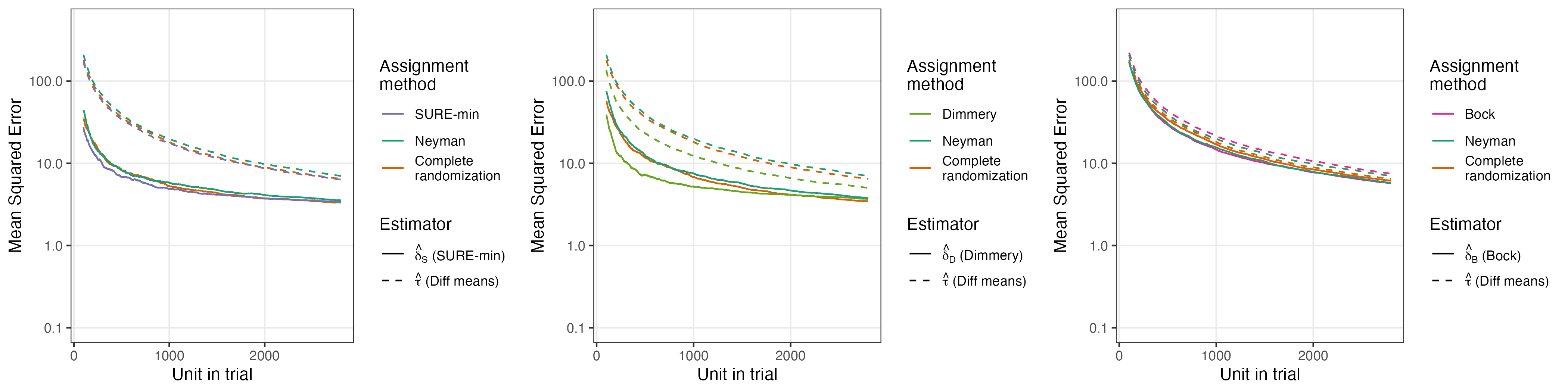}
    \caption{$K = 6$, low-variance control regime, $\bstau$ dense, $\kappa = 2$.}
    \label{fig:K6-low-dense-kappa2}
\end{figure}

\begin{figure}[H]
    \centering
    \includegraphics[width=0.9\linewidth]{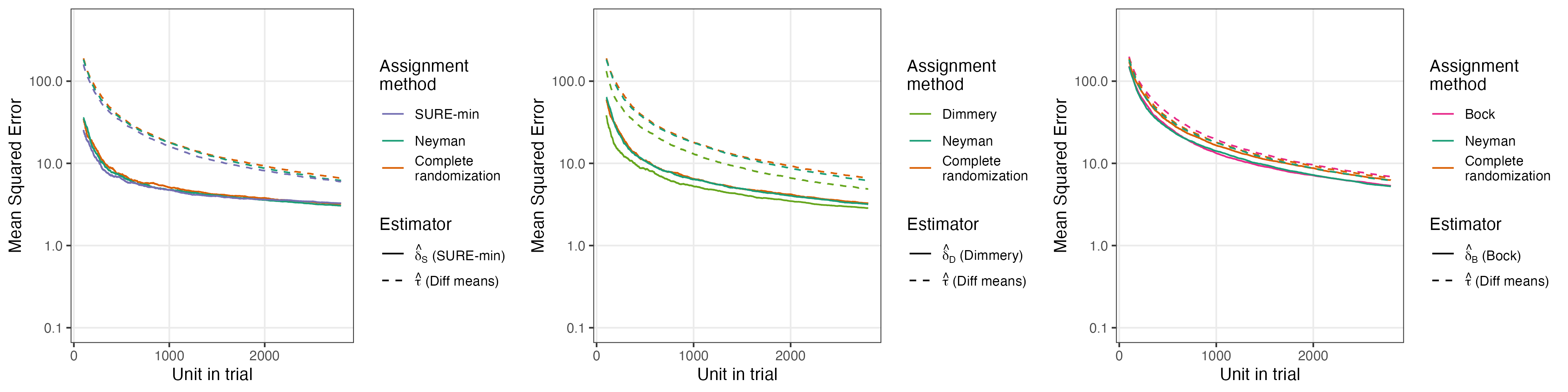}
    \caption{$K = 6$, low-variance control regime, $\bstau$ sparse, $\kappa = 4$.}
    \label{fig:K6-low-sparse-kappa4}
    \includegraphics[width=0.9\linewidth]{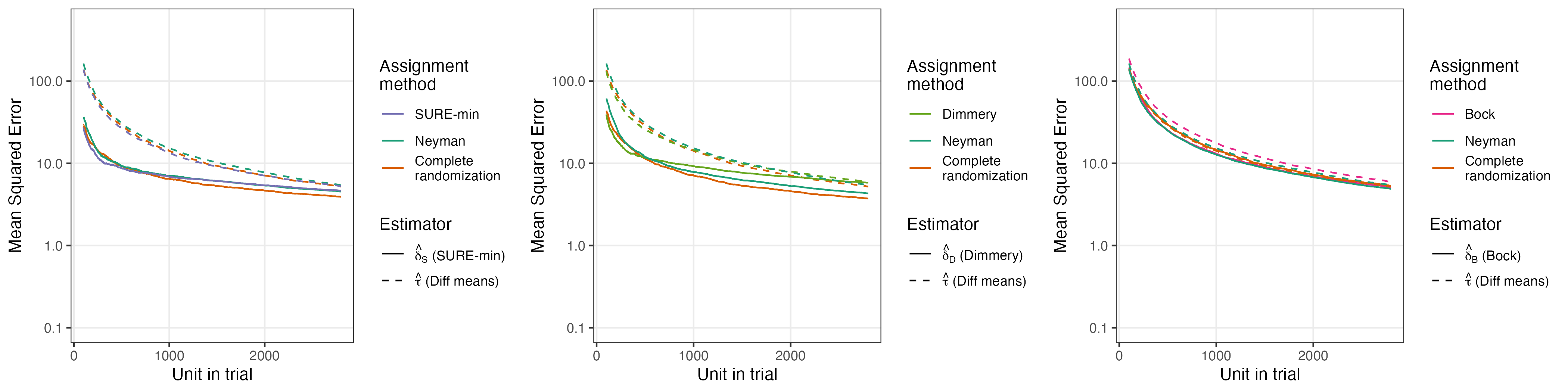}
    \caption{$K = 6$, low-variance control regime, $\bstau$ dense, $\kappa = 4$.}
    \label{fig:K6-low-dense-kappa4}
\end{figure}

\begin{figure}[H]
    \centering
    \includegraphics[width=0.9\linewidth]{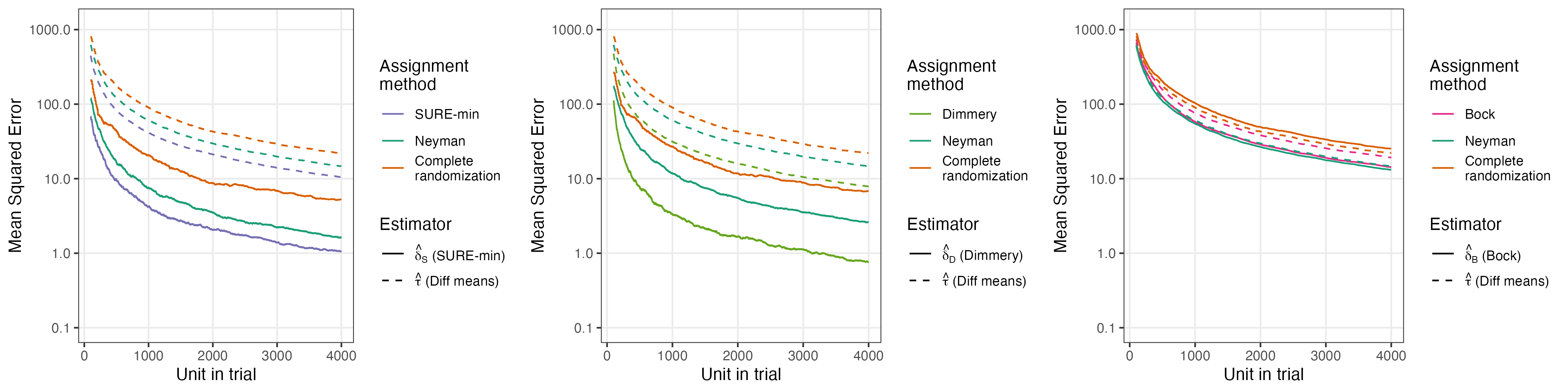}
    \caption{$K = 9$, high-variance control regime, $\kappa=0$.}
    \label{fig:K9-kappa0}
    \includegraphics[width=0.9\linewidth]{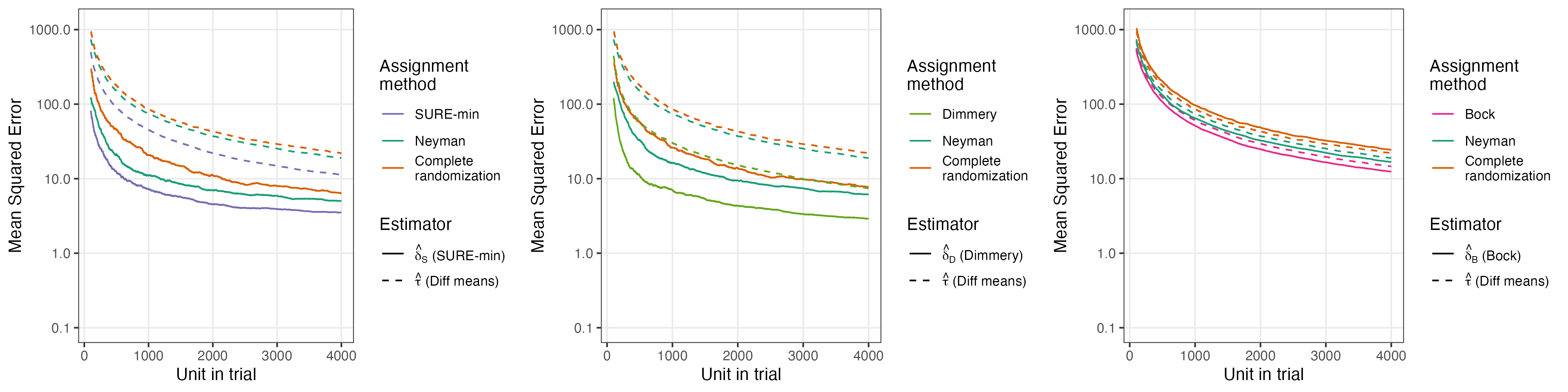}
    \caption{$K = 9$, high-variance control regime, $\bstau$ sparse, $\kappa=2$.}
    \label{fig:K9-sparse-kappa2}
    \includegraphics[width=0.9\linewidth]{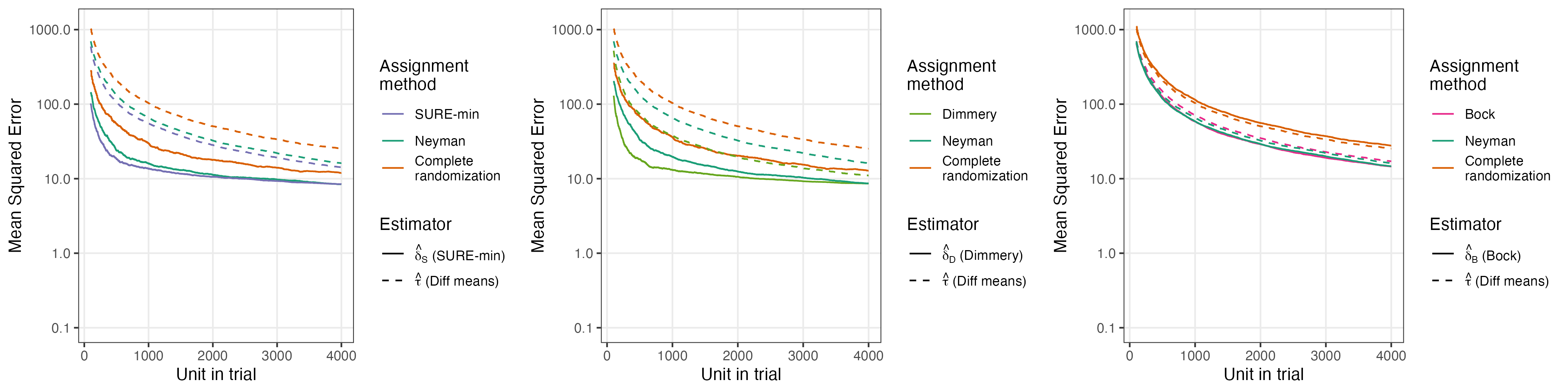}
    \caption{$K = 9$, high-variance control regime, $\bstau$ dense, $\kappa = 2$.}
    \label{fig:K9-dense-kappa2}
\end{figure}

\begin{figure}[H]
    \centering
    \includegraphics[width=0.9\linewidth]{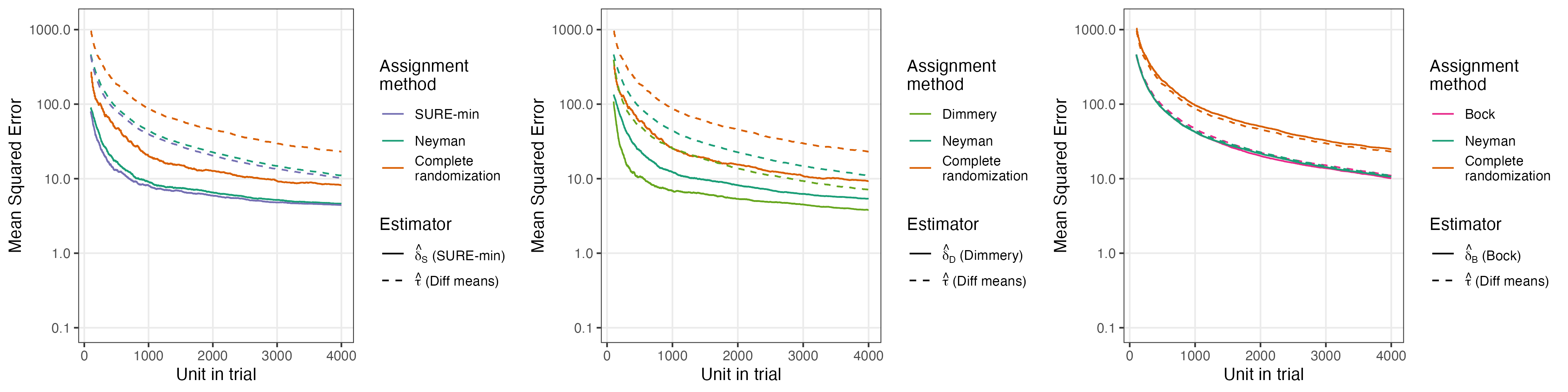}
    \caption{$K = 9$, high-variance control regime, $\bstau$ sparse, $\kappa = 4$.}
    \label{fig:K9-sparse-kappa4}
    \includegraphics[width=0.9\linewidth]{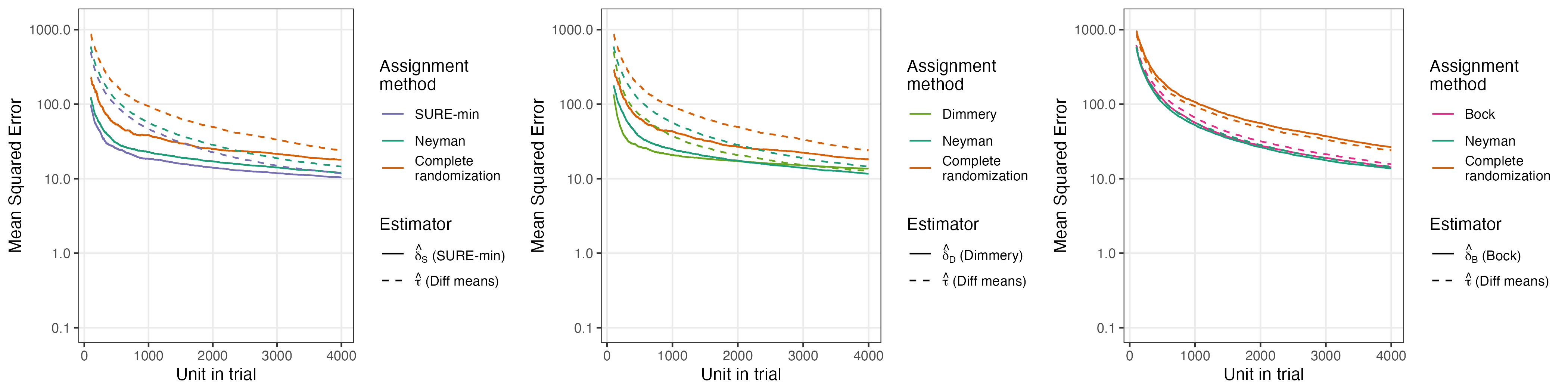}
    \caption{$K = 9$, high-variance control regime, $\bstau$ dense, $\kappa = 4$.}
    \label{fig:K9-dense-kappa4}
\end{figure}

\section{FAB Interval Computation and Additional Simulations}\label{appendix:fab}

To update our FAB intervals, we begin by considering the ``inferred" prior for Bock's estimator $\bsdeltb$ and the SURE-min estimator $\bsdelts$. Notably, both estimators shrink all arms by the same multiplicative amount. Consider the case in which 
\begin{equation}\label{eq:impliedModel}
\begin{aligned}
\bstau &\sim \mathcal{N}(\boldsymbol 0, \bspsi), \hspace{5mm} \text{ and } \\
\hbstau \mid \bstau & \sim \mathcal{N}(\bstau, \bsSig).
\end{aligned}
\end{equation}
Under this model, 
\[ \bstau \mid \hbstau \sim \mathcal{N}\!\left( \bspsi(\bspsi+\bsSig)^{-1}\hbstau,\; (\bspsi^{-1} + \bsSig^{-1})^{-1} \right). \]
Using the standard convention, Empirical Bayes estimators approximate the posterior mean, estimating unseen hyperparameters from the distribution of the data. Hence, the multiplicative structure of $\bsdeltb$ and $\bsdelts$ implies that 
\[ \bspsi(\bspsi + \bsSig)^{-1} = c \, \ident \]
for some constants $0 < c < 1$. Assuming $\bspsi$ and $\bsSig$ are full rank, this implies that $\bspsi = c/(1-c) \bsSig$ -- or, in other words, that $\bspsi = a \bsSig$ for some $a > 0$. Hence, this is the implied prior for our two estimators. 

Under this prior, we also have that 
\[ \hbstau \sim \mathcal{N} \left( \boldsymbol 0, (a + 1) \bsSig \right), \]
while our posterior mean can now be written as 
\[ \e \left( \bstau \mid \hbstau \right) = \frac{a}{a + 1} \hbstau.\]
There are several different moment-based estimators of $a$. We observe, for example, that 
\[ \frac{1}{K} \cdot \e \left( \hbstau^\top \bsSig^{-1} \hbstau \right) - 1 = a,\]
which suggests an estimator of the form
\[ \hat a = \bigg(\frac{1}{K} \cdot \left( \hbstau^\top \hbsSig^{-1} \hbstau \right) - 1\bigg)_+.  \]

Drawing on \cite{pratt1963shorter}, \cite{yu_hoff_2018} begin by considering the model $Y \sim \mathcal{N}(\theta, \sigma^2)$, with $\theta$ unknown and $\sigma^2$ known. For any choice of confidence level $0 < \alpha < 1$, they consider any interval of the form 
\[ \left( Y + \sigma z_{\alpha (1 - w(\theta))}, Y + \sigma z_{\alpha w(\theta)} \right),\]
where $z_\nu$ is the $\nu^{th}$ quantile of a standard normal distribution and $w: \mathbb{R} \to (0, 1)$ is a continuous, nondecreasing function. This interval can be shown to be a valid $\alpha$-level confidence set for $\theta$. Moreover, under the prior $\theta \sim \mathcal{N}(0, \eta^2)$, it can be shown that choosing
\[ w(\theta) = g^{-1} \left( \frac{2 \sigma \theta}{\eta^2} \right) \hspace{5mm} \text{ where } \hspace{5mm} g(w) = \Phi^{-1}(\alpha w) - \Phi^{-1}(\alpha(1 - w)) \]
yields the shortest interval in expectation with respect to the prior. In practice, this means that under the prior, one can solve for the optimal confidence interval endpoints by solving two nonlinear equations. In the multigroup setting with unknown variance, \cite{yu_hoff_2018} show that one can replace the prior by a model for heterogeneity across groups to attain shorter intervals on average.

In our case, the $j^{th}$ treatment effect is assumed to have prior variance equal to an $a$-multiple of the $j^{th}$ diagonal element of $\bsSig$. Hence, our function is instead 
\[ w(\theta) = g^{-1} \left(\frac{2 \theta }{a \sqrt{V_0/n_0 + V_j/n_j}}\right), \]
where $a$ can be estimated by $\hat a$ and the standard variance estimators can be plugged into the denominator. With these substitutions, we are able to compute the FAB interval endpoints. Additional simulation results for the $K = 12$ case are given in Table \ref{tab:k12-fab-high-hetero-high-v}. 

\begin{table}[ht]
\centering
\begin{tabular}{lllrrr}
\toprule
\makecell[l]{Signal-to-\\Noise Ratio} & $\bstau$ setting & \makecell[l]{Allocation \\Scheme} & \makecell[l]{Minimum \\ Wald Coverage} & \makecell[l]{Minimum \\ FAB Coverage} & \makecell[l]{Avg. FAB\\length reduction} \\
\midrule
$\kappa=0$ & $-$ & Balanced & 94\% & 95\% & $-$15\% \\
 & & Neyman & 94\% & 95\% & $-$15\% \\
 & & SURE-min & 94\% & 96\% & $-$15\% \\
 & & Dimmery & 96\% & 98\% & $-$16\% \\
 & & Bock & 93\% & 94\% & $-$15\% \\
\addlinespace
$\kappa=2$ & sparse & Balanced & 94\% & 95\% & $-$15\% \\
 & & Neyman & 94\% & 94\% & $-$15\% \\
 & & SURE-min & 95\% & 96\% & $-$15\% \\
 & & Dimmery & 96\% & 98\% & $-$15\% \\
 & & Bock & 93\% & 94\% & $-$15\% \\
\addlinespace
$\kappa=2$ & dense & Balanced & 93\% & 91\% & $-$14\% \\
 & & Neyman & 94\% & 93\% & $-$14\% \\
 & & SURE-min & 95\% & 94\% & $-$14\% \\
 & & Dimmery & 96\% & 96\% & $-$14\% \\
 & & Bock & 93\% & 93\% & $-$13\% \\
\addlinespace
$\kappa=4$ & sparse & Balanced & 94\% & 94\% & $-$15\% \\
 & & Neyman & 94\% & 94\% & $-$14\% \\
 & & SURE-min & 94\% & 94\% & $-$14\% \\
 & & Dimmery & 96\% & 96\% & $-$15\% \\
 & & Bock & 93\% & 94\% & $-$14\% \\
\addlinespace
$\kappa=4$ & dense & Balanced & 94\% & 93\% & $-$14\% \\
 & & Neyman & 94\% & 93\% & $-$12\% \\
 & & SURE-min & 94\% & 94\% & $-$13\% \\
 & & Dimmery & 96\% & 96\% & $-$14\% \\
 & & Bock & 93\% & 94\% & $-$12\% \\
\bottomrule
\end{tabular}
\caption{FAB interval performance for $K=12$ under high control variance. Coverage columns report the minimum across the $K$ treatment effects; length reduction is computed as the mean FAB interval length divided by the mean Wald interval length, minus one.}
\label{tab:k12-fab-high-hetero-high-v}
\end{table}

\end{document}